\documentclass[letterpaper, 10 pt, journal, romanappendices, twoside]{IEEEtran}

\usepackage{amsmath, amssymb}
\usepackage{graphicx}
\usepackage[sans]{dsfont}
\usepackage{stmaryrd}
\usepackage{mathrsfs}
\usepackage{subfigure}
\usepackage{color}
\usepackage{cite}
\usepackage{asymptote}
\usepackage{enumerate}

\usepackage{calligra}
\DeclareFontShape{T1}{calligra}{m}{n}{<->s*[2.2]callig15}{}
\DeclareMathAlphabet{\mathcalligra}{T1}{calligra}{m}{n}
\DeclareMathAlphabet{\mpzc}{OT1}{pzc}{m}{it}

\newcommand{\mrm}{\mathrm}

\newcommand{\msc}{\mathscr}

\newcommand{\emp}{\emph}

\newcommand{\tends}{\rightarrow}

\def\vect#1{\underline{#1}}

\newtheorem{thm}{Theorem}
\newtheorem{lem}[thm]{Lemma}
\newtheorem{prop}[thm]{Proposition}

\newtheorem{defn}[thm]{Definition}

\newtheorem{rem}{Remark}
\newenvironment{dis}[1][Discussion :]{\begin{trivlist}
\item[\hskip \labelsep {\itshape #1} \hspace*{1mm}]}{\end{trivlist}}

\title{Windowed Decoding of Spatially Coupled Codes}
\author{Aravind~R.~Iyengar, Paul~H.~Siegel,~\IEEEmembership{Fellow,~IEEE},\\R\"{u}diger~L.~Urbanke and Jack~K.~Wolf,~\IEEEmembership{Life~Fellow,~IEEE}
\thanks{A. R. Iyengar was with the Department of Electrical and Computer Engineering and the Center for Magnetic Recording Research, University of California, San Diego. He is now with Qualcomm Technologies Inc., Santa Clara, CA 95051 (email: ariyengar@qti.qualcomm.com). P. H. Siegel is with the Department of Electrical and Computer Engineering and the Center for Magnetic Recording Research, University of California, San Diego, La Jolla, CA 92093 USA (e-mail: psiegel@ucsd.edu). J. K. Wolf (deceased) was with the Department of Electrical and Computer Engineering and the Center for Magnetic Recording Research, University of California, San Diego, La Jolla, CA 92093 USA.}
\thanks{R. L. Urbanke is with the School for Computer and Communication Sciences, Ecole Polytechnique Federale de Lausanne, CH-1015, Switzerland (e-mail: ruediger.urbanke@epfl.ch).}
\thanks{This work was supported in part by the Center for Magnetic Recording Research, by the National Science Foundation under the Grant CCF-$0829865$ and by grant number $200021-121903$ of the Swiss National Foundation.}
\thanks{A summary of the results of this paper were presented at the IEEE International Symposium on Information Theory 2011, St. Petersburg, Russia~\cite{iye_11_isit_wd}.}
}

\markboth{Submitted to the IEEE Transactions on Information Theory, March 2012}{Iyengar \MakeLowercase{{\it et al.}}: Windowed Deocoding of Spatially Coupled Codes}

\newcommand{\twobibs}[2]{#2}
\newcommand{\figselect}[2]{#2}

\begin{document}
\maketitle
\thispagestyle{plain}

\begin{abstract}
Spatially coupled codes have been of interest recently owing to their superior performance over memoryless binary-input channels. The performance is good both asymptotically, since the belief propagation thresholds approach the Shannon limit, as well as for finite lengths, since degree-$2$ variable nodes that result in high error floors can be completely avoided. However, to realize the promised good performance, one needs large blocklengths. This in turn implies a large latency and decoding complexity. For the memoryless binary erasure channel, we consider the decoding of spatially coupled codes through a windowed decoder that aims to retain many of the attractive features of belief propagation, while trying to reduce complexity further. We characterize the performance of this scheme by defining thresholds on channel erasure rates that guarantee a target erasure rate. We give analytical lower bounds on these thresholds and show that the performance approaches that of belief propagation exponentially fast in the window size. We give numerical results including the thresholds computed using density evolution and the erasure rate curves for finite-length spatially coupled codes.
\end{abstract}

\begin{keywords}
Low-density parity-check codes, Belief propagation, Erasure channels, Spatial coupling, Windowed decoding, Iterative decoding.
\end{keywords}

\section{Introduction}
\PARstart S{parse} graph codes have been of great interest in the coding community for close to two decades, after it was shown that statistical inference techniques on graphical models representing these codes had decoding performance that surpassed that of the best known codes. One class of such codes are low-density parity-check (LDPC) codes, which although introduced by Gallager in the 60's \cite{gal_63_the_ldpc} were rediscovered in the 90's after the advent of Turbo Codes \cite{ber_93_icc_turbo} and iterative decoding. Luby et al. showed \cite{lub_01_tit_erco, lub_01_tit_irrldpc} that a decoder based on \emph{belief propagation} (BP) \cite{pea_88_bok_reason} had very good performance for these codes over the binary erasure channel (BEC). This superior performance of LDPC codes was shown by Richardson and Urbanke \cite{ric_01_tit_capldpc} to be true over a broader class of binary-input, memoryless, symmetric-output (BMS) channels. Furthermore these codes were optimized to approach capacity on many of these BMS channels \cite{amr_06_the_ldpcopt, amr_09_url_ldpcopt}.

The convolutional counterparts of LDPC block codes were first introduced by Felstrom and Zigangirov in \cite{fel_99_tit_ldpccc}. There is considerable literature on the constructions and analysis of these ensembles \cite{eng_99_ppi_ldpccc, eng_99_lnc_ldpccc, len_01_ppi_ldpccc, tan_04_tit_ldpccirc}. The BP thresholds for these ensembles were reported in \cite{sri_04_all_ldpcccconv} and shown to be close to capacity in \cite{len_10_tit_ldpcccthresh}. In \cite{len_09_ita_asympreg} the authors construct regular LDPC convolutional codes based on \emph{protographs} \cite{tho_03_jpl_proto} that have BP thresholds close to capacity. In \cite{kud_11_tit_ldpccc}, Kudekar et al. considered convolutional-like codes which they called \emph{spatially coupled codes} and showed that the BP thresholds of these codes approached the MAP thresholds of the underlying unstructured ensembles over the BEC. This observation was made for protograph-based generalized LDPC codes in \cite{len_10_isit_pggldpc}. Evidence for similar results over general BMS channels was given in \cite{kud_10_ist_bp4map}, and proven recently in \cite{kud_12_arx_scbms}. Moreover this phenomenon, termed \emph{threshold saturation}, was shown to be a more generic effect of coupling by showing an improvement in performance of systems based on other graphical models: the random $K$-SAT, $Q$-COL problems from computation theory, Curie-Weiss model from statistical mechanics \cite{has_10_itw_coupled}, and LDGM and rateless code ensembles \cite{are_11_itw_ldgmcoup}. Non-binary LDPC codes obtained through coupling have also recently been investigated \cite{uch_10_arx_nbldpccc}.

The good performance of spatially coupled codes is apparent when both the blocklength of individual codes and the coupling length becomes large. However, as either of these parameters becomes large, BP decoding becomes complex. We therefore consider a \emp{windowed decoder} that exploits the structure of the coupled codes to reduce the decoding complexity while maintaining the advantages of the BP decoder in terms of performance. An additional advantage of the windowed decoder is the reduced latency of decoding. The windowed decoding scheme studied here is the one used to evaluate the performance of protograph-based codes over erasure channels with and without memory \cite{pap_10_itw_wed, iye_10_icc_ldpccc, iye_12_tit_ldpccc}. The main result of this paper is that the windowed decoding thresholds approach the BP thresholds exponentially in the size of the window $W$. Since the BP thresholds are themselves close to the MAP thresholds for spatially coupled codes, windowed decoding thus gives us a way to achieve close to ML performance with complexity reduced further beyond that of the BP decoder.

The rest of the paper is organized as follows. Section \ref{sec_scc} gives a brief introduction to spatially coupled codes. In Section \ref{sec_wd} we discuss the windowed decoding scheme. We state here the main result of the paper which we prove in Section \ref{sec_an}. We give some numerical results in Section \ref{sec_expres} and conclude in Section \ref{sec_conc}. Much of the terminology and notation used in the paper is reminiscent of the definitions in \cite{kud_11_tit_ldpccc} and we often refer the reader to this paper.

\section{Spatially Coupled Codes} \label{sec_scc}
We describe the $(d_l, d_r, \gamma, L)$ spatially coupled ensemble that was introduced in \cite{kud_11_tit_ldpccc} in terms of its Tanner graph. There are $M$ variable nodes at each position in $[L] \triangleq\{1, 2, \cdots, L\}$. We will assume that there are $M\frac{d_l}{d_r}$ check nodes at every integer position, but only some of these interact with the variable nodes. The variable (check) nodes at position $i$ constitute the $i^\text{th}$ \emp{section} of variable (check, resp.) nodes in the code. The $L$ sections of variables are together referred to as the \emph{chain} and $L$ is called the \emph{chain length}. For each of the $d_l$ edges incident on a variable at position $i$, we first choose a section uniformly at random from the set $\{i, i + 1, \cdots, i + \gamma - 1\}$, then choose a check uniformly at random from the $M\frac{d_l}{d_r}$ checks in the chosen section, and connect the variable to this check. We refer to the parameter $\gamma$ as the \emp{coupling length}. It can be shown that this procedure amounts roughly to choosing each of the $d_r$ connections of a check node at position $i$ uniformly and independently from the set $\{i - \gamma + 1, i - \gamma + 2, \cdots, i\}$. Observe that when $\gamma = 1$ this procedure gives us $L$ copies of the $(d_l, d_r)$-regular uncoupled ensemble. Since we are interested in coupled ensembles, we will henceforth assume that $\gamma > 1$. Further, we will typically be concerned with this ensemble when $L \gg \gamma$, in which case the \emp{design rate} given by \cite{kud_11_tit_ldpccc}
\[
R(d_l, d_r, \gamma, L) = 1 - \frac{d_l}{d_r}\Big(1 + O(\frac{\gamma}{L})\Big) 
\]
is close to $1 - \frac{d_l}{d_r}$.

\subsection*{BP Performance} \label{ssec_scc_bp}
In the following we will briefly state known results that are relevant to this work. See \cite{kud_11_tit_ldpccc} for detailed analysis of the BP performance of spatially coupled codes. The BP performance of the $(d_l, d_r, \gamma, L)$ spatially coupled ensemble when $M \tends \infty$ can be evaluated using \emp{density evolution}. Denote the average erasure probability of a message from a variable node at position $i$ as $x_i$. We refer to the vector $\vect{x} = (x_1, x_2, \cdots, x_L)$ as the \emp{constellation}.

\begin{defn}[BP Forward Density Evolution]\vspace*{3mm}
Consider the BP decoding of a $(d_l, d_r, \gamma, L)$ spatially coupled code over a BEC with channel erasure rate $\epsilon$. We can write the forward density evolution (DE) equation as follows. Set the initial constellation to be $\vect{x}^{(0)} = (1, 1, \cdots, 1)$ and evaluate the constellations $\{\vect{x}^{(\ell)}\}_{\ell = 1}^{\infty}$ according to
\begin{equation} \label{eq_fdebp}
x_i^{(\ell)} = 
\begin{cases}
0, \text{if }i \notin [L]{\ }\forall{\ }\ell\text{, and otherwise }\\
\epsilon\Big( 1 - \frac{1}{\gamma}\sum_{j = 0}^{\gamma - 1}(1 - \frac{1}{\gamma}\sum_{k = 0}^{\gamma - 1}x_{i + j - k}^{(\ell - 1)})^{d_r - 1}\Big)^{d_l - 1}.
\end{cases}
\end{equation}
This is called the \emph{parallel schedule} of the BP forward density evolution.\hfill$\square$
\end{defn}

\vspace*{3mm}For ease of notation, we will write the equation
\[
x_i = \epsilon\Big( 1 - \frac{1}{\gamma}\sum_{j = 0}^{\gamma - 1}(1 - \frac{1}{\gamma}\sum_{k = 0}^{\gamma - 1}x_{i + j - k})^{d_r - 1}\Big)^{d_l - 1}
\]
as
\begin{equation} \label{eq_shortde}
x_i = \epsilon g(x_{i - \gamma + 1}, \cdots, x_{i + \gamma - 1}).
\end{equation}
It is clear that the function $g(\cdot)$ is monotonic in each of its arguments.

\begin{defn}[FP of BP Forward DE]\vspace*{3mm}
Consider the parallel schedule of the BP forward DE for the $(d_l, d_r, \gamma, L)$ spatially coupled code over a BEC with erasure rate $\epsilon$. It can be easily seen from the monotonicity of $g(\cdot)$ in Equation \eqref{eq_shortde} that the sequence of constellations $\{\vect{x}^{(\ell)}\}_{\ell = 0}^{\infty}$ are ordered as $\vect{x}^{(\ell)} \succeq \vect{x}^{(\ell + 1)}{\ }\forall{\ }\ell \geq 0$, i.e., $x_i^{(\ell)} \geq x_i^{(\ell + 1)}{\ }\forall{\ }\ell \geq 0, i \in [L]$ (the ordering is pointwise). Since the constellations are all lower bounded by the all-zero constellation $\vect{0}$, the sequence converges pointwise to a limiting constellation $\vect{x}^{(\infty)}$, called the fixed point (FP) of the forward DE.\hfill$\square$
\end{defn}

\vspace*{3mm}It is clear that the FP of forward DE $\vect{x}^{(\infty)}$ satisfies
\[
x_i^{(\infty)} =
\begin{cases}
0, &i \notin [L]\\
\epsilon g(x_{i - \gamma + 1}^{(\infty)}, \cdots, x_{i + \gamma - 1}^{(\infty)}), &i \in [L].
\end{cases}
\]

\begin{defn}[BP Threshold]\vspace*{3mm}
Consider the parallel schedule of the BP forward DE for the $(d_l, d_r, \gamma, L)$ spatially coupled code over a BEC with erasure rate $\epsilon$. The \emp{BP threshold} $\epsilon^\mrm{BP}(d_l, d_r, \gamma, L)$ is defined as the supremum of the channel erasure rates $\epsilon \in [0, 1]$ for which the FP of forward DE is the all-zero constellation, i.e., $\vect{x}^{(\infty)} =  \vect{0}$.\hfill$\square$
\end{defn}

\vspace*{3mm}Table \ref{tab_bpthr} gives the BP thresholds evaluated from BP forward DE for the $(d_l = 3, d_r = 6, \gamma, L)$ coupled ensemble for a few values of $\gamma$ and $L$ rounded to the sixth decimal place.
\begin{table}[!ht]
\centering
\begin{tabular}{cccccc}
\hline
$L\backslash\gamma$ & $2$ & $3$ & $4$ & $5$ & $6$\\
\hline
$16$ & $0.488079$ & $0.488220$ & $0.489806$ & $0.495671$ & $0.505866$ \\
$32$ & $0.488079$ & $0.488150$ & $0.488151$ & $0.488164$ & $0.488294$ \\
$64$ & $0.488078$ & $0.488145$ & $0.488148$ & $0.488149$ & $0.488150$ \\
$128$ & $0.488075$ & $0.488137$ & $0.488142$ & $0.488144$ & $0.488146$ \\
\hline
\end{tabular}
\caption{BP Thresholds $\epsilon^\mrm{BP}(d_l = 3, d_r = 6, \gamma, L)$.}
\label{tab_bpthr}
\end{table}
The MAP threshold of the underlying $(d_l, d_r)$-regular ensemble is $\epsilon^\mrm{MAP}(d_l = 3, d_r = 6) \approx 0.488151$. We see from the table that the BP thresholds for $(d_l, d_r, \gamma, L)$ spatially coupled codes are close to the MAP threshold of the $(d_l, d_r)$-regular unstructured code ensemble. Note that some of the threshold values in Table \ref{tab_bpthr} are larger than the MAP threshold of the underlying $(d_l, d_r)$-regular ensemble. This is because the rates of the spatially coupled ensembles are smaller than the $(d_l, d_r)$-regular ensemble, and depend on the values of $\gamma$ and $L$ as stated in the beginning of this section.

It was shown in \cite{kud_11_tit_ldpccc} that the BP thresholds satisfy
\begin{align}
\lim_{\gamma \tends \infty} \lim_{L \tends \infty} \epsilon^\mrm{BP}(d_l, d_r, \gamma, L) &= \lim_{\gamma \tends \infty} \lim_{L \tends \infty} \epsilon^\mrm{MAP}(d_l, d_r, \gamma, L) \notag \\
&= \epsilon^\mrm{MAP}(d_l, d_r). \notag
\end{align}
This means that the BP threshold \emp{saturates} to the MAP threshold, and we can obtain MAP performance with the reduced complexity of the BP decoder. Later when we analyze the windowed decoder, we will want to keep the coupling length $\gamma$ finite and hence will be concerned with the quantity
\begin{equation} \label{eq_bpthrlinf}
\epsilon^\mrm{BP}(d_l, d_r, \gamma) \triangleq \lim_{L \tends \infty}\epsilon^\mrm{BP}(d_l, d_r, \gamma, L)
\end{equation}
as a measure of the performance of the BP decoder. It immediately follows from \cite[Theorem 10]{kud_11_tit_ldpccc} that
\[
\epsilon^\mrm{BP}(d_l, d_r, \gamma) \le \epsilon^\mrm{MAP}(d_l, d_r).
\]

\section{Windowed Decoding} \label{sec_wd}
The \emp{windowed decoder} (WD) exploits the structure of the spatially coupled codes to break down the BP decoding scheme into a series of sub-optimal decoding steps---we trade-off the performance of the decoder for reduced complexity and decoding latency. When decoding with a window of size $W$, the WD performs BP over the subcode consisting of the first $W$ sections of the variable nodes and their neighboring check nodes and attempts to decode a subset of symbols (those in the first section) within the window. These symbols that we attempt to decode within a window are referred to as the \emp{targeted symbols}. Upon successful decoding of the targeted symbols (or when a maximum number of iterations have been performed) the window slides over one section and performs BP, attempting to decode the targeted symbols in the window in the new position.

More formally, let $\vect{x}$ be the constellation representing the average erasure probability of messages from variables in each of the sections $1$ through $L$. Initially, the window consists only of the first $W$ sections in the chain. We will refer to this as the \emph{first window configuration}, and as the window slides to the right, we will increment the window configuration. In other words, when the window has slid through $(c - 1)$ sections to the right (when it consists of sections $c, c + 1, \cdots, c + W - 1$), it is said to be in the $c^\text{th}$ window configuration. The $c^\text{th}$ \emph{window constellation}, denoted $\vect{y}_{\{c\}}$, is the average erasure probability of the variables in the $c^\text{th}$ window configuration. Thus,
\[
\vect{y}_{\{c\}} = (y_{1, \{c\}}, y_{2, \{c\}}, \cdots, y_{W, \{c\}}) = (x_{c}, x_{c + 1}, \cdots, x_{c + W - 1})
\]
for $c \in [L]$, where we assume that $x_c = 0{\ }\forall{\ }c > L$. Thus the $c^\text{th}$ window constellation, $\underline{y}_{\{c\}}$, represents the ``active'' sections within the constellation $\underline{x}$. While referring to the entire constellation after the action of the $c^\text{th}$ window, we will write $\underline{x}_{\{c\}}$. When the window configuration being considered is clear from the context, with some abuse of notation, we drop the $\{c\}$ from the notation and write $\vect{y} = (y_1, \cdots, y_W)$ to denote the window constellation.

\begin{rem}[Note on notation]\vspace*{3mm}
When we wish to emphasize the size of the window when we write the constellation, we write $\vect{y}_{\langle W\rangle} = (y_{i, \langle W\rangle}, y_{2, \langle W\rangle}, \cdots, y_{W, \langle W\rangle})$. Note that the window configuration and the window size are specified as subscripts within curly brackets $\{\cdot\}$ and angle brackets $\langle\cdot\rangle$, respectively. Finally, when the constellation after a particular number of iterations $\ell$ of DE is to be specified, we write $\vect{y}^{(\ell)} = (y_1^{(\ell)}, y_2^{(\ell)}, \cdots, y_W^{(\ell)})$, where the iteration number appears as a superscript within parentheses $(\cdot)$. Although $\vect{y}_{\{c\}, \langle W\rangle}^{(\ell)}$ would be the most general way of specifying the window constellation for the $c^\text{th}$ window configuration with a window of size $W$ after $\ell$ iterations of DE, for notational convenience we will write as few of these parameters as possible based on the relevance to the discussion.\hfill$\square$
\end{rem}

\subsection{Complexity and Latency} \label{ssec_complat}
For the BP decoder, the number of iterations required to decode all the symbols in a $(d_l, d_r, \gamma, L)$ spatially coupled code depends on the channel erasure rate $\epsilon$. Whereas when $\epsilon \in [0, \epsilon^\mathrm{BP}(d_l, d_r)]$ this required number of iterations can be fixed to a constant number, when $\epsilon \in (\epsilon^\mathrm{BP}(d_l, d_r), \epsilon^\mathrm{BP}(d_l, d_r, \gamma, L)]$ the number of iterations scales as $O(L)$ \cite{olm_10_pvt_finlenscc}. Therefore, in the waterfall region, the complexity of the BP decoder scales as $O(ML^2)$. For the WD of size $W$, if we let the number of iterations performed scale as $O(W)$, the overall complexity is of the order $O(MW^2L)$. Thus, for small window sizes $W < \sqrt{L}$, we see that the complexity of the decoder can be reduced. A larger reduction in the complexity is possible if we fix the number of iterations performed within each window.

Another advantage of using the WD is that the decoder only needs to know the symbols in the first $W$ sections of the code to be able to decode the targeted symbols. Therefore, in latency-constrained applications, the decoder can work on-the-fly, resulting in a latency which is a fraction $\frac{W}{L}$ that of the BP decoder.

\subsection{Asymptotic Performance} \label{ssec_asymperf}
The asymptotic performance of the $(d_l, d_r, \gamma, L)$ spatially coupled ensemble with WD can be analyzed using density evolution as was done for the BP decoder. We will consider the performance of the ensemble with $M \rightarrow \infty$ when the transmission happens over a BEC with channel erasure rate $\epsilon \in [0, 1]$. Further we will assume that for each window configuration, infinite rounds of message passing are performed.

\begin{defn}[WD Forward Density Evolution] \label{defn_wdfde}\vspace*{3mm}
Consider the WD of a $(d_l, d_r, \gamma, L)$ spatially coupled code over a BEC with channel erasure rate $\epsilon$ with a window of size $W$. We can write the forward DE equation as follows. Set the initial constellation $\vect{x}_{\{0\}}$ according to
\[
x_{i, \{0\}} = 
\begin{cases}
1, &i \in [L]\\
0, &i \notin [L].
\end{cases}
\]
For every window configuration $c = 1, 2, \cdots, L$, let
\[
\vect{y}_{\{c\}}^{(0)} = (x_{c, \{c - 1\}}, x_{c + 1, \{c - 1\}}, \cdots, x_{c + W - 1, \{c - 1\}})
\]
and evaluate the sequence of window constellations $\{\vect{y}_{\{c\}}^{(\ell)}\}_{\ell = 1}^{\infty}$ using the update rule
\[
y_{i, \{c\}}^{(\ell)} = \epsilon g(y_{i - \gamma + 1, \{c\}}^{(\ell - 1)}, \cdots, y_{i + \gamma - 1, \{c\}}^{(\ell - 1)}), i \in [W],
\]
where for every $\ell$, for indices $i \notin [W]$, we set $y_{i, \{c\}}^{(\ell)} = x_{c + i - 1, \{c - 1\}}$ on the right hand side. 
Then set $\vect{x}_{\{c\}}$ as
\[\hspace*{22.5mm}
x_{i, \{c\}} =
\begin{cases}
x_{i, \{c - 1\}}, &i \neq c\\
y_{1, \{c\}}^{(\infty)}, &i = c.\hspace*{23mm}\square
\end{cases}
\]
\end{defn}
\vspace*{3mm}
\begin{dis}
Note that the constellation $\vect{x}_{\{c\}}$ keeps track of the erasure probabilities of targeted symbols of all window configurations up to the $c^\text{th}$, followed by erasure probability of $1$ for the variables in sections $c + 1$ through $L$, and zeros for sections outside this range. As defined, $\vect{x}_{\{c\}}$ discards all information obtained by running the WD in its $c^\text{th}$ configuration apart from the values corresponding to the targeted symbols. In practice, it is more efficient to define
\[
x_{i, \{c\}} =
\begin{cases}
x_{i, \{c - 1\}}, &i \notin \{c, c + 1, \cdots, c + W - 1\}\\
y_{i - c + 1, \{c\}}^{(\infty)}, &\text{otherwise}.
\end{cases}
\]
In the sequel, we will stick to Definition \ref{defn_wdfde}. We do this for two reasons: first, discarding some information between two window configurations can only perform worse than retaining all the information; and second, this assumption makes the analysis simpler since we then have $\vect{y}_{\{c\}}^{(0)} = \vect{1}{\ }\forall{\ }c \in [L]$.\hfill$\square$\vspace*{3mm}
\end{dis}

Definition \ref{defn_wdfde} implicitly assumes that the limiting window constellations $\vect{y}_{\{c\}}^{(\infty)}$ exist. The following guarantees that the updates for $x_{i, \{c\}}$ are well-defined.

\begin{lem}[$c^\text{th}$ Window Configuration FP of FDE]\vspace*{3mm}
Consider the WD forward DE (FDE) of a $(d_l, d_r, \gamma, L)$ spatially coupled code over a BEC with erasure rate $\epsilon$ with a window of size $W$. Then the limiting window constellation $\vect{y}_{\{c\}}^{(\infty)}$ exists for each $c \in [L]$. We refer to this constellation as the $c^\text{th}$ window configuration FP of forward DE.
\end{lem}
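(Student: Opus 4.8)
The plan is to mirror the monotone-convergence argument already used to establish the existence of the FP of the (global) BP forward DE, now applied to the ``clamped'' system that governs a single window configuration. First I would fix a window configuration $c \in [L]$ and regard the boundary entries as frozen: for every iteration $\ell$ and every index $i \notin [W]$ we have $y_{i, \{c\}}^{(\ell)} = x_{c + i - 1, \{c - 1\}}$, a constant in $[0,1]$ determined entirely by the previous constellation $\vect{x}_{\{c - 1\}}$. With these boundary values held fixed, the update rule defines a map $F : [0,1]^W \to [0,1]^W$ whose $i$-th coordinate is $\epsilon\, g(y_{i - \gamma + 1}, \dots, y_{i + \gamma - 1})$, using the frozen substitution for out-of-window indices. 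I would first check that $F$ is a genuine self-map of $[0,1]^W$: since each argument of $g$ lies in $[0,1]$ one has $g(\cdot) \in [0,1]$, hence $\epsilon g(\cdot) \le \epsilon \le 1$, so every coordinate of $F(\vect{y})$ lies in $[0,1]$.

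The key structural property is monotonicity. From the explicit form of $g$ one verifies that $g$ is \emph{non-decreasing} in each of its arguments: increasing an argument raises the inner average $\tfrac{1}{\gamma}\sum_k x$, lowers the factor $1 - \tfrac{1}{\gamma}\sum_k x$ and its $(d_r - 1)$-th power, hence raises $1 - \tfrac{1}{\gamma}\sum_j(\cdot)^{d_r - 1}$, so the $(d_l - 1)$-th power increases; this is exactly the monotonicity noted after Equation~\eqref{eq_shortde}. Because the frozen boundary entries are identical across iterations, this makes $F$ monotone with respect to the pointwise partial order: if $\vect{y} \succeq \vect{y}'$ then $F(\vect{y}) \succeq F(\vect{y}')$.

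Next I would run the standard monotone iteration from the top of the order. By the Discussion preceding the lemma, $\vect{y}_{\{c\}}^{(0)} = \vect{1}$, which is the maximal element of $[0,1]^W$, so trivially $\vect{y}_{\{c\}}^{(1)} = F(\vect{y}_{\{c\}}^{(0)}) \preceq \vect{y}_{\{c\}}^{(0)}$. An induction using the monotonicity of $F$ then yields $\vect{y}_{\{c\}}^{(\ell + 1)} = F(\vect{y}_{\{c\}}^{(\ell)}) \preceq F(\vect{y}_{\{c\}}^{(\ell - 1)}) = \vect{y}_{\{c\}}^{(\ell)}$ for all $\ell \ge 1$, so the sequence $\{\vect{y}_{\{c\}}^{(\ell)}\}_{\ell \ge 0}$ is non-increasing coordinate-wise. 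Each coordinate sequence is monotone and bounded below by $0$, hence converges by the monotone convergence theorem for real sequences; collecting the $W$ coordinate limits gives the limiting constellation $\vect{y}_{\{c\}}^{(\infty)}$, and continuity of $g$ shows it is a fixed point of $F$. This establishes existence for every $c \in [L]$.

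I expect no serious obstacle here, since the argument is a direct adaptation of the global DE case; the only points requiring care are (i) confirming the direction of monotonicity of $g$ through its nested composition, and (ii) being explicit that the out-of-window boundary values are held constant over $\ell$, so that the pointwise comparison used in the induction is legitimate and the clamped map $F$ is genuinely monotone.
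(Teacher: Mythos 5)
Your proposal is correct and follows essentially the same route as the paper's proof: both start the iteration from $\vect{y}_{\{c\}}^{(0)} = \vect{1}$ (justified by the convention that information from previous window configurations is discarded), use monotonicity of $g(\cdot)$ to show by induction that the sequence of window constellations is pointwise non-increasing, and conclude existence of the limit from boundedness below by $\vect{0}$. Your added care in making the frozen boundary values and the self-map/monotonicity properties of the clamped update explicit is a fine elaboration of what the paper leaves implicit, but it is not a different argument.
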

\begin{IEEEproof}
As noted earlier, $\vect{y}_{\{c\}}^{(0)} = \vect{1}{\ }\forall{\ }c \in [L]$, and $\vect{y}_{\{c\}}^{(0)} = \vect{1} \succeq \vect{\epsilon} \succeq \vect{y}_{\{c\}}^{(1)}$. By induction, from the monotonicity of $g(\cdot)$, this implies that $\vect{y}_{\{c\}}^{(\ell)} \succeq \vect{y}_{\{c\}}^{(\ell + 1)}{\ }\forall{\ }\ell \geq 0$. Since these constellations are lower bounded by $\vect{0}$, the $c^\text{th}$ window configuration FP of FDE $\vect{y}_{\{c\}}^{(\infty)}$ exists for every $c \in [L]$.
\end{IEEEproof}

\vspace*{3mm}The $c^\text{th}$ window configuration FP of forward DE therefore satisfies
\begin{equation} \label{eq_cwcfpfde}
y_{i, \{c\}}^{(\infty)} =
\begin{cases}
x_{c + i - 1, \{c - 1\}}, &i \notin [W]\\
\epsilon g(y_{i - \gamma + 1, \{c\}}^{(\infty)}, \cdots, y_{i + \gamma - 1, \{c\}}^{(\infty)}), &i \in [W]
\end{cases}
\end{equation}
for every $c \in [L]$. Since the $\vect{x}_{\{0\}}$ vector has non-zero values by definition, from the continuity of the WD FDE equations, so do the vectors $\vect{x}_{\{c\}}{\ }\forall{\ }c$. Hence $\vect{0}$ cannot satisfy Equation \eqref{eq_cwcfpfde}, i.e.,  $\vect{0}$ cannot be the $c^\text{th}$ window configuration FP of forward DE. Therefore, $\vect{y}_{\{c\}}^{(\infty)} \succ \vect{0}{\ }\forall{\ }c \in [L]$. This means that WD can never reduce the erasure probability of the symbols of a spatially coupled code to zero, although it can be made arbitrarily small by using a large enough window. Therefore, an acceptable \emph{target erasure rate} $\delta$ forms a part of the description of the WD. We say that the WD is successful when $\vect{x}_{\{L\}} \preceq \vect{\delta}$.

\begin{lem}[Maximum of $\vect{x}_{\{L\}}$] \label{lem_maxxl} \vspace*{3mm}
The vector $\vect{x}_{\{L\}}$ obtained at the end of WD forward DE satisfies $x_{i - 1, \{L\}} \leq x_{i, \{L\}}{\ }\forall{\ }i \in [L - W - \gamma + 2]$. Moreover, $\exists{\ }\hat{x} \in [0, 1]$ independent of $L$ such that $x_{i, \{L\}} \leq \hat{x}{\ }\forall{\ }i$.
\end{lem}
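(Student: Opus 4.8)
The plan is to reduce the entire statement to a scalar recursion for the decoded targeted symbols. Write $u_c := y_{1, \{c\}}^{(\infty)}$ for the value of the targeted symbol of the $c^\text{th}$ window configuration and set $u_j := 0$ for $j \notin [L]$. Since Definition \ref{defn_wdfde} changes only the $c^\text{th}$ coordinate of the constellation in the $c^\text{th}$ configuration, we have $x_{i, \{L\}} = u_i$ for $i \in [L]$, so both assertions are statements about the sequence $(u_1, \ldots, u_L)$. The key structural fact I would establish first is that whenever the right boundary of a configuration is still entirely unprocessed---precisely when $c + W + \gamma - 2 \le L$, i.e. $c \le L - W - \gamma + 2$---Equation \eqref{eq_cwcfpfde} reads
\[
u_c = F\big(u_{c - \gamma + 1}, \ldots, u_{c - 1}\big),
\]
where $F$ denotes the first coordinate of the $c^\text{th}$ window configuration FP, viewed as a function of the $\gamma - 1$ left-boundary values with the right boundary clamped to $\vect{1}$. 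I would then record that $F$ is monotone non-decreasing in each argument: running two window FDEs from $\vect{y}^{(0)} = \vect{1}$ with ordered boundaries and inducting on the iteration index $\ell$, the monotonicity of $g(\cdot)$ keeps the two constellations ordered at every step, and this order survives the (monotone) passage to the limit guaranteed by the preceding lemma.

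For the first assertion I would prove by induction on $c$ that the augmented sequence $u_0 = 0 \le u_1 \le u_2 \le \cdots$ is non-decreasing throughout the bulk range $c \in [L - W - \gamma + 2]$. The base case is immediate since $u_1 = F(0, \ldots, 0) \ge 0 = u_0$. For the inductive step I would note that $u_{c - 1} = F(u_{c - \gamma}, \ldots, u_{c - 2})$ and $u_c = F(u_{c - \gamma + 1}, \ldots, u_{c - 1})$ apply the \emph{same} map $F$ to argument vectors that are a shift of one another by one position; the induction hypothesis $u_{c - \gamma} \le u_{c - \gamma + 1} \le \cdots \le u_{c - 1}$ makes the argument vector of $u_c$ dominate that of $u_{c - 1}$ coordinatewise, so monotonicity of $F$ yields $u_{c - 1} \le u_c$. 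This is exactly the claim $x_{i - 1, \{L\}} \le x_{i, \{L\}}$ for $i \in [L - W - \gamma + 2]$.

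For the second assertion the point is to exhibit a bound that does not involve $L$. Since $F$ is built from the window of fixed size $W$ together with a clamped all-ones right boundary, it is independent of $L$; I would therefore define the comparison sequence $\tilde u_0 = 0$, $\tilde u_c = F(\tilde u_{c - \gamma + 1}, \ldots, \tilde u_{c - 1})$ for all $c \ge 1$. By the same shift argument $\{\tilde u_c\}$ is non-decreasing, and because the first coordinate of any window FP equals $\epsilon\, g(\cdot) \le \epsilon$ it is bounded above by $\epsilon$; hence it converges to a limit $\hat x := \lim_{c \to \infty} \tilde u_c \in [0, \epsilon] \subseteq [0, 1]$ depending only on $(d_l, d_r, \gamma, W, \epsilon)$. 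Finally I would show $u_c \le \tilde u_c$ for every $c \in [L]$ by induction on $c$: the actual right boundary of configuration $c$ is componentwise $\preceq \vect{1}$ (all ones in the bulk, acquiring the perfectly-decoded zeros of positions beyond $L$ near the end of the chain), the actual left boundary is $\preceq$ that of $\tilde u$ by the induction hypothesis, and monotonicity of the window FP in its boundary data gives $u_c \le \tilde u_c \le \hat x$. Combining, $x_{i, \{L\}} = u_i \le \hat x$ for all $i$, with $\hat x$ independent of $L$.

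I expect the main obstacle to be the rigorous justification that the window configuration FP is a monotone function of its boundary data, together with the bookkeeping of boundary ranges---in particular making the range $[L - W - \gamma + 2]$ fall out as exactly the set of configurations whose right boundary has not yet ``seen'' the processed or out-of-chain sections. The scalar recursion $u_c = F(\cdots)$ and the observation that the comparison sequence $\tilde u_c$ is generated by the identical, $L$-free map $F$ are what make $L$-independence transparent; the trivial bound $\hat x = \epsilon$ would already settle existence, but identifying $\hat x$ as the bulk limit of $\tilde u_c$ is the version that is useful downstream.
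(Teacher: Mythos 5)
Your proof is correct and takes essentially the same route as the paper: the first assertion is established by the identical shift-and-induction argument over consecutive window configurations (using monotonicity of $g(\cdot)$ propagated through the DE iterations to the fixed point), and your comparison sequence $\tilde u_c$ generated by the $L$-free map $F$ is precisely the paper's infinite-chain constellation $\vect{x}_{\{\infty\}}$, whose monotone bounded limit defines $\hat{x}$. The only difference is one of rigor, not of approach: you spell out (via $F$ and boundary-monotonicity of the window FP) the dominance $x_{i,\{L\}} \leq x_{i,\{\infty\}}$ that the paper asserts without detailed justification.
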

\begin{proof}
By definition, $x_{i, \{L\}} = y_{1, \{i\}}^{(\infty)}{\ }\forall{\ }i \in [L]$. The claim is true for $i = 1$ since $x_{1, \{L\}} = y_{1, \{1\}}^{(\infty)} \geq 0 = x_{0, \{L\}}$. For the $i^\text{th}$ window configuration, it is clear from Definition \ref{defn_wdfde} that $\vect{y}_{\{i - 1\}}^{(0)} \preceq \vect{y}_{\{i\}}^{(0)}, i \in [L - W - \gamma + 2]$. By induction, from the monotonicity of $g(\cdot)$, it follows that $y_{1, \{i - 1\}}^{(\infty)} \leq y_{1, \{i\}}^{(\infty)}$ for $i$ in this range.

For $i > L - W - \gamma + 2$, the above claim is not valid because we defined $x_{j, \{c\}} = 0$ for $j > L$ and we cannot make use of the monotonicity of $g(\cdot)$ since some arguments (corresponding to sections up to the $L^\text{th}$ section) are increasing and others (corresponding to the sections beyond the $L^\text{th}$ section) decreasing. Nevertheless, we can still claim that $x_{i, \{L\}} \leq x_{i, \{\infty\}}{\ }\forall{\ }i \in \mathbb{N}$ where $\vect{x}_{\{\infty\}}$ is the vector of erasure probabilities obtained after WD for a spatially coupled code with an infinite chain length, i.e., $L = \infty$. For $L = \infty$, the sequence $\{x_{i, \{\infty\}}\}$ is non-decreasing and since the $x_{i, \{\infty\}}$ are probabilities, they are in the bounded, closed interval $[0, 1]$. Consequently, the limit $\lim_{i \rightarrow \infty} x_{i, \{\infty\}}$ exists in the interval $[0, 1]$, and $\lim_{i \rightarrow \infty} x_{i, \{\infty\}} = \sup_i x_{i, \{\infty\}} \triangleq \hat{x}$.\vspace*{3mm}
\end{proof}

As a consequence of Lemma \ref{lem_maxxl}, we can say that the WD is successful when $\hat{x} \leq \delta$. This definition of the success of WD is independent of the chain length $L$ and allows us to compare the performance of WD to that of the BP decoder through the thresholds defined in Equation \eqref{eq_bpthrlinf}. Note that although the upper bound for $\hat{x}$ in Lemma \ref{lem_maxxl} is a trivial bound, we will in the following give conditions when $\hat{x}$ can be made smaller than an arbitrarily chosen $\delta$, thereby characterizing the WD thresholds.

\begin{defn}[WD Thresholds] \label{defn_wthr} \vspace*{3mm}
Consider the WD of a $(d_l, d_r, \gamma, L)$ spatially coupled code over a BEC of erasure rate $\epsilon$ with a window of size $W$. The WD threshold $\epsilon^\mathrm{WD}(d_l, d_r, \gamma, W, \delta)$ is defined as the supremum of channel erasure rates $\epsilon$ for which $\hat{x} \leq \delta$.\hfill$\square$\vspace*{3mm}
\end{defn}
\begin{dis}
Since we defined the WD threshold based on $\hat{x}$, it is clear that this is independent of the chain length $L$. On the other hand, if we used $\max_{i \in [L]} x_{i, \{L\}} \leq \delta$ as the condition for success of the WD in our definition, we would obtain an $L$-dependent threshold. But $\hat{x}$ denotes the ``worst-case'' remanant erasure probability after WD, and imposing constraints on $\hat{x}$ therefore guarantees good performance for codes with any $L$.

Note that keeping $\hat{x} \leq \delta$ is sufficient to guarantee an {\it a-posteriori} erasure probability $p_e$ smaller than $\delta$ because
\[
\hspace*{14.5mm}p_e = \epsilon \Big(\frac{\hat{x}}{\epsilon}\Big)^{\frac{d_l}{d_l - 1}} = \hat{x} \Big(\frac{\hat{x}}{\epsilon}\Big)^{\frac{1}{d_l - 1}} \leq \hat{x} \leq \delta.\hspace*{14.5mm}\square
\]
\end{dis}

\vspace*{3mm}We will now state the main result in this paper and prove it in the following section.
\begin{thm}[WD Threshold Bound] \label{thm_wdthr} \vspace*{3mm}
Consider windowed decoding of the $(d_l, d_r, \gamma, L)$ spatially coupled ensemble over the binary erasure channel. Then for a target erasure rate $\delta < \delta_*$, there exists a positive integer $W_{\min}(\delta)$ such that when the window size $W \ge W_{\min}(\delta)$ the WD threshold satisfies
\begin{align}
\epsilon^\mrm{WD}(&d_l, d_r, \gamma, W, \delta) \ge \Big(1 - \frac{d_ld_r}{2}\delta^{\frac{d_l - 2}{d_l - 1}}\Big) \notag \\
&\times \Big(\epsilon^\mrm{BP}(d_l, d_r, \gamma) - e^{-\frac{1}{\mathsf{B}}(\frac{W}{\gamma - 1} - \mathsf{A}\ln\ln\frac{\mathsf{D}}{\delta} - \mathsf{C})}\Big). \label{eq_wdthrthm}
\end{align}
Here $\mathsf{A}, \mathsf{B}, \mathsf{C}, \mathsf{D}$ and $\delta_*$ are strictly positive constants that depend only on the ensemble parameters $d_l, d_r$ and $\gamma$.\hfill$\blacksquare$
\end{thm}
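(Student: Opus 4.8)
The plan is to reduce everything to the infinite-chain steady state and then sandwich the targeted-symbol erasure probability between the zero fixed point of belief propagation and an explicit barrier profile whose left-most coordinate can be pushed below $\delta$ by enlarging $W$. First, by Lemma \ref{lem_maxxl} it suffices to bound $\hat{x}$ for $L = \infty$. I would define $F(s)$ to be the left-most coordinate $y_1^{(\infty)}$ of the window fixed point \eqref{eq_cwcfpfde} when the already-decoded sections to the left of the window are all held at level $s$ and the sections to the right are held at $1$; monotonicity of $g(\cdot)$ makes $F$ non-decreasing. Since $x_{c, \{c\}} = y_{1, \{c\}}^{(\infty)}$ is produced from a left boundary that, by induction on $c$, never exceeds the previously decoded values, a single inequality $F(\delta) \le \delta$ propagates to $x_{c, \{c\}} \le \delta$ for every $c$ and hence to $\hat{x} \le \delta$. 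The whole problem thus reduces to identifying the range of $\epsilon$ for which $F(\delta) \le \delta$, i.e. for which a window carrying the boundary data $\delta$ on the left and $1$ on the right returns $y_1 \le \delta$.

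The heart of the argument is a quantitative estimate of the width of the transition region of this fixed-point profile, obtained by comparison with forward DE on the infinite chain at erasure rate $\epsilon$. Because $\epsilon < \epsilon^\mrm{BP}(d_l, d_r, \gamma)$, forward DE contracts $\vect{1}$ to $\vect{0}$, and I would extract from this convergence two complementary spatial rates and require that their sum fit inside the window. (i) In the bulk, the profile falls from values near $1$ down to a fixed small level across a front whose width is governed by the gap $\epsilon^\mrm{BP}(d_l, d_r, \gamma) - \epsilon$; linearising the coupled map about $\vect{0}$ and using monotone comparison gives geometric spatial decay at rate $\sim 1/\mathsf{B}$ per $(\gamma - 1)$ coordinates, which is the source of the term $e^{-W/(\mathsf{B}(\gamma - 1))}$ together with the fixed offset $\mathsf{C}$. (ii) Once the profile is below a small level, the local expansion $g(x, \dots, x) \sim ((d_r - 1)x)^{d_l - 1}$ with $d_l - 1 \ge 2$ makes the map super-linear, so moving one super-step to the left raises the value roughly to its $(d_l - 1)$-th power; the profile therefore decays doubly exponentially in space, and reaching level $\delta$ from this regime costs only $\sim \mathsf{A}\ln\ln(\mathsf{D}/\delta)$ further coordinates, accounting for the $\mathsf{A}\ln\ln(\mathsf{D}/\delta)$ correction inside the exponent. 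Demanding $W/(\gamma - 1)$ to exceed the sum of these two widths yields $W_{\min}(\delta)$ and the exponential gap to $\epsilon^\mrm{BP}(d_l, d_r, \gamma)$.

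To convert the spatial condition into the stated erasure-rate bound I would pass to the check-to-variable message variable $\theta = (\delta/\epsilon)^{1/(d_l - 1)}$, in which the local fixed-point relation at the targeted symbol becomes an identity up to a correction of order $\theta^{d_l - 2} = \delta^{(d_l - 2)/(d_l - 1)}$. Comparing the $\delta$-floored windowed recursion to the un-floored forward DE governing $\epsilon^\mrm{BP}(d_l, d_r, \gamma)$ then effectively replaces $\epsilon^\mrm{BP}(d_l, d_r, \gamma)$ by $\epsilon^\mrm{BP}(d_l, d_r, \gamma)$ times the factor $1 - \frac{d_l d_r}{2}\delta^{(d_l - 2)/(d_l - 1)}$, and it forces $\delta < \delta_*$ so that this factor remains positive and so that we stay in the super-linear regime where step (ii) is valid. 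Collecting the bulk contribution, the tail contribution, and this boundary correction then reproduces \eqref{eq_wdthrthm}.

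I expect the main obstacle to be step (i): turning the purely qualitative statement that $\epsilon < \epsilon^\mrm{BP}(d_l, d_r, \gamma)$ implies convergence of forward DE to $\vect{0}$ into an \emph{explicit and uniform} geometric decay of the front, with a rate constant $\mathsf{B}$ depending only on $d_l, d_r, \gamma$. This needs a careful linearisation of the coupled map at the zero fixed point, combined with a monotone comparison that controls the front width in terms of $\epsilon^\mrm{BP}(d_l, d_r, \gamma) - \epsilon$; it is precisely here that the exact exponential rate in $W$ is pinned down. A secondary difficulty is that the left boundary $\hat{x}$ is itself the unknown we are bounding, so the decay estimate must be carried through while keeping the self-consistency of $\hat{x}$ intact rather than treating the boundary level as an externally fixed constant.
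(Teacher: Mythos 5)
Your architecture parallels the paper's: the reduction to $\hat{x}$ via Lemma \ref{lem_maxxl}, your barrier argument ($F(\delta) \le \delta$ propagating down the chain by monotonicity and induction) playing the role of Lemma \ref{lem_hatx} and Proposition \ref{prop_swthr}, your boundary-perturbation step producing the prefactor $1 - \frac{d_ld_r}{2}\delta^{\frac{d_l - 2}{d_l - 1}}$ matching the computation in Appendix \ref{app_swthr}, and your step (ii) (super-linear map near $0$, doubly-exponential spatial tail costing $\mathsf{A}\ln\ln\frac{\mathsf{D}}{\delta}$ sections) being exactly Lemma \ref{lem_detail}. The gap is in your step (i), which you correctly flag as the main obstacle, but the tool you propose there cannot work.

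Linearizing the coupled map at the zero fixed point fails for two reasons. First, for $d_l \ge 3$ the derivative of $y \mapsto \epsilon(1 - (1 - y)^{d_r - 1})^{d_l - 1}$ at $y = 0$ vanishes, so the linearization at $\vect{0}$ is degenerate; the neighborhood of $0$ is precisely the super-linear regime of your step (ii), and it produces doubly-exponential, not geometric, spatial decay. Second, and more fundamentally, the constant $\mathsf{B}$ and the $\ln\frac{1}{\Delta\epsilon}$ dependence that becomes the exponential term in \eqref{eq_wdthrthm} do not originate near $0$ at all. In the only interesting regime, $\epsilon \in (\epsilon^\mrm{BP}(d_l, d_r), \epsilon^\mrm{BP}(d_l, d_r, \gamma))$, the scalar recursion $f(\epsilon, y) = \epsilon(1 - (1 - y)^{d_r - 1})^{d_l - 1}$ has two nonzero fixed points $y_u(\epsilon) < y_s(\epsilon)$ with $h(y) = f(\epsilon, y) - y > 0$ on $(y_u, y_s)$, so no contraction-at-zero or single-rate front argument can explain how the profile descends through that band---this is exactly where uncoupled DE gets stuck, and only the spatial structure of the coupled system carries the profile across. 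The paper's Proposition \ref{prop_tw} (Appendix \ref{app_tw}) handles this by sandwiching the coupled update between scalar updates, $f(\epsilon, y_{i - \gamma + 1}) \le y_i \le f(\epsilon, y_{i + \gamma - 1})$, partitioning $[\delta, 1]$ into bands delimited by $y_{\min}, y_u, y_{\max}, y_s$, bounding $h$ by lines of slopes $\mu_1, \dots, \mu_5$ in each band to obtain a per-band geometric rate, and---crucially---applying the mean value theorem to $\epsilon \mapsto y_s(\epsilon)$ and $\epsilon \mapsto y_u(\epsilon)$ together with $\epsilon^\mrm{BP}(d_l, d_r, \gamma) \le \epsilon^\mrm{MAP}(d_l, d_r)$ to show that the profile needs only $O(\ln\frac{1}{\Delta\epsilon})$ super-steps of width $\gamma - 1$ to clear the bottlenecks within distance $\sim \Delta\epsilon$ of $y_u$ and $y_s$; uniformity of the constants is then forced by restricting $\epsilon$ to $[\frac{\epsilon^\mrm{BP}(d_l, d_r, \gamma) + \epsilon^\mrm{BP}(d_l, d_r)}{2}, \epsilon^\mrm{BP}(d_l, d_r, \gamma))$. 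Without this, or an equivalent quantitative front-width estimate valid above $\epsilon^\mrm{BP}(d_l, d_r)$, the exponential approach to $\epsilon^\mrm{BP}(d_l, d_r, \gamma)$ claimed in Theorem \ref{thm_wdthr} does not follow from your outline.
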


Theorem \ref{thm_wdthr} says that the WD thresholds approach the BP threshold $\epsilon^\mathrm{BP}(d_l, d_r, \gamma)$ defined in Equation \eqref{eq_bpthrlinf} at least exponentially fast in the ratio of the size of the window $W$ to the coupling length $\gamma$ for a fixed target erasure probability $\delta < \delta_*$. Moreover, the sensitivity of the bound to changes in $\delta$ is small in the exponent in \eqref{eq_wdthrthm} owing to the $\log\log\frac{1}{\delta}$ factor, but larger in the first term in the product on the right hand side of \eqref{eq_wdthrthm} where it is roughly linear in $\delta$. However, since we intend to set $\delta$ to be very small, e.g. $10^{-15}$, the first term does not influence the bound heavily. The requirement that $W \geq W_{\min}(\delta)$ is necessary to keep the term within parentheses in the exponent non-negative. Therefore the minimum window size required, $W_{\min}(\delta)$, also depends on the constants $\mathsf{A}, \mathsf{C}$ and $\mathsf{D}$ and, in turn, on the ensemble parameters $d_l, d_r$ and $\gamma$.

The bound guaranteed by Theorem \ref{thm_wdthr} is actually fairly loose. Numerical results suggest that the minimum window size $W_{\min}(\delta)$ is actually much smaller than the bound obtained from analysis (cf. Section \ref{sec_an}). Density evolution also reveals that for a fixed window size, the WD thresholds are much closer to the BP threshold than the bound obtained from Theorem \ref{thm_wdthr}. 

We note here that the gap between analytical results and numerical experiments is mainly due to the reliance on bounding the density evolution function in Equation \eqref{eq_shortde} using the counterpart for regular unstructured LDPC ensembles, which proves to be easier to handle than the multivariate Equation \eqref{eq_shortde} (See, e.g., the bound in \ref{eq_proofi}). However, the scaling of the WD thresholds with the window size and the target erasure probability seem to be as dictated by the bound in \eqref{eq_wdthrthm}, suggesting that Theorem \ref{thm_wdthr} captures the essence of the WD algorithm.

Table \ref{tab_wdthr} gives the WD thresholds obtained through forward DE for the $(d_l = 3, d_r = 6, \gamma = 3, L)$ spatially coupled ensemble for different target erasure rates $\delta$ and different window sizes $W$. These thresholds have been rounded to the sixth decimal point.
\begin{table}[!ht]
\centering
\begin{tabular}{cccc}
\hline
$W\backslash\delta$ & $10^{-6}$ & $10^{-12}$ & $10^{-18}$ \\
\hline
$4$ &	$0.068403$ &	$0.000772$ &	$0.000008$ \\
$8$ &	$0.472992$ &	$0.390749$ &	$0.254339$ \\	
$16$ &	$0.487504$ &	$0.487504$ &	$0.487504$ \\
\hline
\end{tabular}
\caption{WD Thresholds $\epsilon^\mrm{WD}(d_l = 3, d_r = 6, \gamma = 3, W, \delta)$.}
\label{tab_wdthr}
\end{table}
A few comments are in order. As can be seen from the table, the thresholds are close to $\epsilon^\mathrm{BP}(d_l = 3, d_r = 6, \gamma = 3) \approx 0.488137$ even for window sizes that are much smaller than the $W_{\min}(\delta)$ obtained analytically, e.g., $W = 16$. Moreover, the WD thresholds are more sensitive to changes in $\delta$ for small window sizes where the bound in Theorem \ref{thm_wdthr} is not valid. It is obvious that the thresholds decrease as $\delta$ is decreased. Also note that for a fixed target erasure rate, the window size can be made large enough to make the WD thresholds close to the BP threshold.

\section{Performance Analysis} \label{sec_an}
In this section, we prove Theorem \ref{thm_wdthr} in steps. First, we analyze the performance of the first window configuration. We will characterize the first window configuration FP of forward DE. We will establish that for the variables in the first section of the window, the FP erasure probability can be made small at least double-exponentially in the size of the window. We will show that this is possible for all channel erasure rates smaller than a certain $\epsilon$, which we will call the \emph{first window threshold} $\epsilon^\mathrm{FW}(d_l, d_r, \gamma, W, \delta)$, provided the window size is larger than a certain minimum size.

Once we have this, we consider the performance of the $c^\text{th}$ window configuration for $1 < c \leq L$. In this case also, we will show that the FP erasure probability of the first section within the window is guaranteed to decay double-exponentially in the window size. As for the first window configuration, this result holds provided the window size is larger than a certain minimal size and this time the minimal size is slightly larger than the minimal size required for the first window configuration. Moreover, such a result is true for channel erasure rates smaller than a value which is itself smaller than the first window threshold, and this value will be our lower bound for the WD threshold.

\subsection{First Window Configuration} \label{ssec_fw}
From Definition \ref{defn_wdfde}, forward DE for the first window configuration amounts to the following. Set $\vect{y}_{\{1\}}^{(0)} = \vect{1}$ and evaluate the sequence of window constellations $\{\vect{y}_{\{1\}}^{(\ell)}\}_{\ell = 1}^{\infty}$ according to
\begin{equation} \label{eq_wcfde}
y_{i, \{1\}}^{(\ell)} =
\begin{cases}
0, & i \leq 0\\
\epsilon g(y_{i - \gamma + 1, \{1\}}^{(\ell - 1)}, \cdots, y_{i + \gamma - 1, \{1\}}^{(\ell - 1)}), & i \in [W]\\
1, & i > W.
\end{cases}
\end{equation}
Since $\vect{y}_{\{1\}}^{(0)}$ is non-decreasing, i.e., $y_{i, \{1\}}^{(0)} \leq y_{i + 1, \{1\}}^{(0)}{\ }\forall{\ }i$, so is the first window configuration FP, $\vect{y}_{\{1\}}^{(\infty)}$, by induction and monotonicity of $g(\cdot)$.

Fig. \ref{fig_wdfp} shows the first window configuration FP of forward DE for the $(d_l = 3, d_r = 6, \gamma = 3, L)$ ensemble with a window of size $W = 16$ for a channel erasure rate $\epsilon = 0.48812$.
\begin{figure}[!ht]
\centering
\figselect{\include{wcfpfde}}{\includegraphics{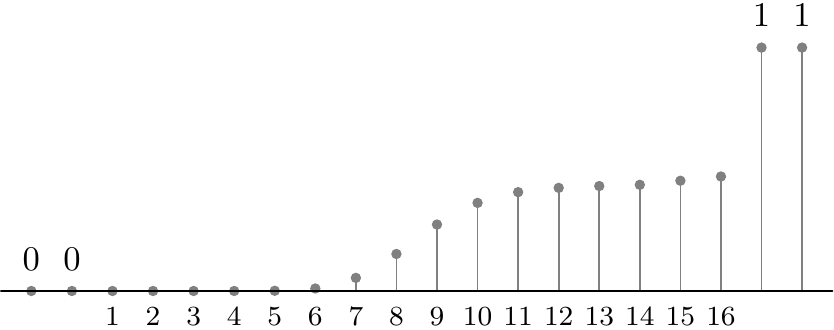}}
\caption{The first window configuration FP of forward DE for the $(d_l = 3, d_r = 6, \gamma = 3, L)$ ensemble with a window of size $W = 16$ for $\epsilon = 0.48812$. The left and the right boundaries are fixed at $0$ and $1$ respectively. The sections within the window are indexed from $1$ to $W = 16$. The first section has a FP erasure probability $y_{1, \{1\}}^{(\infty)} \approx 2\times 10^{-15}$.}
\label{fig_wdfp}
\end{figure}

The scheduling scheme used in the definition of the window configuration FPs is what is called the \emph{parallel} schedule. In general, we can consider a scheduling scheme where, in each step, a subset of the sections within the window are updated. We say that such an arbitrary scheduling scheme is admissible if every section is updated infinitely often with the correct boundary conditions, i.e., with the correct values set at the left and the right ends of the window. It is easy to see from the standard argument of nested computation trees (see, e.g., \cite{kud_11_tit_ldpccc}) that the FP is independent of the scheduling scheme.

We know that the first window configuration FP of forward DE, $\vect{y}_{\{1\}}^{(\infty)}$, is non-decreasing, i.e., $y_{i, \{1\}}^{(\infty)} \leq y_{i + 1, \{1\}}^{(\infty)}{\ }\forall{\ }i$. The following shows the ordering of the FP values of individual sections in windows of different sizes. With the understanding that we are considering only the first window configuration in this subsection, we will drop the window configuration number from the notation for window constellations thoughout this subsection for convenience.

\begin{lem}[FPs and Window Size] \label{lem_fpord} \vspace*{3mm}
Let $\vect{y}_{\langle W\rangle}$ and $\vect{y}_{\langle W + 1\rangle}$ denote the first window configuration FPs of forward DE with windows of sizes $W$ and $W + 1$ respectively for $\epsilon \in [0, 1]$. Then,
\[
y_{i, \langle W\rangle} \ge y_{i, \langle W + 1\rangle} \ge y_{i - 1, \langle W\rangle}
\]
where $y_{i, \langle W\rangle}$ denotes the FP erasure probability of the $i^\text{th}$ section in a window of size $W$.
\end{lem}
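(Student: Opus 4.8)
The plan is to exploit the monotonicity of $g(\cdot)$ together with the fact, established in the preceding existence lemma, that each first window configuration FP is the pointwise decreasing limit of the forward DE iteration started from $\vect{1}$. First I would introduce the size-$W$ update operator $T_W$ acting on a $W$-vector $\vect{v}$ by $(T_W\vect{v})_i = \epsilon g(\hat{v}_{i-\gamma+1},\ldots,\hat{v}_{i+\gamma-1})$ for $i\in[W]$, where the hat-extension clamps $\hat{v}_j = 0$ for $j\le 0$ and $\hat{v}_j = 1$ for $j > W$; by the existence lemma $\vect{y}_{\langle W\rangle} = \lim_{\ell}T_W^{\ell}\vect{1}$ with the iterates decreasing, and analogously for $T_{W+1}$ and $\vect{y}_{\langle W+1\rangle}$. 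The only structural difference between the two boundary conventions is that at position $W+1$ the size-$W$ problem clamps the value to $1$ whereas the size-$(W+1)$ problem lets it relax to $y_{W+1,\langle W+1\rangle}\le 1$; both inequalities then follow by comparing the two recursions through this single discrepancy.

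For the left inequality $y_{i,\langle W\rangle}\ge y_{i,\langle W+1\rangle}$, I would show by induction on $\ell$ that $y_{i,\langle W+1\rangle}\le (T_W^{\ell}\vect{1})_i$ for all $i\in[W]$. The base case is immediate. For the inductive step, the size-$(W+1)$ FP satisfies $y_{i,\langle W+1\rangle} = \epsilon g(y_{i-\gamma+1,\langle W+1\rangle},\ldots,y_{i+\gamma-1,\langle W+1\rangle})$ for every $i\in[W]\subseteq[W+1]$. Comparing the arguments of $g(\cdot)$ position by position, the induction hypothesis dominates on $[W]$, the two conventions agree (both give $0$) for indices $\le 0$, and for indices $>W$ the size-$W$ convention supplies $1\ge y_{j,\langle W+1\rangle}$ — which is precisely where the relaxation $y_{W+1,\langle W+1\rangle}\le 1$ enters. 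Monotonicity of $g(\cdot)$ then yields $(T_W^{\ell+1}\vect{1})_i \ge y_{i,\langle W+1\rangle}$, and letting $\ell\to\infty$ gives the claim.

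For the right inequality $y_{i,\langle W+1\rangle}\ge y_{i-1,\langle W\rangle}$, I would introduce the rightward shift $w_i \triangleq y_{i-1,\langle W\rangle}$ for $i\in[W+1]$ (so $w_1 = 0$ and $w_{W+1} = y_{W,\langle W\rangle}$) and prove $w_i\le (T_{W+1}^{\ell}\vect{1})_i$ by induction. The essential bookkeeping step is to verify that the window-$(W+1)$ hat-extension of $\vect{w}$ coincides with the shift of the window-$W$ extension of $\vect{y}_{\langle W\rangle}$: both assign $0$ for indices $\le 0$, the genuine FP value at position $W+1$, and $1$ beyond. Consequently the defining relation for $\vect{y}_{\langle W\rangle}$, after shifting the index by one, reads $w_i = \epsilon g(w_{i-\gamma+1},\ldots,w_{i+\gamma-1})$ for every $i$ with $i-1\in[W]$, i.e.\ for $i\ge 2$. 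The inductive step at these positions is the same monotonicity comparison as above, while the single remaining position $i=1$ is trivial because $w_1 = 0$. Passing to the limit gives $y_{i-1,\langle W\rangle} = w_i \le y_{i,\langle W+1\rangle}$.

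The conceptual content is carried entirely by the monotonicity of $g(\cdot)$ and the convergence of DE from $\vect{1}$, so I expect the main obstacle to be purely the boundary-and-index bookkeeping. In particular, the delicate point is the second part: checking that the shifted size-$W$ fixed point genuinely satisfies the size-$(W+1)$ recursion at all interior positions while being dominated at the boundaries, so that the induction closes cleanly and the two extension conventions can be identified.
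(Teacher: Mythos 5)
Your proof is correct, but it follows a genuinely different route from the paper's. The paper argues via \emph{admissible schedules}: for the first inequality it initializes the size-$(W+1)$ DE at $(\vect{y}_{\langle W\rangle},1)$, observes that the iterates are pointwise non-increasing, and then invokes the fact that every admissible schedule converges to the \emph{same} FP (the nested computation tree argument cited from \cite{kud_11_tit_ldpccc}) to identify the decreasing limit with $\vect{y}_{\langle W+1\rangle}$; the second inequality is proved symmetrically from the shifted initialization $y_{i,\langle W\rangle}^{(0)} = y_{i+1,\langle W+1\rangle}$. That identification step---that the monotone limit is exactly the target FP rather than some other fixed point lying below the initialization---is the one nontrivial external ingredient of the paper's proof, and it is precisely what your argument avoids: you never re-initialize either iteration, but instead compare the standard parallel iterates $T_W^{\ell}\vect{1}$ (resp.\ $T_{W+1}^{\ell}\vect{1}$) against the restricted (resp.\ shifted) fixed point of the other problem, using only the monotonicity of $g(\cdot)$, the FP equations, and the monotone convergence from $\vect{1}$ supplied by the existence lemma. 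Your boundary bookkeeping is also sound: in the first part the only discrepancy between the two clamping conventions is the relaxation $1 \ge y_{W+1,\langle W+1\rangle}$ at position $W+1$, and in the second part the two extensions coincide exactly because $w_1 = y_{0,\langle W\rangle} = 0$ matches the left clamp, so the shifted FP relation holds at every $i \ge 2$ while $i = 1$ is trivial. What the paper's formulation buys is the scheduling language that it reuses later (e.g., the dominating-FP constructions in the proof of Proposition \ref{prop_tw}); what yours buys is self-containedness, since it needs neither admissibility nor schedule-independence of the FP.
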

\begin{proof}
Consider the following schedule. Set $\vect{y}_{\langle W + 1\rangle}^{(0)} = (\vect{y}_{\langle W\rangle}, 1)$ and evaluate the sequence of window constellations $\{\vect{y}_{\langle W + 1\rangle}^{(\ell)}\}$ according to Equation \eqref{eq_wcfde}. Clearly, we have
\[
\vect{y}_{\langle W + 1\rangle}^{(1)} \preceq (\vect{y}_{\langle W\rangle}, \epsilon) \preceq (\vect{y}_{\langle W\rangle}, 1) = \vect{y}_{\langle W + 1\rangle}^{(0)}
\]
so that the sequence $\{\vect{y}_{\langle W + 1\rangle}^{(\ell)}\}$ is pointwise non-increasing by induction. We claim that this schedule is admissible. This is true because the DE updates are first performed infinitely many times over the first $W$ sections to obtain $\vect{y}_{\langle W + 1\rangle}^{(0)}$, and then over all the $W + 1$ sections infinitely many times again. Therefore the updates are performed over all sections infinitely often with the correct boundary conditions. The limiting FP must hence be exactly $\vect{y}_{\langle W + 1\rangle}$ and the first inequality in the statement of the lemma holds. Intuitively, this is true because in going from $W$ to $W + 1$ and checking the $i^\text{th}$ section, we have moved further away from the right end of the window (where $y_i = 1$) while remaining at the same distance from the left end (where $y_i = 0$).

To prove the second inequality, consider the following schedule. Set $\vect{y}_{i, \langle W\rangle}^{(0)} = \vect{y}_{i + 1, \langle W + 1\rangle}, i = 1, \cdots, W$ and evaluate the sequence of constellations $\{\vect{y}_{\langle W\rangle}^{(\ell)}\}$ according to Equation \eqref{eq_wcfde}. Since $\vect{y}_{0, \langle W\rangle}^{(0)} = 0 \leq \vect{y}_{1, \langle W + 1\rangle}$, we must have $\vect{y}_{\langle W\rangle}^{(1)} \preceq \vect{y}_{\langle W\rangle}^{(0)}$ and by induction the sequence of constellations thus obtained is also pointwise non-increasing. Again we claim that the above mentioned schedule is admissible. This is true because we first update all $W$ sections within the window and also the zeroth section infinitely often, and then set the boundary condition that the zeroth section also has all variables completely known. In all, every section within the window gets updated infinitely often with the correct boundary conditions. The limiting FP must hence be exactly $\vect{y}_{\langle W\rangle}$ and the second inequality claimed in the statement of the lemma follows. As in the previous case, this is intuitively true because in going from the $(i + 1)^\text{th}$ section with window size $W + 1$ to the $i^\text{th}$ section with window size $W$, we have moved closer to the left end of the window while maintaining the distance from the right end. \vspace*{3mm}
\end{proof}

We now give some bounds on the FP erasure probabilities of individual sections within a window.

\begin{lem} [Bounds on FP] \label{lem_boundsfp} \vspace*{3mm}
Consider the WD of the $(d_l, d_r, \gamma, L)$ ensemble with a window of size $W$ over a channel with erasure rate $\epsilon$ and $d_l \geq 3$. The first window configuration FP $\vect{y}$ satisfies
\begin{align}
y_i &\ge \Big( \epsilon (\frac{\gamma - 1}{2\gamma})^{d_l - 1} \Big) ^ {\frac{(d_l - 1) ^ j - 1}{d_l - 2}}y_{i + j}^{(d_l - 1)^j} \notag \\
y_i &\le \epsilon \Big(1 - \alpha_k(1 - y_{i + k})^{d_r - 1} \Big)^{d_l - 1} \notag
\end{align}
for $i \in [1, W], j \in [0, W + 1 - i], k \in [0, \gamma - 1]$, where $\alpha_k = (1 - \frac{(\gamma - k - 1)(\gamma - k)}{2\gamma^2})^{d_r - 1}$. \hfill$\blacksquare$
\end{lem}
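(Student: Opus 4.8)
The plan is to work directly from the fixed-point relation $y_i = \epsilon g(y_{i-\gamma+1}, \ldots, y_{i+\gamma-1})$, exploiting two facts already available: the explicit product form of $g$ and the monotonicity $y_i \le y_{i+1}$ of the first window configuration FP established in this subsection. Writing $t_a = 1 - \frac{1}{\gamma}\sum_{b=0}^{\gamma-1} y_{i+a-b}$ for the check-node factors, so that $g = \left(1 - \frac{1}{\gamma}\sum_{a=0}^{\gamma-1}t_a^{d_r-1}\right)^{d_l-1}$, both inequalities reduce to estimating the inner quantity $\frac{1}{\gamma}\sum_a t_a^{d_r-1}$ --- from above for the lower bound on $y_i$, and from below for the upper bound. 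The boundary conventions $y_j = 0$ for $j \le 0$ and $y_j = 1$ for $j > W$ are consistent with monotonicity, so every estimate extends to the window edges.

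For the lower bound I would first establish the single-step estimate $y_i \ge \epsilon\left(\frac{\gamma-1}{2\gamma}\right)^{d_l-1}y_{i+1}^{d_l-1}$. Since each $t_a \in [0,1]$ and $d_r - 1 \ge 1$, we have $t_a^{d_r-1} \le t_a$, whence $1 - \frac{1}{\gamma}\sum_a t_a^{d_r-1} \ge 1 - \frac{1}{\gamma}\sum_a t_a = \frac{1}{\gamma^2}\sum_{a,b} y_{i+a-b}$. Keeping only the pairs with $a > b$ (for which $i+a-b \ge i+1$ and hence $y_{i+a-b} \ge y_{i+1}$) and counting them as $\binom{\gamma}{2}$ gives $\frac{1}{\gamma^2}\sum_{a,b}y_{i+a-b} \ge \frac{\gamma-1}{2\gamma}y_{i+1}$; raising through the power $d_l-1$ and multiplying by $\epsilon$ yields the single-step bound. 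Iterating it $j$ times produces the constant $C = \epsilon\left(\frac{\gamma-1}{2\gamma}\right)^{d_l-1}$ raised to the geometric sum $\sum_{m=0}^{j-1}(d_l-1)^m = \frac{(d_l-1)^j-1}{d_l-2}$, and $y_{i+j}$ raised to $(d_l-1)^j$, which is exactly the claimed inequality. The hypothesis $d_l \ge 3$ is what makes $d_l - 2 > 0$ so the exponent is well-defined, and the range $j \le W+1-i$ is precisely what keeps every intermediate index inside $[1,W]$ where the FP relation holds.

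For the upper bound I would, conversely, lower bound $\frac{1}{\gamma}\sum_a t_a^{d_r-1}$. Convexity of $x \mapsto x^{d_r-1}$ (Jensen) gives $\frac{1}{\gamma}\sum_a t_a^{d_r-1} \ge \left(\frac{1}{\gamma}\sum_a t_a\right)^{d_r-1}$, so it suffices to show $\frac{1}{\gamma}\sum_a t_a \ge (1-\beta_k)(1-y_{i+k})$ with $\beta_k = \frac{(\gamma-k-1)(\gamma-k)}{2\gamma^2}$, since $\alpha_k = (1-\beta_k)^{d_r-1}$. Collecting the double sum by the difference $d = a-b$, whose multiplicity is $\gamma-|d|$, I rewrite $\frac{1}{\gamma}\sum_a t_a = 1 - \frac{1}{\gamma^2}\sum_{d=-(\gamma-1)}^{\gamma-1}(\gamma-|d|)y_{i+d}$ and split the weighted sum at $d=k$: for $d \le k$ monotonicity gives $y_{i+d}\le y_{i+k}$, and for $d > k$ I use $y_{i+d} \le 1$. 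The total weight is $\gamma^2$ and the tail weight for $d>k$ is $\sum_{e=1}^{\gamma-k-1}e = \frac{(\gamma-k-1)(\gamma-k)}{2} = \gamma^2\beta_k$, so the sum is at most $\gamma^2 y_{i+k}(1-\beta_k)+\gamma^2\beta_k$, giving the required bound. Raising through the powers $d_r-1$ and then $d_l-1$ and multiplying by $\epsilon$ delivers the stated upper bound.

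The main obstacle is bookkeeping rather than conceptual: getting the two combinatorial counts right --- the $\binom{\gamma}{2}$ lower count and the triangular weights $\gamma - |d|$ with tail sum $\frac{(\gamma-k-1)(\gamma-k)}{2}$ --- and verifying at each power-raising step that the bases lie in $[0,1]$ and the exponents $d_r-1,\,d_l-1$ are at least one, so that monotonicity of $x \mapsto x^p$ is applied in the correct direction. The one genuinely substantive idea beyond the counting is the use of $t_a^{d_r-1}\le t_a$ for the lower bound and Jensen's inequality for the upper bound, which are exactly what let the multivariate $g$ be replaced by the clean single-argument expressions appearing in the statement.
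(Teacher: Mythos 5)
Your proof is correct and follows essentially the same route as the paper's: both arguments reduce the multivariate $g$ to bounds in terms of the average $\bar{y}_i = \frac{1}{\gamma^2}\sum_{a,b} y_{i+a-b}$, invoke monotonicity of the first window configuration FP with the identical combinatorial counts (the $\binom{\gamma}{2}$ pairs for the lower bound and the triangular tail weight $\frac{(\gamma-k-1)(\gamma-k)}{2}$ for the upper bound), and iterate the single-step lower bound to get the geometric-sum exponent. The only difference is presentational: where you re-derive the two averaging bounds on $g$ inline (via $t_a^{d_r-1}\le t_a$ and Jensen's inequality), the paper imports them directly as Lemma 24(iii) and 24(i) of the Kudekar--Richardson--Urbanke paper, so your version is merely more self-contained.
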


\vspace*{3mm}We relegate the proof to Appendix \ref{app_boundsfp}. The following shows that once the FP erasure probability of a section within the window is smaller than a certain value, it decays very quickly as we move further to the left in the window.

\begin{lem}[Doubly-Exponential Tail of the FP] \label{lem_detail} \vspace*{3mm}
Consider WD of the $(d_l, d_r, \gamma, L)$ ensemble with a window of size $W$ over a channel with erasure rate $\epsilon \in (0, 1)$. Let $d_l \ge 3$ and let $\vect{y}$ be the first window configuration FP of forward DE. If there exists an $i \in [W]$ such that $y_i < \delta_0 \triangleq \Big((d_r - 1)^{\frac{d_l - 1}{d_l - 2}}\Big)^{-1}$, then
\[
y_{i - j(\gamma - 1)} \le \Psi e^{-\psi(d_l - 1)^j}
\]
where $\Psi = \delta_0\epsilon^{\frac{-1}{d_l - 2}}$ and $\psi = \ln (\frac{\Psi}{\delta_0}) = \frac{1}{d_l - 2}\ln\frac{1}{\epsilon} > 0$.
\end{lem}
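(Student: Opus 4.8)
The plan is to turn the upper (second) bound of Lemma~\ref{lem_boundsfp} into a one-step recursion that advances the index by $\gamma-1$ to the left, and then to iterate it. First I would specialize that bound to $k=\gamma-1$. Since $\alpha_{\gamma-1}=\bigl(1-\tfrac{0\cdot 1}{2\gamma^2}\bigr)^{d_r-1}=1$, the bound reads $y_{i'}\le \epsilon\bigl(1-(1-y_{i'+\gamma-1})^{d_r-1}\bigr)^{d_l-1}$ for every $i'\in[1,W]$. Combining this with the elementary inequality $1-(1-y)^{d_r-1}\le (d_r-1)y$ for $y\in[0,1]$ (which holds because $y\mapsto (d_r-1)y-1+(1-y)^{d_r-1}$ is nondecreasing and vanishes at $0$) yields the self-similar recursion
\[
y_{i'}\le \epsilon (d_r-1)^{d_l-1}\, y_{i'+\gamma-1}^{\,d_l-1}.
\]
Writing $z_j\triangleq y_{i-j(\gamma-1)}$ and taking $i'=i-(j+1)(\gamma-1)$ so that $i'+\gamma-1=i-j(\gamma-1)$, this becomes $z_{j+1}\le \epsilon(d_r-1)^{d_l-1}z_j^{\,d_l-1}$. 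Whenever the index $i-(j+1)(\gamma-1)$ falls at or below $0$, the corresponding $y$ equals $0$ by the boundary condition in \eqref{eq_wcfde}, so the claimed bound holds trivially there and I only need to track indices inside the window.

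Next I would establish $z_j\le \Psi e^{-\psi(d_l-1)^j}$ by induction on $j$. For the base case $j=0$ we have $(d_l-1)^0=1$ and, by the definition $\psi=\ln(\Psi/\delta_0)$, $\Psi e^{-\psi}=\Psi\cdot(\delta_0/\Psi)=\delta_0$, so the hypothesis $z_0=y_i<\delta_0$ is exactly the required base case. For the inductive step, substituting the induction hypothesis into the recursion gives
\[
z_{j+1}\le \epsilon(d_r-1)^{d_l-1}\Psi^{d_l-1}e^{-\psi(d_l-1)^{j+1}},
\]
so it suffices to verify $\epsilon(d_r-1)^{d_l-1}\Psi^{d_l-2}\le 1$. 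This is where the constants do the work: with $\Psi=\delta_0\epsilon^{-1/(d_l-2)}$ one has $\Psi^{d_l-2}=\delta_0^{d_l-2}\epsilon^{-1}$, and with $\delta_0=(d_r-1)^{-(d_l-1)/(d_l-2)}$ one has $\delta_0^{d_l-2}=(d_r-1)^{-(d_l-1)}$; hence $\epsilon(d_r-1)^{d_l-1}\Psi^{d_l-2}=(d_r-1)^{d_l-1}\delta_0^{d_l-2}=1$, and the step closes with equality.

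I do not expect a serious obstacle: once the recursion $z_{j+1}\le \epsilon(d_r-1)^{d_l-1}z_j^{\,d_l-1}$ is in hand, the lemma is a short induction. The two points that require care are (i) choosing $k=\gamma-1$ in Lemma~\ref{lem_boundsfp}, which is precisely the choice making $\alpha_k=1$ and producing the $(\gamma-1)$-step recursion with no residual multiplicative loss, and (ii) checking the identity $\epsilon(d_r-1)^{d_l-1}\Psi^{d_l-2}=1$, which is exactly how $\delta_0$, $\Psi$ and $\psi$ were reverse-engineered so that the double-exponential ansatz is a fixed point of the recursion. The threshold $\delta_0$ serves only to anchor the base case; the inductive step holds for any value of $z_j$, so no smallness of the intermediate iterates needs to be assumed separately.
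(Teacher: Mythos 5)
Your proposal is correct and follows essentially the same route as the paper: the paper derives the identical one-step recursion $y_{i-(\gamma-1)} \le \epsilon\big((d_r-1)y_i\big)^{d_l-1}$ directly from the non-decreasing property of the FP and monotonicity of $g(\cdot)$ (which is precisely Lemma~\ref{lem_boundsfp} at $k=\gamma-1$, where $\alpha_{\gamma-1}=1$), and then unrolls it explicitly, whereas you close the same recursion by induction on the ansatz $\Psi e^{-\psi(d_l-1)^j}$ via the fixed-point identity $\epsilon(d_r-1)^{d_l-1}\Psi^{d_l-2}=1$. The two arguments are the same algebra organized differently, and your handling of the base case and the boundary indices is sound.
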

\begin{proof}
Since the FP is non-decreasing, we have
\begin{align}
y_{i - (\gamma - 1)} &= \epsilon g(y_{i - 2(\gamma - 1)}, \cdots, y_i) \notag \\
&\le \epsilon g(y_i, y_i, \cdots, y_i) \notag \\
&= \epsilon(1 - (1 - y_i)^{d_r - 1})^{d_l - 1} \label{eq_dldreq} \\
&\le \epsilon((d_r - 1)y_i)^{d_l - 1} \notag
\end{align}
which can be applied recursively to obtain
\begin{align}
y_{i - j(\gamma - 1)} &\le \epsilon^{\frac{(d_l - 1)^j - 1}{d_l - 2}}(d_r - 1)^{\frac{d_l - 1}{d_l - 2}((d_l - 1)^j - 1)} y_i^{(d_l - 1)^j} \notag \\
&< \epsilon^{\frac{(d_l - 1)^j - 1}{d_l - 2}}(d_r - 1)^{\frac{d_l - 1}{d_l - 2}((d_l - 1)^j - 1)} \delta_0^{(d_l - 1)^j} \notag \\
&\triangleq \Psi e^{-\psi(d_l - 1)^j} \label{eq_detail}
\end{align}
where $\Psi$ and $\psi$ are as defined in the statement. It is worthwhile to note that $\delta_0$ is a lower bound on the \emp{breakout value} for the $(d_l, d_r)$-regular ensemble \cite{len_05_tit_brkout}. The emergence of the breakout value in this context is not entirely unexpected since it is known that for the $(d_l, d_r)$-regular ensemble, the erasure probability decays double-exponentially in the number of iterations below the breakout value, and in case of spatially coupled ensembles, the counterpart for the number of iterations is the number of sections (cf. Equation \eqref{eq_dldreq}).
\end{proof}

We now show that the FP erasure probability of a message from a variable node in the first section, $y_1$, can be made small by increasing the window size $W$ for any $\epsilon < \epsilon^\mrm{BP}(d_l, d_r, \gamma)$. Assuming that the window size is ``large enough,'' we will count the number of sections, starting from the right, that have a FP erasure probability larger than a small $\delta$ for a channel erasure rate $\epsilon = \epsilon^\mrm{BP}(d_l, d_r, \gamma) - \Delta\epsilon$.
\begin{defn}[Transition Width] \label{defn_tranwid} \vspace*{3mm}
Consider WD of a $(d_l, d_r, \gamma, L)$ spatially coupled code over a BEC of erasure rate $\epsilon$. Let $\vect{y}$ be the $1^\text{st}$ window configuration FP of forward DE. Then we define the transition width $\tau(\epsilon, \delta)$ of $\vect{y}$ as
\[
\hspace*{17mm}\tau(\epsilon, \delta) = |\{i \in [W] : \delta < y_i \leq 1\}|.\hspace*{18mm}\square
\]
\end{defn}

\vspace*{3mm}Note that from the definition of the transition width, it depends on the window size $W$. We first upper bound $\tau(\epsilon, \delta) \leq \hat{\tau}(\epsilon, \delta)$ such that the upper bound is independent of the window size $W$, and then claim from Lemma \ref{lem_fpord} that by employing a window whose size is larger than $\hat{\tau}(\epsilon, \delta)$, we can guarantee $y_1 \le \delta$.

\begin{defn}[First Window Threshold] \label{defn_fwthr} \vspace*{3mm}
Consider WD of the $(d_l, d_r, \gamma, L)$ spatially coupled ensemble with a window of size $W$ over a BEC with eraure rate $\epsilon$. The first window threshold $\epsilon^\mathrm{FW}(d_l, d_r, \gamma, W, \delta)$ is defined as the supremum of channel erasure rates for which the first window configuration FP of forward DE $\vect{y}$ satisfies $y_1 \leq \delta$.\hfill$\square$
\end{defn}
\vspace*{3mm}From Definitions \ref{defn_tranwid} and \ref{defn_fwthr}, we can see that by ensuring that $W > \hat{\tau}(\epsilon, \delta)$, we can bound $\epsilon^\mrm{FW}(d_l, d_r, \gamma, W, \delta) \ge \epsilon$.

\begin{prop}[Maximum Transition Width] \label{prop_tw} \vspace*{3mm}
Consider the first window configuration FP of forward DE $\vect{y}$ for the $(d_l, d_r, \gamma, L)$ spatially coupled ensemble with a window of size $W < L$ for $\epsilon \in [\frac{\epsilon^\mrm{BP}(d_l, d_r, \gamma) + \epsilon^\mrm{BP}(d_l, d_r)}{2}, \epsilon^\mrm{BP}(d_l, d_r, \gamma)) = \msc{E}$. Then,
\[
\tau(\epsilon, \delta) \le (\gamma - 1)\Big(\mathsf{A}\ln\ln\frac{\mathsf{D}}{\delta} + \mathsf{B}\ln\frac{1}{\Delta\epsilon} + \hat{\mathsf{C}} \Big) \triangleq \hat{\tau}(\epsilon, \delta)
\]
provided $\delta \le \delta_0$. Here $\Delta\epsilon = \epsilon^\mrm{BP}(d_l, d_r, \gamma) - \epsilon$, and $\mathsf{A}, \mathsf{B}, \hat{\mathsf{C}}, \mathsf{D}$ and $\delta_0$ are strictly positive constants that depend only on the ensemble parameters $d_l, d_r$ and $\gamma$.\hfill$\blacksquare$\vspace*{3mm}
\end{prop}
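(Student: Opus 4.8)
The plan is to exploit the monotonicity of the first window configuration FP established above — namely that \(\vect{y}\) is non-decreasing — to reduce the computation of \(\tau(\epsilon,\delta)\) to locating threshold indices. Since \(y_i \le y_{i+1}\) for all \(i\), the set \(\{i : \delta < y_i \le 1\}\) is an interval of sections adjacent to the right boundary, so \(\tau(\epsilon,\delta)\) is simply the number of sections with \(y_i > \delta\). I would split this count into a \emph{tail} part \(\tau_{\mrm{tail}} = |\{i : \delta < y_i \le \delta_0\}|\) and a \emph{bulk} part \(\tau_{\mrm{bulk}} = |\{i : \delta_0 < y_i \le 1\}|\), using the breakout level \(\delta_0\) of Lemma~\ref{lem_detail} as the divider, and bound each separately so that \(\hat\tau = \tau_{\mrm{tail}} + \tau_{\mrm{bulk}}\) is manifestly independent of \(W\). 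The hypothesis \(\delta \le \delta_0\) is exactly what makes this split meaningful, as it places the level \(\delta\) inside the doubly-exponential regime of Lemma~\ref{lem_detail}.

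The tail part is controlled directly by Lemma~\ref{lem_detail}. Taking the largest index with \(y_i < \delta_0\) and applying the estimate \(y_{i - j(\gamma-1)} \le \Psi e^{-\psi(d_l-1)^j}\), one sees that moving left in blocks of \(\gamma-1\) sections drives the value below \(\delta\) after at most \(j^\ast = O\!\left(\log_{d_l-1}\tfrac{1}{\psi}\ln\tfrac{\Psi}{\delta}\right)\) blocks, so \(\tau_{\mrm{tail}} \le (\gamma-1)\,j^\ast\), which has the form \((\gamma-1)\big(\mathsf{A}\ln\ln\tfrac{\mathsf{D}}{\delta} + \mathrm{const}\big)\) and supplies the \(\ln\ln\tfrac{\mathsf D}{\delta}\) term. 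Because \(\epsilon\) is confined to the compact interval \(\msc{E}\) bounded away from \(0\), the quantity \(\psi = \tfrac{1}{d_l-2}\ln\tfrac1\epsilon\) is bounded above and below by ensemble constants, so \(\mathsf A, \mathsf D\) and the additive constant may be chosen uniformly over \(\msc{E}\).

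The bulk part is the main obstacle. Here each \(y_i\) lies in \((\delta_0, 1]\), a regime in which the scalar upper bound of Lemma~\ref{lem_boundsfp} taken alone (with \(k=\gamma-1\), giving \(y_i \le \epsilon\big(1-(1-y_{i+\gamma-1})^{d_r-1}\big)^{d_l-1}\)) is insufficient: at the erasure rates in \(\msc{E}\), which exceed the uncoupled threshold \(\epsilon^\mrm{BP}(d_l,d_r)\), the associated single-system map has a nonzero stable fixed point above \(\delta_0\), so its iterates never descend to \(\delta_0\). The coupling must therefore be exploited, and my plan is to use the family of bounds in Lemma~\ref{lem_boundsfp} over \(k\in[0,\gamma-1]\) — whose effective channel factors \(\alpha_k<1\) encode the coupling gain — to build an effective scalar recursion on the subsampled profile \(\{y_{i-m(\gamma-1)}\}\) that, for every \(\epsilon < \epsilon^\mrm{BP}(d_l,d_r,\gamma)\), does contract past \(\delta_0\). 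Writing \(\epsilon = \epsilon^\mrm{BP}(d_l,d_r,\gamma)-\Delta\epsilon\), the descent through the bulk is fast except while the profile lingers near the repelling fixed point of this effective recursion; the profile can approach that fixed point only within an \(O(\Delta\epsilon)\)-neighborhood, and escaping a neighborhood of this size of a hyperbolic repeller takes \(O\!\big(\ln\tfrac1{\Delta\epsilon}\big)\) blocks. This yields \(\tau_{\mrm{bulk}} \le (\gamma-1)\big(\mathsf{B}\ln\tfrac1{\Delta\epsilon}+\mathrm{const}\big)\) and is the source of the \(\ln\tfrac1{\Delta\epsilon}\) term. The genuinely delicate step — where I expect most of the work — is making the effective recursion and its contraction rate explicit and showing the profile cannot sit closer than \(O(\Delta\epsilon)\) to the critical fixed point; this is exactly where the definition of \(\epsilon^\mrm{BP}(d_l,d_r,\gamma)\) as the supremum of rates with all-zero infinite-chain FP enters, since a closer approach or slower escape would force a nonzero infinite-chain fixed point at a rate below threshold, a contradiction.

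Finally I would assemble the two estimates. Adding \(\tau_{\mrm{tail}}\) and \(\tau_{\mrm{bulk}}\), factoring out the common \((\gamma-1)\), and collecting the two uniform additive constants into a single \(\hat{\mathsf C}\), gives \(\tau(\epsilon,\delta) \le (\gamma-1)\big(\mathsf A\ln\ln\tfrac{\mathsf D}{\delta} + \mathsf B\ln\tfrac1{\Delta\epsilon} + \hat{\mathsf C}\big) = \hat\tau(\epsilon,\delta)\), with \(\mathsf A, \mathsf B, \hat{\mathsf C}, \mathsf D, \delta_0\) depending only on \(d_l, d_r, \gamma\). The restriction \(W < L\) ensures that the right boundary seen by the first window is the fixed all-one condition used throughout, so the profile is the monotone one analyzed here, and \(\hat\tau\) is independent of \(W\) as required for the subsequent reduction (via Lemma~\ref{lem_fpord}) to a statement about \(y_1\).
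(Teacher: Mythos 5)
Your overall architecture---monotonicity of $\vect{y}$, a split of the transition region at the breakout level $\delta_0$, Lemma~\ref{lem_detail} for the sections with values in $[\delta,\delta_0]$ (giving the $\ln\ln\frac{\mathsf{D}}{\delta}$ term), and a geometric-escape argument for the bulk (giving the $\ln\frac{1}{\Delta\epsilon}$ term)---matches the skeleton of the paper's proof in Appendix~\ref{app_tw}, and your tail estimate is essentially step vi) of that proof. The genuine gap is your proposed mechanism for the bulk. You plan to build a contracting ``effective scalar recursion'' out of the family of upper bounds $f_k(\epsilon,y)=\epsilon\bigl(1-\alpha_k(1-y)^{d_r-1}\bigr)^{d_l-1}$, $k\in[0,\gamma-1]$, from Lemma~\ref{lem_boundsfp}, reading the factors $\alpha_k<1$ as a coupling gain. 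This cannot work, and the direction of the inequality is the reason: $\alpha_k<1$ makes $1-\alpha_k(1-y)^{d_r-1}$ \emph{larger}, so $f_k(\epsilon,y)\ge f_{\gamma-1}(\epsilon,y)=f(\epsilon,y)$ for every $k$ --- these constants arise from padding the unknown right-hand sections with $1$'s, not from the helpful zero boundary, and the tightest member of the family is the plain uncoupled map $f$ (the paper states exactly this in Appendix~\ref{app_boundsfp}, and that these bounds are ``not very useful'' for $\epsilon>\epsilon^\mrm{BP}(d_l,d_r)$). Consequently any composition of the $f_k$'s applied from the right boundary is bounded below by iterates of $f(\epsilon,\cdot)$, which for $\epsilon\in\msc{E}$ converge to the stable fixed point $y_s(\epsilon)>\delta_0$ and never descend below it. No recursion assembled from Lemma~\ref{lem_boundsfp} alone can ``contract past $\delta_0$'' in this regime; your fallback idea (a contradiction with the definition of $\epsilon^\mrm{BP}(d_l,d_r,\gamma)$ via a nonzero infinite-chain FP) is a different argument that you leave entirely undeveloped, and it is precisely the whole difficulty, not a finishing touch.

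What the paper actually does in the bulk is different in kind. It constructs dominating FPs via modified schedules (free left boundary, then zero left boundary restricted to sub-windows), and uses the \emph{spatial} inequality $y_{i-\gamma+1}\le f(\epsilon,y_i)$ together with linear bounds on $h(y)=f(\epsilon,y)-y$ (the slopes $\mu_1,\dots,\mu_5$) to show that, moving left in blocks of $\gamma-1$ sections, the distance of the profile to the scalar fixed points $y_s$ and $y_u$ expands or contracts geometrically. The profile can linger only near \emph{two} critical values, $y_s$ and $y_u$ (your picture has a single repeller), and the lingering time is controlled by switching between the rates $\epsilon$ and $\epsilon^*=\epsilon^\mrm{MAP}(d_l,d_r)$: the mean value theorem gives $y_s^*-y_s\ge\frac{\Delta\epsilon}{2}y_s^\prime(\epsilon^*)$ and $y_u-y_u^*\ge-\frac{\Delta\epsilon}{2}y_u^\prime(\epsilon^*)$, where the key inequality $\epsilon^\mrm{BP}(d_l,d_r,\gamma)\le\epsilon^\mrm{MAP}(d_l,d_r)$ is what ties $\Delta\epsilon$ to these fixed-point gaps. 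So the coupled threshold enters through its comparison with the \emph{uncoupled MAP} threshold (a cited result), not through the coupled DE definition as in your sketch. Without this, or a rigorous version of your compactness/contradiction argument, the bulk count --- and hence the $\mathsf{B}\ln\frac{1}{\Delta\epsilon}$ term with constants uniform over $\msc{E}$ --- is unproven.
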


The proof is given in Appendix \ref{app_tw}. This means that the smallest window size that guarantees $y_1 \le \delta$ for a channel erasure rate $\frac{\epsilon^\mrm{BP}(d_l, d_r, \gamma) + \epsilon^\mrm{BP}(d_l, d_r)}{2}$ is
\begin{align}
\hat{W}_{\min}(\delta) &= \Big\lfloor(\gamma - 1)\Big(\mathsf{A}\ln\ln\frac{\mathsf{D}}{\delta} + \mathsf{B}\ln\frac{1}{\Delta\epsilon_{\max}} + \hat{\mathsf{C}}\Big)\Big\rfloor + 1 \notag \\
&> \hat{\tau} \Big(\frac{\epsilon^\mathrm{BP}(d_l, d_r, \gamma) + \epsilon^\mathrm{BP}(d_l, d_r)}{2}, \delta \Big) \notag
\end{align}
where $\Delta\epsilon_{\max} = \frac{\epsilon^\mrm{BP}(d_l, d_r, \gamma) - \epsilon^\mrm{BP}(d_l, d_r)}{2}$. When $W \ge \hat{W}_{\min}(\delta)$, we have
\begin{align} \label{eq_fwthr}
\epsilon^\mrm{FW}(d_l, d_r, \gamma, W, \delta) &\ge \epsilon^\mrm{BP}(d_l, d_r, \gamma) \notag \\
&\hspace*{3mm}- e^{-\frac{1}{\mathsf{B}}(\frac{W}{\gamma - 1} - \mathsf{A}\ln\ln\frac{\mathsf{D}}{\delta} - \hat{\mathsf{C}})}.
\end{align}

\begin{dis} \vspace*{3mm}
We restricted $\epsilon \in \mathscr{E}$ in Proposition \ref{prop_tw} to obtain constants that are independent of $\epsilon$. As can be seen from the proof of the proposition, these constants are dependent on $\epsilon$, unless each is optimized in the range $\mathscr{E}$. As we let the minimum $\epsilon$ in $\mathscr{E}$ approach $\epsilon^\mathrm{BP}(d_l, d_r)$, the constants in the expression for $\hat{\tau}(\epsilon, \delta)$ blow up and the upper bound will be useless. It is therefore necessary to keep the minimum of $\epsilon \in \mathscr{E}$ strictly larger than $\epsilon^\mathrm{BP}(d_l, d_r)$ and the value chosen in the above was motivated by our intent to ensure that the first window threshold was closer to $\epsilon^\mathrm{BP}(d_l, d_r, \gamma)$ than to $\epsilon^\mathrm{BP}(d_l, d_r)$. Note that the increase in the upper bound for $\tau(\epsilon, \delta)$ with decrease in $\epsilon$ is purely an artifact of the upper bounding technique we have employed; i.e., it is obvious that as we decrease $\epsilon$, $\tau(\epsilon, \delta)$ also decreases.\hfill$\square$
\end{dis}

\subsection{$c^\text{th}$ Window Configuration, $1 < c \leq L$} \label{ssec_sw}
We now evaluate the performance of the windowed decoding scheme when the window has slid certain number of sections from the left end of the code. We arrive at conditions under which $\hat{x}$ is guaranteed to be smaller than $\delta$ while operating with a window of size $W$. We start by establishing a property of $\hat{x}$.
\begin{lem}[FP Equation Involving $\hat{x}$] \label{lem_hatx} \vspace*{3mm}
Consider the function $\Omega(\vect{y})$ where
\[
\Omega(y_i) =
\begin{cases}
\pi(y_i), & i < 1\\
\epsilon g(\pi(y_{i - \gamma + 1}), \cdots, \pi(y_{i + \gamma - 1})), & i \in [W]\\
1, & i > W
\end{cases}
\]
where
\[
\pi(y_i) =
\begin{cases}
y_1, & i < 1\\
y_i, & i \in [W]\\
1, & i > W.
\end{cases}
\]
Then there exists a solution $\vect{\omega}$ to the equation $\vect{y} = \Omega(\vect{y})$ such that $\omega_1 = \hat{x}$. Moreover, $\vect{\omega}$ is the smallest such constellation, i.e., if $\vect{\hat{\omega}} = \Omega(\vect{\hat{\omega}})$, then $\vect{\hat{\omega}} \succeq \vect{\omega}$.
\end{lem}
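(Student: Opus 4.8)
The plan is to realize the constellation $\vect{\omega}$ as the limiting window profile of the windowed decoder run on an \emph{infinite} chain, and then to obtain minimality by a monotonicity/coupling comparison against an arbitrary fixed point of $\Omega$. So I first treat existence together with the identity $\omega_1 = \hat{x}$, and afterwards the extremal property $\vect{\hat{\omega}} \succeq \vect{\omega}$.

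For existence, I would work with $L = \infty$ and recall from the proof of Lemma \ref{lem_maxxl} that the window initializations are ordered, $\vect{y}_{\{c - 1\}}^{(0)} \preceq \vect{y}_{\{c\}}^{(0)}$, because the already-decoded left-boundary values $x_{c + i - 1, \{\infty\}}$ are non-decreasing in $c$ while the right boundary stays pinned at $1$; by the monotonicity of $g(\cdot)$ this propagates to the fixed points, giving $\vect{y}_{\{c\}}^{(\infty)} \preceq \vect{y}_{\{c + 1\}}^{(\infty)}$ pointwise. Being bounded above by $\vect{1}$, the limit $\vect{\omega} \triangleq \lim_{c \tends \infty} \vect{y}_{\{c\}}^{(\infty)}$ exists pointwise. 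Passing to the limit in Equation \eqref{eq_cwcfpfde} and using continuity of $g(\cdot)$, I would verify that $\vect{\omega}$ solves $\vect{y} = \Omega(\vect{y})$: the left-boundary entries $x_{c + i - 1, \{\infty\}}$ for $i < 1$ all tend to $\hat{x} = \omega_1$, matching $\pi(y_i) = y_1$; the entries beyond the window stay at $1$, matching the $i > W$ branch; and the interior entries satisfy the $\epsilon g(\cdots)$ relation in the limit. Finally $\omega_1 = \lim_{c \tends \infty} y_{1, \{c\}}^{(\infty)} = \lim_{c \tends \infty} x_{c, \{\infty\}} = \hat{x}$, since $x_{c, \{\infty\}} = y_{1, \{c\}}^{(\infty)}$ and $\hat{x} = \sup_c x_{c, \{\infty\}}$ by Lemma \ref{lem_maxxl}.

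For minimality I would first note that $\Omega$ is monotone and continuous and that $\Omega(\vect{0}) \succeq \vect{0}$; hence iterating $\Omega$ from the all-zero constellation produces a non-decreasing sequence converging to a smallest fixed point, and any fixed point $\vect{\hat{\omega}}$ obeys $\vect{\hat{\omega}} = \Omega^{(n)}(\vect{\hat{\omega}}) \succeq \Omega^{(n)}(\vect{0})$, so the smallest fixed point is dominated by every fixed point. It then remains to identify the $\vect{\omega}$ built above with this smallest fixed point, i.e. to show $\vect{\hat{\omega}} \succeq \vect{\omega}$ for every solution. I would prove $\vect{y}_{\{c\}}^{(\infty)} \preceq \vect{\hat{\omega}}$ for all $c$ by induction on $c$ and then let $c \tends \infty$. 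In the inductive step each left-boundary value $x_{c + i - 1, \{\infty\}}$ with $i < 1$ is at most $\hat{\omega}_1$ (the induction hypothesis applied to decoded targeted values), while the right boundary equals $1$, so the boundary of the window-$c$ system is dominated by the self-consistent boundary that $\vect{\hat{\omega}}$ imposes on itself through $\pi$; monotonicity of $g(\cdot)$ together with the self-consistency of $\vect{\hat{\omega}}$ as a fixed point of $\Omega$ then forces $\vect{y}_{\{c\}}^{(\infty)} \preceq \vect{\hat{\omega}}$.

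The main obstacle is exactly this last comparison. The subtlety is that each $\vect{y}_{\{c\}}^{(\infty)}$ is the \emph{largest} fixed point of its own fixed-boundary window system, since it is reached by iterating Equation \eqref{eq_wcfde} downward from $\vect{1}$, whereas $\vect{\omega}$ must be shown to be the \emph{smallest} fixed point of $\Omega$; reconciling these two extremal characterizations is where the real work lies. I expect to resolve it by comparing the downward window iteration not against the window-$c$ boundary in isolation but against the profile obtained by extending $\vect{\hat{\omega}}$ self-consistently, invoking schedule-independence of the fixed point (as exploited in Lemma \ref{lem_fpord}) to order the DE updates so that the iterates stay sandwiched below $\vect{\hat{\omega}}$ at every step. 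Equivalently, I would show directly that every decoded value $x_{c, \{\infty\}}$ is at most $\hat{\omega}_1$, yielding $\hat{x} \le \hat{\omega}_1$, and then bootstrap the full pointwise inequality $\vect{\omega} \preceq \vect{\hat{\omega}}$ from the fixed-point equations.
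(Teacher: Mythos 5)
Your existence argument is correct and is essentially the paper's own: the paper likewise realizes $\vect{\omega}$ as the limiting window constellation on the infinite chain (written there as $\vect{y}_{\{\infty\}}^{(\infty)}$), uses Lemma \ref{lem_maxxl} to see that the decoded left-boundary values all tend to $\hat{x}$, and reads off $\vect{\omega} = \pi(\vect{\omega}) = \Omega(\vect{\omega})$ from Equation \eqref{eq_cwcfpfde}; your monotone-in-$c$ convergence step is this same construction made explicit.

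The minimality half, however, is not proved in your submission, and you say so yourself. You correctly reduce the claim to showing $\vect{y}_{\{c\}}^{(\infty)} \preceq \vect{\hat{\omega}}$ for every fixed point $\vect{\hat{\omega}}$ of $\Omega$ and every $c$ (your iteration of $\Omega$ upward from $\vect{0}$ is a valid construction of the least fixed point, but it is ultimately unnecessary: once every fixed point dominates every window FP, it dominates their limit $\vect{\omega}$, which is all that least-ness means). But the inductive step of that reduction is precisely what you then declare to be ``where the real work lies,'' and you offer only an intended strategy---schedule-independence, keeping the iterates sandwiched below $\vect{\hat{\omega}}$---without carrying it out; the alternative bootstrap you mention (first $\hat{x} \le \hat{\omega}_1$, then the full pointwise bound) runs into exactly the same obstruction, since comparing a window fixed point reached by iterating downward from $\vect{1}$ against an arbitrary fixed point of a system with larger left boundary is the whole difficulty. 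So the second assertion of the lemma is left unproven. For comparison, the paper closes this step by monotonicity of $g(\cdot)$: any fixed point $\vect{\hat{\omega}}$ of $\Omega$ has left boundary $\hat{\omega}_1 \ge 0$, which dominates the zero left boundary of the first window configuration, whence $\vect{\hat{\omega}} \succeq \vect{y}_{\{1\}}^{(\infty)}$; the decoded values forming the left boundaries of later windows then remain below $\hat{\omega}_1$, so the same comparison gives $\vect{\hat{\omega}} \succeq \vect{y}_{\{n\}}^{(\infty)}$ for every $n$, and continuity of the DE equations carries the bound through the limit of this non-decreasing sequence to yield $\vect{\hat{\omega}} \succeq \vect{\omega}$. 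You are right that this comparison is delicate and that the paper compresses it into a single sentence---your largest-versus-least tension is a genuine subtlety---but identifying a difficulty and proposing to resolve it is not the same as resolving it: as submitted, the proposal establishes the existence of $\vect{\omega}$ with $\omega_1 = \hat{x}$, but not that it is the smallest solution.
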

\begin{proof}
We have
\begin{align}
\hat{x} &= x_{\infty, \{\infty\}} = y_{1, \{\infty\}}^{(\infty)} \notag \\
&\stackrel{\eqref{eq_cwcfpfde}}{=} \epsilon g(y_{-\gamma + 2, \{\infty\}}^{(\infty)}, \cdots, y_{0, \{\infty\}}^{(\infty)}, y_{1, \{\infty\}}^{(\infty)}, \cdots, y_{\gamma, \{\infty\}}^{(\infty)}) \notag \\
&\stackrel{\eqref{eq_cwcfpfde}}{=} \epsilon g(\underbrace{x_{\infty, \{\infty\}}, \cdots, x_{\infty, \{\infty\}}}_{\gamma}, y_{2, \{\infty\}}^{(\infty)}, \cdots, y_{\gamma, \{\infty\}}^{(\infty)}) \notag \\
&= \epsilon g(\underbrace{\hat{x}, \cdots, \hat{x}}_{\gamma}, y_{2, \{\infty\}}^{(\infty)}, \cdots, y_{\gamma, \{\infty\}}^{(\infty)}). \notag
\end{align}
Hence, if we define $\vect{\omega}$ as follows
\[
\omega_i =
\begin{cases}
\hat{x} = y_{1, \{\infty\}}^{(\infty)}, & i \leq 1\\
y_{i, \{\infty\}}^{(\infty)}, & i > 1,
\end{cases}
\]
then it is clear that $\vect{\omega} = \pi(\vect{\omega}) = \Omega(\vect{\omega})$.

Note that any fixed point $\vect{\hat{\omega}}$ of the function $\Omega(\cdot)$ has to satisfy $\vect{\hat{\omega}} \succeq \vect{y}_{\{1\}}^{(\infty)}$ for the same channel erasure rate $\epsilon \in [0, 1]$ from the monotonicity of $g(\cdot)$. In particular, $\vect{\omega} \succeq \vect{y}_{\{1\}}^{(\infty)}$. From the continuity of the DE equations in Definition \ref{defn_wdfde}, it follows that $\vect{\omega}$ is the least solution to the equation $\vect{y} = \Omega(\vect{y})$, since it is the limiting constellation of the sequence of non-decreasing constellations $\{\vect{y}_{\{n\}}^{(\infty)}\}_{n = 1}^{\infty}$.\vspace*{3mm}
\end{proof}

We defer the proof of the following proposition to Appendix \ref{app_swthr}, which is the central argument in the proof of Theorem \ref{thm_wdthr}. Using the bound on the maximum transition width from Proposition \ref{prop_tw}, we obtain an upper bound on $\hat{x}$ for a given window size $W$ and erasure rate $\epsilon \in \mathscr{E}$. From this, we arrive at a lower bound for $\epsilon$ that guarantees $\hat{x} \leq \delta$ when $\delta$ is an arbitrarily chosen value smaller than $\delta_*$ (which depends only on $d_l, d_r$) and the window size is larger than $W_{\min}(\delta)$ (which depends only on the code parameters $d_l, d_r, \gamma$ and $\delta$). This gives us our lower bound on the WD threshold.
\begin{prop}[WD \& FW Thresholds] \label{prop_swthr} \vspace*{3mm}
Consider WD of the $(d_l, d_r, \gamma, L)$ spatially coupled ensemble with a window of size $W \ge W_{\min}(\delta) = \hat{W}_{\min}(\delta) + \gamma - 1$ over a BEC with erasure rate $\epsilon$. Then, we have
\begin{align}
\epsilon^\mrm{WD}(d_l, d_r, \gamma, W, \delta) \ge \Big(1 - &\frac{d_ld_r}{2}\delta^{\frac{d_l - 2}{d_l - 1}}\Big) \notag \\
&\times \epsilon^\mrm{FW}(d_l, d_r, \gamma, W - \gamma + 1, \delta) \notag
\end{align}
provided $\delta < \delta_* = \Big(\frac{2}{d_ld_r}\Big)^{\frac{d_l - 1}{d_l - 2}}$, where $\epsilon^\mrm{FW}(d_l, d_r, \gamma, W, \delta)$ is the first window threshold.\hfill$\blacksquare$\vspace*{3mm}
\end{prop}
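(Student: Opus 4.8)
The plan is to exploit Lemma~\ref{lem_hatx}, which identifies $\hat{x}$ with $\omega_1$, the first component of the least fixed point $\vect{\omega}$ of the map $\Omega$. This map is exactly the first-window configuration recursion of size $W$ except that its left boundary is frozen at the value $\omega_1 = \hat{x}$ rather than at $0$. By the same monotonicity argument used for $\vect{y}_{\{1\}}^{(\infty)}$, the constellation $\vect{\omega}$ is non-decreasing, so $\omega_1 \le \omega_2 \le \cdots \le \omega_W$. Hence the WD steady state differs from a genuine first window only through a polluted left boundary, and the entire proof amounts to absorbing this pollution into (i) a reduction of the usable window length by $\gamma - 1$ and (ii) a multiplicative boost of the channel parameter. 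Throughout I write $c = \frac{d_l d_r}{2}$ and $\beta = \frac{d_l - 2}{d_l - 1}$, and note that $\delta < \delta_* = (2/(d_l d_r))^{1/\beta}$ is precisely the condition $c\,\delta^{\beta} < 1$, which makes the boost factor $1 - c\,\delta^{\beta}$ strictly positive.

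First I would \emph{shift} the window. Consider the sub-window consisting of the original sections $\gamma, \gamma + 1, \dots, W$; this is a window of length $W - \gamma + 1$ whose right boundary is still the all-ones vector and whose left boundary is formed by $\omega_1, \dots, \omega_{\gamma - 1}$. Because $\vect{\omega}$ is non-decreasing, each of these boundary values is at most $\omega_\gamma$, the section-$1$ value of the shifted window, and $\hat{x} = \omega_1 \le \omega_\gamma$. Thus it suffices to bound $\omega_\gamma$: if $\omega_\gamma \le \delta$ then $\hat{x} \le \delta$ as well. The shifted window is identical to a first-window configuration of size $W - \gamma + 1$ and channel $\epsilon$, \emph{except} that its left boundary carries the values $\omega_1, \dots, \omega_{\gamma - 1} \le \omega_\gamma$ instead of $0$.

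The crux is a channel-boost comparison. I would prove that running the window recursion of length $W - \gamma + 1$ with any left boundary whose entries are bounded by $b$ over a channel of parameter $\epsilon$ is dominated, component by component, by the first-window recursion with clean (zero) left boundary over the boosted channel $\tilde{\epsilon} = \epsilon / (1 - c\,b^{\beta})$, for every $b \le \delta < \delta_*$. The idea is to expand the multivariate density-evolution function $g$ of \eqref{eq_shortde} and bound the extra erasure contributed by raising the boundary inputs from $0$ to at most $b$: near the all-zero regime this contribution is first order in the boundary values, and the combinatorial averaging over the $\gamma$ window positions (cf.\ the $\alpha_k$ and the single-section estimates of Lemma~\ref{lem_boundsfp}) produces the factor $c\,b^{\beta} = \frac{d_l d_r}{2}\, b^{(d_l - 2)/(d_l - 1)}$, which can be folded into a multiplicative increase of $\epsilon$. \textbf{This comparison is the main obstacle}, since it is where the genuinely multivariate recursion must be tamed; the reduction to the one-dimensional regular-ensemble recursion $x = \epsilon(1 - (1 - x)^{d_r - 1})^{d_l - 1}$ with a boosted $\epsilon$ is what makes the first-window threshold $\epsilon^\mrm{FW}$ applicable.

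Finally I would close the argument with a monotone bootstrap to remove the apparent circularity (the boost factor uses the very bound $\delta$ we wish to establish). Suppose $\epsilon \le (1 - c\,\delta^{\beta})\,\epsilon^\mrm{FW}(d_l, d_r, \gamma, W - \gamma + 1, \delta)$, so that $\tilde{\epsilon} = \epsilon/(1 - c\,\delta^{\beta}) \le \epsilon^\mrm{FW}(d_l, d_r, \gamma, W - \gamma + 1, \delta)$. Iterating the map $\Omega$ from $\vect{1}$ down to $\vect{\omega}$, each iterate stays non-decreasing in its index by the usual induction with the monotonicity of $g(\cdot)$; consequently, as long as the section-$\gamma$ iterate stays below $\delta$, the left-boundary entries fed to the shifted window also stay below $\delta$, the comparison lemma applies with $b = \delta$, and by a monotone-coupling induction the shifted iterate is dominated by the clean boosted first-window iterate, whose limit $v_1 \le \delta$ by the definition of $\epsilon^\mrm{FW}$ at channel $\tilde{\epsilon}$. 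Hence the section-$\gamma$ value never crosses $\delta$, its limit satisfies $\omega_\gamma \le \delta$, and therefore $\hat{x} = \omega_1 \le \omega_\gamma \le \delta$. Taking the supremum over all such $\epsilon$ yields $\epsilon^\mrm{WD}(d_l, d_r, \gamma, W, \delta) \ge (1 - c\,\delta^{\beta})\,\epsilon^\mrm{FW}(d_l, d_r, \gamma, W - \gamma + 1, \delta)$, which is the claim; the hypothesis $\delta < \delta_*$ is used only to keep the boost factor positive, and $W \ge W_{\min}(\delta) = \hat{W}_{\min}(\delta) + \gamma - 1$ guarantees $W - \gamma + 1 \ge \hat{W}_{\min}(\delta)$, placing the shifted first-window threshold in the regime where Proposition~\ref{prop_tw} renders it close to $\epsilon^\mrm{BP}(d_l, d_r, \gamma)$.
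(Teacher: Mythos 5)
Your plan follows the same skeleton as the paper's proof (Lemma~\ref{lem_hatx}, monotonicity of the iterates of $\Omega$, a window effectively shortened by $\gamma-1$, and the multiplicative factor $1-\frac{d_ld_r}{2}\delta^{\frac{d_l-2}{d_l-1}}$), but the two steps that carry all the technical weight have genuine gaps. The ``channel-boost comparison'' that you yourself flag as the main obstacle is exactly the heart of the paper's proof, and your sketch of it mislocates the origin of the exponent. In the paper's mean-value-theorem computation for the polluted sections $1\le i<\gamma$, the extra erasure caused by raising the left boundary from $0$ to $\sigma\le y_1^{(0)}$ is bounded by $y_1^{(0)}\cdot\frac{d_ld_r}{2}\,\bigl(y_\gamma^{(0)}\bigr)^{\frac{d_l-2}{d_l-1}}$: it is \emph{linear} in the pollution $y_1^{(0)}$, which then cancels against the linear gain $\Delta\epsilon\, y_1^{(0)}/\epsilon$ obtained by lowering the channel, and the power $\delta^{\frac{d_l-2}{d_l-1}}$ in the final factor is the corresponding power of the \emph{clean fixed-point value at section $\gamma$}, not of the boundary bound $b$. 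This is precisely why the window must shrink by $\gamma-1$: one needs $y_\gamma^{(0)}\le\delta$, which Lemma~\ref{lem_fpord} supplies from the first-window threshold at size $W-\gamma+1$. Your stated sensitivity $c\,b^{\beta}$ in terms of the boundary bound alone cannot be established, because bounding $\partial G_i/\partial\sigma$ requires a bound on the constellation value near section $\gamma$ --- the very quantity you are trying to control --- so the comparison lemma in the form you state it is not merely unproven but has the wrong shape.

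Second, your bootstrap never gets started. Iterating $\Omega$ from $\vect{1}$, the early iterates have section-$\gamma$ values near $1$, so the premise ``as long as the section-$\gamma$ iterate stays below $\delta$'' fails already at $\ell=0$, and the comparison lemma is never applicable; the phrase ``the section-$\gamma$ value never crosses $\delta$'' is backwards, since the sequence decreases from $1$ and the whole point is to prove that its limit descends below $\delta$. The paper resolves this circularity by reversing the roles of the two channels: it initializes the iteration of $\Omega$, taken at the \emph{reduced} channel $\upsilon=\epsilon-\Delta\epsilon$, at the known clean first-window FP $\vect{y}_{\{1\}}^{(\infty)}$ for channel $\epsilon$, and shows via the estimate above that this constellation is a super-solution whenever $\Delta\epsilon/\epsilon\ge\frac{d_ld_r}{2}\delta^{\frac{d_l-2}{d_l-1}}$; the iterates then decrease monotonically to a fixed point of the map, which by the least-fixed-point property in Lemma~\ref{lem_hatx} dominates $\vect{\omega}$, giving $\hat{x}\le y_{1,\{1\}}^{(\infty)}\le\delta$. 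Your argument could alternatively be repaired by inducting over window configurations (using that $\vect{\omega}$ is the monotone limit of the configuration FPs $\vect{y}_{\{n\}}^{(\infty)}$, with configuration $1$ having a clean boundary), but every inductive step still needs the quantitative comparison in the paper's form, so the computation you deferred cannot be bypassed.
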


From Proposition \ref{prop_swthr} and Equation \eqref{eq_fwthr}, we immediately have that
\begin{align}
\epsilon^\mrm{WD}(d_l, d_r, &\gamma, W, \delta) \ge \Big(1 - \frac{d_ld_r}{2}\delta^{\frac{d_l - 2}{d_l - 1}}\Big) \notag \\
&\times \Big(\epsilon^\mrm{BP}(d_l, d_r, \gamma) - e^{-\frac{1}{\mathsf{B}}(\frac{W - \gamma + 1}{\gamma - 1} - \mathsf{A}\ln\ln\frac{\mathsf{D}}{\delta} - \hat{\mathsf{C}})}\Big) \notag
\end{align}
provided $W \ge W_{\min}(\delta)$. By making the substitution $\mathsf{C} = \hat{\mathsf{C}} + 1$, we see that this proves Theorem \ref{thm_wdthr}.

\section{Experimental Results} \label{sec_expres}
In this section, we give results obtained by simulating windowed decoding of finite-length spatially coupled codes over the binary erasure channel. The code used for simulation was generated randomly by fixing the parameters $M = 1024$, $d_l = 3, d_r = 6$, with coupling length $\gamma = 3$ and chain length $L = 64$. The blocklength of the code was hence $n = ML = 65,536$ and the rate was $R \approx 0.484375$. From Table \ref{tab_bpthr}, the BP threshold for the ensemble to which this code belongs is $\epsilon^\mrm{BP}(d_l = 3, d_r = 6, \gamma = 3, L = 64) \approx 0.487514$.

Fig. \ref{fig_ber} shows the bit erasure rates achieved by using windows of length $W = 4, 6, 8$, i.e., the number of bits within each window was $WM = 4096, 6144$ and $8192$ respectively.
\begin{figure}[!ht]
\centering
\figselect{\include{berwd}}{\includegraphics{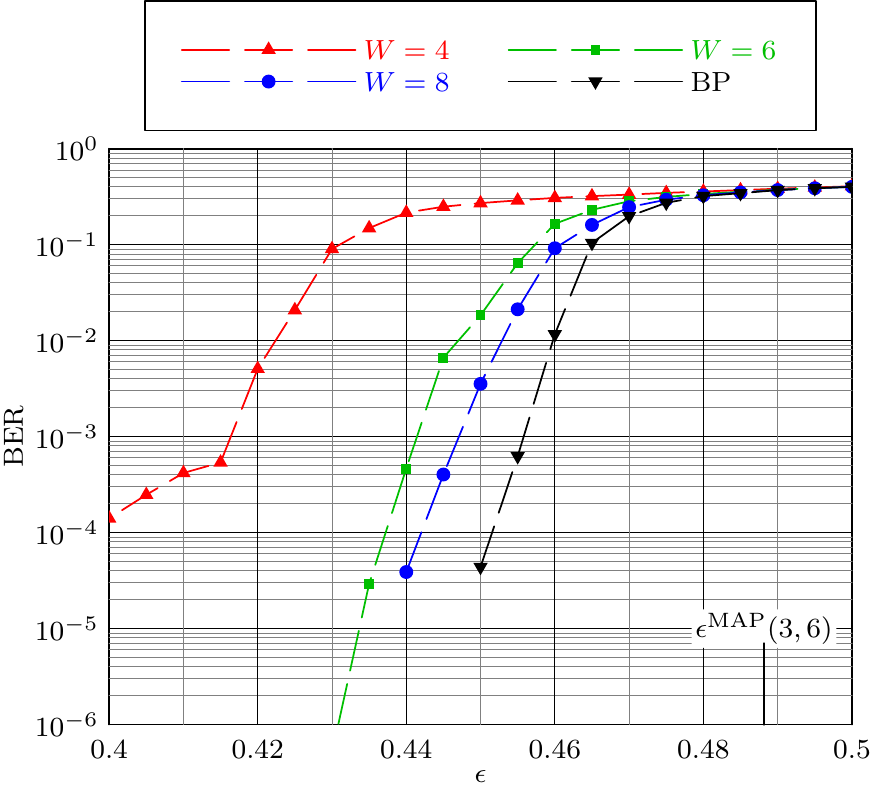}}
\caption{Bit erasure probability of the $(d_l = 3, d_r = 6, \gamma = 3, L = 64)$ spatially coupled code with $M = 1024$ achieved with a windowed decoder of window sizes $W = 4, 6$ and $8$.}
\label{fig_ber}
\end{figure}
From the figure, it is clear that good performance can be obtained for a wide range of channel erasure rates even for small window lengths, e.g., $W = 6, 8$. In performing the simulations above, we let the decoders (BP and WD) run for as many iterations as possible, until the decoder could solve for no further bits. For the windowed decoder, this meant that within each window configuration, the decoder was allowed to run until it could solve no further bits within the window. Fig. \ref{fig_iter} plots the average number of iterations for the BP decoder and the average number of iterations within each window configuration times the chain length (which corresponds to the average number of iterations) for the WD.
\begin{figure}[!ht]
\centering
\figselect{\include{iterwd}}{\includegraphics{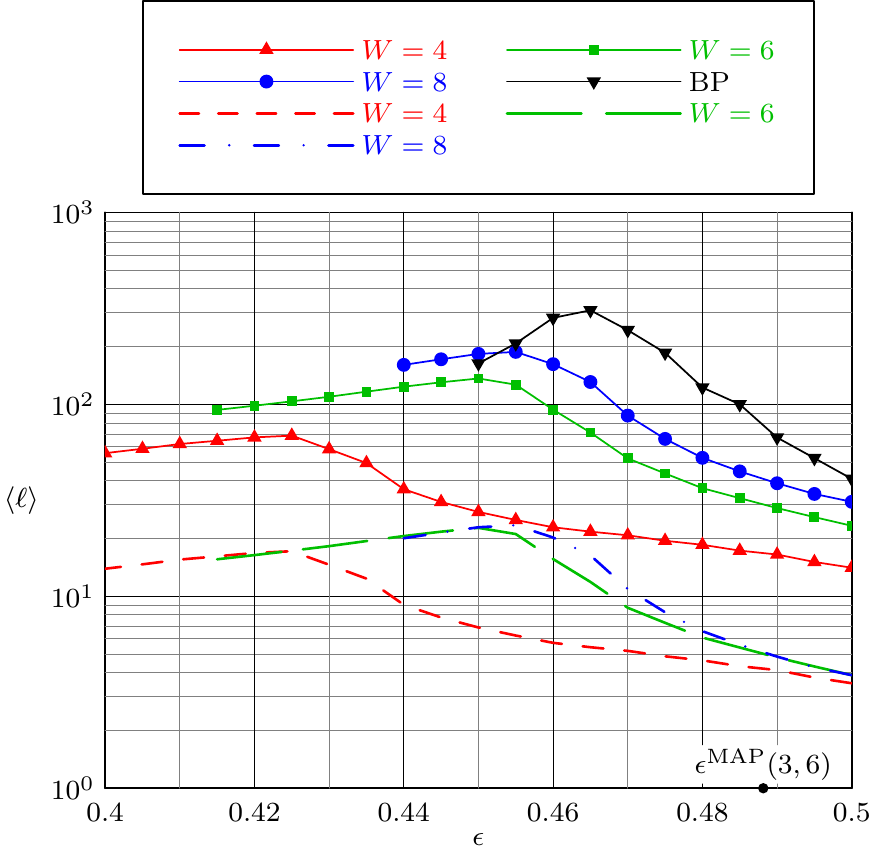}}
\caption{Average number of iterations $\langle\ell\rangle$ for BP and WD as a function of the channel erasure rate is shown for each window size in solid lines. For the WD, we show in dashed lines, the average number of iterations required within each window configuration.}
\label{fig_iter}
\end{figure}
We can see that for randomly chosen spatially coupled codes, a modest reduction in complexity is possible by using the windowed decoder in the waterfall region. Interestingly, the average number of iterations required per window configuration is independent of the chain length below certain channel erasure rates. 
The number of iterations required decreases beyond a certain value of $\epsilon$ because for these higher erasure rates, the decoder is no longer able to decode and gets stuck quickly. Although the smaller window sizes have a large reduction in complexity and a decent BER performance, the block erasure rate performance can be fairly bad, e.g., for the window of size $4$, the block erasure rate was $1$ in the range of erasure rates considered in Fig. \ref{fig_ber}. However, the block erasure rate improves drastically with increasing window size---for the window of size $8$, the block erasure rate at $\epsilon = 0.44$ was $\approx 6.3\times 10^{-4}$.

The above illustration suggests that 
for good performance with reduced complexity via windowed decoding, careful code design is necessary. For a certain variety of spatially coupled codes---protograph-based LDPC convolutional codes---certain design rules for good performance with windowed decoding were given in \cite{iye_12_tit_ldpccc}, and ensembles with good performance for a wide range of window sizes (including window sizes as small as $\gamma$) over erasure channels with and without memory were constructed. For these codes constructed using PEG \cite{hu_05_tit_peg} and ACE \cite{tia_04_toc_ace} techniques, not only can the error floor be lowered but also the performance of a medium-sized windowed decoder with fixed number of iterations can be made to be very close to that of the BP decoder \cite{iye_12_tit_ldpccc}. It is for such codes that the windowed decoder is able to attain very good performance with significant reduction in complexity and decoding latency.

\section{Conclusions} \label{sec_conc}
We considered a windowed decoding (WD) scheme for decoding spatially coupled codes that has smaller complexity and latency than the BP decoder. We analyzed the asymptotic performance limits of such a scheme by defining WD thresholds for meeting target erasure rates. We gave a lower bound on the WD thresholds and showed that these thresholds are guaranteed to approach the BP threshold for the spatially coupled code at least exponentially in the window size. Through density evolution, we showed that, in fact, the WD thresholds approach the BP threshold much faster than is guaranteed by the lower bound proved analytically. Since the BP thresholds for spatially coupled codes are themselves close to the MAP threshold, WD gives us an efficient way to trade off complexity and latency for decoding performance approaching the optimal MAP performance. Since the MAP decoder is capacity-achieving as the degrees of variables and checks are increased, similar performance is achievable through a WD scheme for a target erasure floor.

Through simulations, we showed that WD is a viable scheme for decoding finite-length spatially coupled codes and that even for small window sizes, good performance is attainable for a wide range of channel erasure rates. However, the complexity reduction for randomly constructed spatially coupled codes is not as significant as that obtained for protograph-based LDPC convolutional codes with a large girth. Thus, characterizing good spatially coupled codes within the ensemble of randomly coupled codes is a question that remains.

The WD scheme was analyzed here for the BEC and, therefore, the superior performance of these codes and the low complexity and latency of the WD scheme make these attractive for applications in coding over upper layers of the internet protocol. Furthermore, the same scheme can be employed for decoding spatially coupled codes over any channel. However, for channels that introduce errors apart from erasures, the WD scheme can suffer from error propagation. This effect would be similar to what occurs in decoding convolutional codes using a Viterbi decoder with a fixed traceback length. Analysis of the WD scheme and providing performance guarantees over such channels will play a key role in making spatially coupled codes and the WD scheme practical.

\appendices
\section{Proof of Lemma \ref{lem_boundsfp}} \label{app_boundsfp}
For the lower bound, we have
\begin{align}
y_i &= \epsilon g(y_{i - \gamma + 1}, \cdots, y_{i + \gamma - 1}) \notag \\
&\ge \epsilon g(\underbrace{0, \cdots, 0}_{\gamma}, \underbrace{y_{i + 1}, \cdots, y_{i + 1}}_{\gamma - 1}) \notag \\
&\stackrel{(a)}{\ge} \epsilon \Big(\frac{\gamma - 1}{2\gamma}y_{i + 1} \Big)^{d_l - 1} \notag
\end{align}
where $(a)$ follows from the fact \cite[Lem. 24(iii)]{kud_11_tit_ldpccc} that
\[
g(y_{i - \gamma + 1}, \cdots, y_{i + \gamma - 1}) \ge \bar{y}_i^{d_l - 1}
\]
where $\bar{y}_i = \frac{1}{\gamma^2}\sum_{j, k = 0}^{\gamma - 1}y_{i + j - k}$. Applying this bound recursively for $y_i, y_{i + 1}, \cdots, y_{i + j - 1}$, we get
\begin{align}
y_i &\ge \Big( \epsilon (\frac{\gamma - 1}{2\gamma})^{d_l - 1} \Big) ^ {\frac{(d_l - 1) ^ j - 1}{d_l - 2}}y_{i + j}^{(d_l - 1)^j} \triangleq \Phi e^{-\phi_j(d_l - 1)^j} \notag
\end{align}
where $\Phi = \Big( \epsilon(\frac{\gamma - 1}{2\gamma})^{d_l - 1} \Big)^{\frac{-1}{d_l - 2}} \ge 1$ and $\phi_j = \ln \Big( \frac{\Phi}{y_{i + j}} \Big) \ge 0$. When $i + j = W + 1$, since $y_{W + 1} = 1$,
\[
y_i \ge \Phi e^{-\phi(d_l - 1)^{W + 1 - i}},{\ }i = 1, 2, \cdots, W
\]
with $\phi = \phi_{W + 1 - i} = \ln \Phi$.

For the upper bound,
\begin{align}
y_i &\le \epsilon g(\underbrace{y_{i + k}, \cdots, y_{i + k}}_{\gamma + k}, \underbrace{1, \cdots, 1}_{\gamma - k - 1}) \notag \\
&\stackrel{(b)}{\le} \epsilon \Big(1 - \alpha_k(1 - y_{i + k})^{d_r - 1} \Big)^{d_l- 1} \notag \\
&\triangleq f_k(\epsilon, y_{i + k}) \notag
\end{align}
for $k \in [0, \gamma - 1]$, where $\alpha_k = (1 - \frac{(\gamma - k - 1)(\gamma - k)}{2\gamma^2})^{d_r - 1}$. Here $(b)$ follows from \cite[Lem. 24(i)]{kud_11_tit_ldpccc}
\[
g(y_{i - \gamma + 1}, \cdots, y_{i + \gamma - 1}) \le (1 - (1 - \bar{y}_i)^{d_r - 1})^{d_l - 1}.
\]
Note that for $k = \gamma - 1$, $f_{\gamma - 1}(\epsilon, x) = f(\epsilon, x)$, the forward DE update equation for the $(d_l, d_r)$-regular ensemble. This proves the Lemma. We now discuss the utility and limitations of the upper bounds derived here.

Fig. \ref{fig_ubiter} plots the bounds $f_k(\epsilon, y_{i + k})$ for the $(d_l = 3, d_r = 6, \gamma = 3, L)$ ensemble for two values of $\epsilon$, one below and the other above the BP threshold $\epsilon^{\mrm{BP}}(d_l, d_r)$.
\begin{figure}[!ht]
\centering
\figselect{\include{ubiterfig}}{\includegraphics{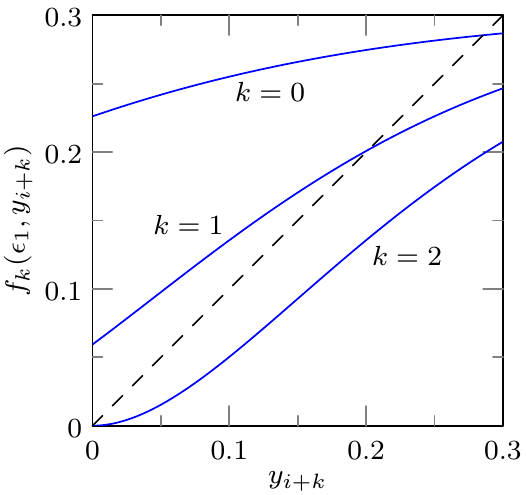}}
\figselect{\include{ubiterfig2}}{\includegraphics{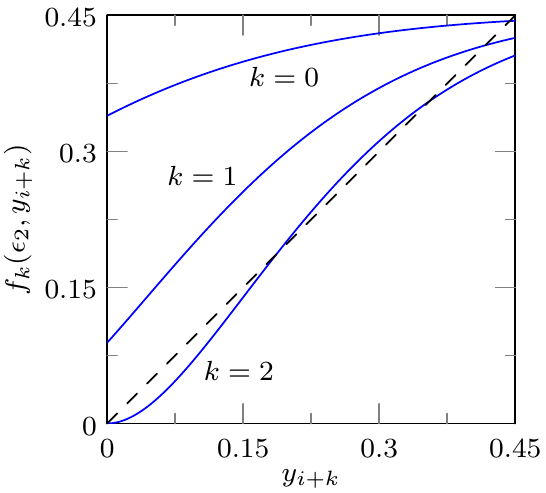}}
\caption{The upper bounds $f_k(\epsilon, y_{i + k})$ for two values of $\epsilon$ : $\epsilon_1 = 0.3 < \epsilon^{\mrm{BP}}(d_l, d_r) \approx 0.4294 < \epsilon_2 = 0.45$ for the $(d_l = 3, d_r = 6, \gamma = 3, L)$ ensemble.}
\label{fig_ubiter}
\end{figure}
As is clear from the figure, the tightest bounds are obtained for $k = \gamma - 1$. Note that the bound when $k = 0$ can be recursively computed to obtain a universal upper bound $y_\mrm{ub}$ on all the window constellation points $y_i$ for a given $(d_l, d_r, \gamma, L)$ ensemble, given by the fixed point of the equation
\[
y = f_0(\epsilon, y) = \epsilon \Big(1 - (\frac{\gamma + 1}{2\gamma})^{d_r - 1}(1 - y)^{d_r - 1} \Big)^{d_l- 1}
\]
which is plotted in Fig. \ref{fig_ubfp}. As can be seen from the plot, these upper bounds are only marginally tighter than the trivial upper bound of $\epsilon$.
\begin{figure}[!ht]
\centering
\figselect{\include{ubfp}}{\includegraphics{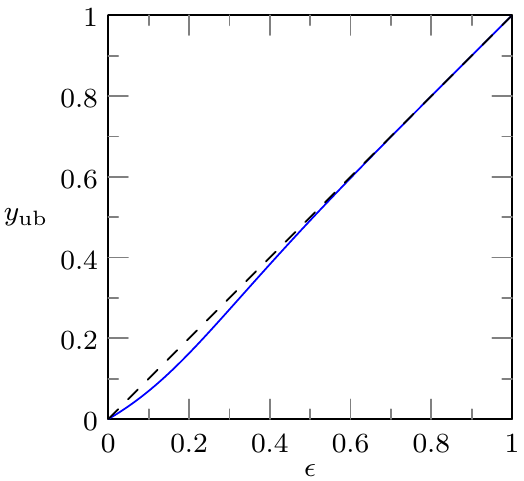}}
\caption{Universal upper bounds $y_\mrm{ub}$ on the constellation points $y_i$ as a function of $\epsilon$ for the $(d_l = 3, d_r = 6, \gamma = 2, L)$ ensemble. These bounds are only marginally tighter than the straightforward upper bound $\epsilon$. Also, the bounds are non-decreasing in $\gamma$.}
\label{fig_ubfp}
\end{figure}
In general, we can write $y_W \le y_\mrm{ub}$ and use the other upper bounds $f_k(\cdot, \cdot)$ to obtain better bounds for other sections as follows. In the sequel, we shall write $f_{k_1, k_2, \cdots, k_c}(\epsilon, y)$ to denote $f_{k_1}(\epsilon, f_{k_2}(\epsilon, \cdots f_{k_c}(\epsilon, y)))$ and similarly define
\[
f_{k^c}(\epsilon, y) \triangleq \underbrace{f_k(\epsilon, f_k(\epsilon, \cdots, f_k(\epsilon, y)))}_{c}.
\]
Thus, for $j = c(\gamma - 1) + d, 0 \le c, 0 \le d < \gamma - 1$, we can write $y_i \le f_{d,(\gamma - 1)^c}(\epsilon, y_{i + j})$. The FP value of the erasure probability of a variable node in the first section, $y_1$, can therefore be bounded in terms of the window size $W$ as
\[
y_1 \le f_{d, (\gamma - 1)^c}(\epsilon, y_\mrm{ub}),
\]
where $c = \left\lfloor\frac{W - 1}{\gamma - 1}\right\rfloor, d = W - 1 - c(\gamma - 1)$. This bounding is particularly useful when $\epsilon \le \epsilon^{\mrm{BP}}(d_l, d_r)$ when the fixed point of the $f_{\gamma - 1}(\cdot, \cdot)$ upper bound is zero. It is sometimes possible that $f_{(\gamma - 1)^c}(\epsilon, y_\mrm{ub}) \le f_{d, (\gamma - 1)^c}(\epsilon, y_\mrm{ub})$, in which case we can retain the tighter upper bound $f_{(\gamma - 1)^c}(\epsilon, y_\mrm{ub})$. Fig. \ref{fig_ubx1eg} shows an example of the upper bound on $y_1$ graphically.
\begin{figure}[!ht]
\centering
\figselect{\include{ubx1eg}}{\includegraphics{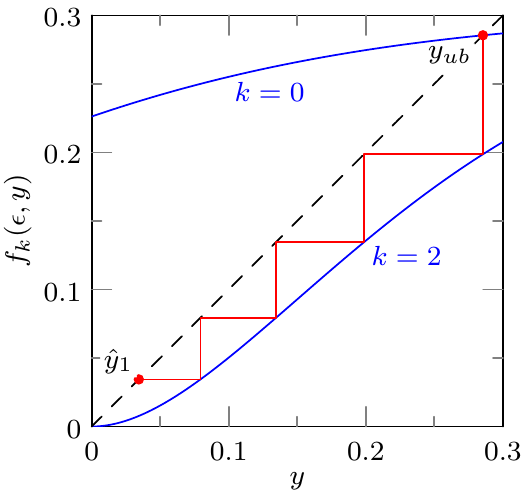}}
\caption{Upper bound, $\hat{y}_1 \ge y_1$, for the $(d_l = 3, d_r = 6, \gamma = 3, L)$ ensemble with a window of size $W = 9$. The channel erasure rate $\epsilon = 0.3$. Note that $c = 4, d = 0$, $y_{\text{ub}} \approx 0.285464$ and $\hat{y}_1 = f_{2^4}(0.3, y_{\text{ub}}) \approx 0.0343947$.}
\label{fig_ubx1eg}
\end{figure}
As a consequence of this upper bound, as $W \tends \infty$, we have that $y_1 \tends 0$ for $\epsilon \le \epsilon^{\mrm{BP}}(d_l, d_r)$. However, for $\epsilon > \epsilon^\mathrm{BP}(d_l, d_r)$, these upper bounds are not very useful since the FP of the $f_{\gamma - 1}(\cdot, \cdot)$ upper bound is non-zero (cf. Fig. \ref{fig_ubiter}).

\section{Proof of Proposition \ref{prop_tw}} \label{app_tw}
In the following, we will use some results from \cite{kud_11_tit_ldpccc} summarized below. We define
\[
h(y) \triangleq f(\epsilon, y) - y
\]
where
\[
f(\epsilon, y) = \epsilon(1 - (1 - y)^{d_r - 1})^{d_l - 1},
\]
the DE update equation for randomized $(d_l, d_r)$-regular ensembles. For $\epsilon \in (\epsilon^\mrm{BP}(d_l, d_r), 1)$, the equation $h(y) = 0$ has exactly three roots in the interval $[0, 1]$, given by $0, y_u(\epsilon)$ and $y_s(\epsilon)$. Between $0$ and $y_u(\epsilon)$, $h(y)$ is negative, attaining a unique minimum at $y_{{\min}}(\epsilon)$. Between $y_u(\epsilon)$ and $y_s(\epsilon)$, $h(y)$ is positive, attaining a unique maximum at $y_{{\max}}(\epsilon)$. Beyond $y_s(\epsilon)$, $h(y)$ is negative again. Between $0$ and $y_{{\min}}(\epsilon)$, $h(y)$ is upper bounded by a line through the origin with slope 
\[
-\mu_1(\epsilon) \triangleq \frac{h(y_{{\min}}(\epsilon))}{y_{{\min}}(\epsilon)},
\]
i.e., the line $l(y) = -\mu_1(\epsilon)y$. Between $y_{{\min}}(\epsilon)$ and $y_u(\epsilon)$, $h(y)$ is upper bounded by a line passing through $(y_u(\epsilon), 0)$ with a slope
\[
\mu_2(\epsilon) \triangleq \min\{\frac{-h(y_{{\min}}(\epsilon))}{y_u(\epsilon) - y_{{\min}}(\epsilon)}, h^\prime(y_u(\epsilon))\}. 
\]
Between $y_u(\epsilon)$ and $y_{{\max}}(\epsilon)$, $h(y)$ is lower bounded by a line through $(y_u(\epsilon), 0)$ with a slope
\[
\mu_3(\epsilon) \triangleq \min\{\frac{h(y_{{\max}}(\epsilon))}{y_{{\max}}(\epsilon) - y_u(\epsilon)}, h^\prime(y_u(\epsilon))\}. 
\]
Between $y_{{\max}}(\epsilon)$ and $y_s(\epsilon)$, $h(y)$ is lower bounded by the line through $(y_s(\epsilon), 0)$ with slope
\[
-\mu_4(\epsilon) \triangleq \max\{\frac{-h(y_{{\max}}(\epsilon))}{y_s(\epsilon) - y_{{\max}}(\epsilon)}, h^\prime(y_s(\epsilon))\}.
\]
Beyond $y_s(\epsilon)$, $h(y)$ is upper bounded by the line through $(y_s(\epsilon), 0)$ with slope
\[
-\mu_5(\epsilon) \triangleq h^\prime(y_s(\epsilon)).
\]
Each of the $\mu_i(\epsilon)$'s, $i = 1, \cdots, 5$, defined above is strictly positive for $\epsilon$ in the specified range. For a general $\epsilon$, we will drop the dependence of each of these parameters on $\epsilon$ from the notation. When $\epsilon = \epsilon^* \triangleq \epsilon^\mrm{MAP}(d_l, d_r)$, the corresponding parameters are themselves shown with $*$'s. These properties of $h(y)$ are illustrated in Fig. \ref{fig_hx}.
\begin{figure}[!ht]
\centering
\figselect{\include{hxfig}}{\includegraphics{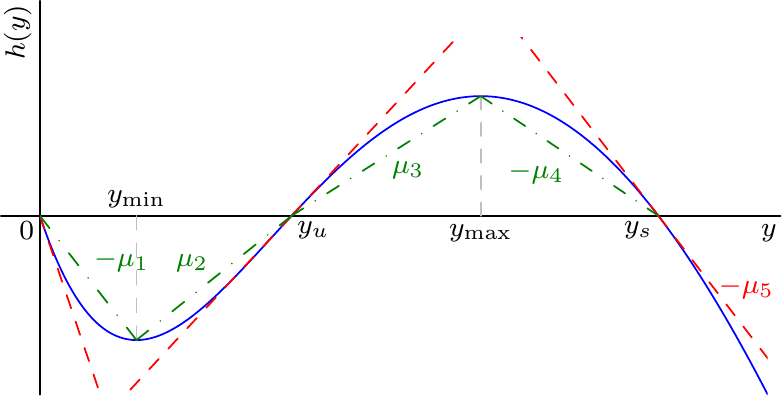}}
\caption{Plot of $h(y)$ (in solid blue) for the $(d_l = 3, d_r = 6)$ ensemble for $\epsilon = 0.47$ illustrating the properties stated above. We have dropped the dependence of all the parameters on $\epsilon$ from the notation. The tangents at $0, y_u$ and $y_s$ are shown as dashed red lines. The other lines used in bounding $h(y)$ are shown as dash-dotted green lines. The $\mu_i$'s, $i = 1, \cdots, 5$, are shown in the same color as the lines, whose absolute values of slopes they represent, that bound $h(y)$ in various regions.}
\label{fig_hx}
\end{figure}
We can lower bound $y_{\min}$ as $y_{\min} \geq \frac{1}{d_l^2d_r^2}$, and the slope $\mu_1$ as $\mu_1 \geq \frac{1}{8d_r^2} \triangleq \tilde{\mu}_1{\ }\forall{\ }\epsilon \in (\epsilon^\mrm{BP}(d_l, d_r), 1)$.

Further, $|h^\prime(y)| \le d_ld_r{\ }\forall{\ }y \in [0, 1]$. We have $h^\prime(0) = h^\prime(1) = -1$, $h^\prime(y_{\min}) = h^\prime(y_{\max}) = 0$. $h^{\prime\prime}(0) = h^{\prime\prime}(\hat{y}) = h^{\prime\prime}(1) = 0$, where
\[
\hat{y} = 1 - \Big( \frac{d_r - 2}{d_ld_r - d_l - d_r}\Big)^{\frac{1}{d_r - 1}},
\]
and
\[
h^{\prime\prime}(y) 
\begin{cases}
> 0, &y \in (0, \hat{y})\\
< 0, &y \in (\hat{y}, 1).
\end{cases}
\]
From Rolle's Theorem, $y_{\min} \le \hat{y} \le y_{\max}$. We first give some simple bounds for the $\mu_i$'s defined earlier which will be useful in the proof.

\begin{lem}[$\mu_4, \mu_5$ bounds] \label{lem_mubounds} \vspace*{3mm}
For $\epsilon \in (\epsilon^\mrm{BP}(d_l, d_r), 1)$, we have $0 < \mu_4 \le \mu_5 < 1$.
\end{lem}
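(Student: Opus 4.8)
The plan is to read almost everything off the definitions. By construction $-\mu_4 = \max\{\frac{-h(y_{\max})}{y_s - y_{\max}},\, h'(y_s)\}$, so $\mu_4 = \min\{\frac{h(y_{\max})}{y_s - y_{\max}},\, -h'(y_s)\}$, while $\mu_5 = -h'(y_s)$. Hence the middle inequality $\mu_4 \le \mu_5$ is immediate: $\mu_4$ is a minimum of two quantities, one of which is exactly $\mu_5$. It therefore remains to establish (i) $0 < \mu_5 < 1$ and (ii) that the other quantity $\frac{h(y_{\max})}{y_s - y_{\max}}$ is strictly positive; together these give $0 < \mu_4 \le \mu_5 < 1$.

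For (i) I would work with the derivative $h'(y) = f'(\epsilon, y) - 1$, where $f'(\epsilon, y) = \epsilon(d_l - 1)(d_r - 1)(1 - (1-y)^{d_r-1})^{d_l-2}(1-y)^{d_r-2}$ is strictly positive for $y \in (0,1)$. First note that $y_s \in (0,1)$: since $h(1) = f(\epsilon,1) - 1 = \epsilon - 1 < 0$ for $\epsilon < 1$, the largest root $y_s$ lies strictly below $1$ (consistently with $h < 0$ beyond $y_s$). Next, the sign of $h'(y_s)$ follows from the sign pattern of $h$ recorded in the preamble: $h$ is positive on $(y_u, y_s)$ and negative on $(y_s, 1)$ with $h(y_s) = 0$, so $y_s$ is a sign change from $+$ to $-$. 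A smooth function cannot change sign at a point where its derivative vanishes (locally it would behave like $\frac{1}{2}h''(y_s)(y - y_s)^2$, which keeps the same sign on both sides), so $h'(y_s) < 0$ strictly and $\mu_5 = -h'(y_s) > 0$. For the upper bound I would write $\mu_5 = 1 - f'(\epsilon, y_s) < 1$, which holds because $f'(\epsilon, y_s) > 0$ (all factors are positive at $y_s \in (0,1)$).

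For (ii), $h(y_{\max}) > 0$ since $y_{\max} \in (y_u, y_s)$ is the point at which $h$ attains its positive maximum, and $y_s - y_{\max} > 0$ since $y_{\max} < y_s$; hence the ratio is strictly positive, and $\mu_4$ is the minimum of two strictly positive numbers. The only mildly delicate point, and the one I would be most careful to justify, is the strict inequality $h'(y_s) < 0$; everything else is a direct consequence of the positivity of $f'(\epsilon, \cdot)$ on the open interval $(0,1)$ together with the sign structure of $h$ already summarized at the start of this appendix.
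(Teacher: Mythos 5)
Your decomposition is genuinely different from the paper's and, in two places, cleaner. The paper never uses your observation that $\mu_4 \le \mu_5$ is immediate from the min/max structure of the definitions; instead it applies the mean value theorem on $[y_{\max}, y_s]$ together with the monotonicity of $h'$ there (from $h'' < 0$) to show that the ratio $\frac{h(y_{\max})}{y_s - y_{\max}}$ is the \emph{smaller} of the two terms defining $\mu_4$ --- which is more than the lemma asks for. Likewise your bound $\mu_5 = 1 - f'(\epsilon, y_s) < 1$, using only $f'(\epsilon, \cdot) > 0$ on $(0,1)$ together with $y_s \in (0,1)$, is more direct than the paper's route ($h'$ strictly decreasing on $(y_{\max},1)$ with $h'(1) = -1$). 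The positivity of the ratio term is handled the same way in both proofs.

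However, the step you yourself flag as delicate --- $h'(y_s) < 0$ --- is where your argument has a real gap. The principle ``a smooth function cannot change sign at a point where its derivative vanishes'' is false: $y \mapsto -(y - y_s)^3$ is smooth, changes sign from $+$ to $-$ at $y_s$, and has vanishing derivative there. Your parenthetical Taylor-expansion justification silently assumes $h''(y_s) \neq 0$; without that, the quadratic term need not dominate and the sign-change argument collapses (and since $h$ is a polynomial, sign change only tells you the root has odd multiplicity, which could be $3$ or more). The assumption is true here, but it must be invoked: the appendix preamble gives $h'' < 0$ on $(\hat{y}, 1)$ and $y_{\max} \ge \hat{y}$, and $y_s > y_{\max}$, so $h''(y_s) < 0$. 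Once you have that, the cleanest finish is exactly the paper's: $h'$ is strictly decreasing on $(y_{\max}, 1)$ and $h'(y_{\max}) = 0$, hence $h'(y_s) < 0$. (Equivalently: if $h'(y_s) = 0$ with $h''(y_s) < 0$, then $y_s$ would be a strict local maximum of $h$ with $h(y_s) = 0$, contradicting $h > 0$ on $(y_u, y_s)$.) With that one repair your proof is complete.
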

\begin{proof}
Since $h^{\prime\prime}(y) < 0$ for $y \in (y_{\max}, 1)$, $h^\prime(y)$ monotonically decreases in this interval. Thus, $0 < \mu_5 = -h^\prime(y_s) < 1$. From the mean value theorem, we have $h^\prime(\xi) = -\frac{h(y_{\max})}{y_s - y_{\max}} \ge h^\prime(y_s)$ for some $\xi \in [y_{\max}, y_s]$ so that $0 < \mu_4 = \frac{h(y_{\max})}{y_s - y_{\max}} \le -h^\prime(y_s) = \mu_5 < 1$. \vspace*{3mm}
\end{proof}

The values $y_u$ and $y_s$ are referred to as the \emph{unstable} and \emph{stable} fixed points (FPs) of DE for the $(d_l, d_r)$-regular ensemble, respectively. This is because both these values satisfy $h(y) = 0$ or $y = f(\epsilon, y) = \epsilon(1 - (1 - y)^{d_r - 1})^{d_l - 1}$. The $\epsilon$ for which the FP is $y \in [0, 1]$ is given by
\[
\epsilon(y) = \frac{y}{(1 - (1 - y)^{d_r - 1})^{d_l - 1}}.
\]
The BP threshold is hence the smallest value of $\epsilon(y)$, i.e., $\epsilon^\mathrm{BP}(d_l, d_r) = \min \{\epsilon(y), y \in [0, 1]\}$. The value of $y$ that achieves this minimum is denoted $y^\mathrm{BP}$. Then, $\forall{\ }\epsilon \in [\epsilon^\mathrm{BP}(d_l, d_r), 1]$, the unstable and stable FPs are given by
\begin{align}
y_u(\epsilon) &= y \in [0, y^\mathrm{BP}] : \epsilon(y) = \epsilon, \notag \\
y_s(\epsilon) &= y \in [y^\mathrm{BP}, 1] : \epsilon(y) = \epsilon. \label{eq_yus}
\end{align}
Fig. \ref{fig_yus} plots these stable and unstable FPs.
\begin{figure}[!ht]
\centering
\figselect{\include{yusfig}}{\includegraphics{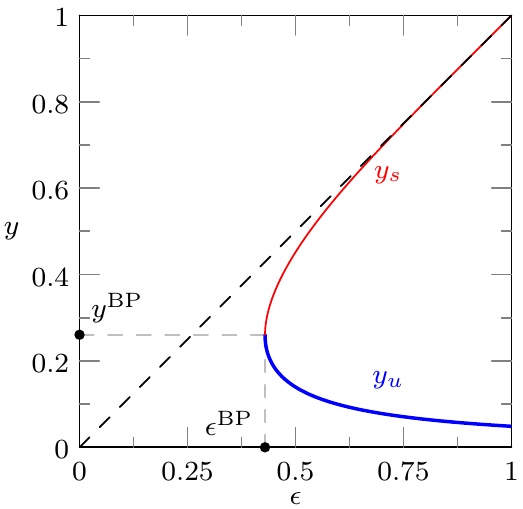}}
\caption{The unstable and stable FPs of DE for the $(d_l = 3, d_r = 6)$-regular ensemble as given by Equation \eqref{eq_yus}. $y_u(\epsilon)$ is shown as the thick blue curve and $y_s(\epsilon)$ as the thin red curve. By definition, $y_s(\epsilon) \geq y^\mathrm{BP}$ and $y_u(\epsilon) \leq y^\mathrm{BP}$. $y^\mathrm{BP}$ and $\epsilon^\mathrm{BP} \equiv \epsilon^\mathrm{BP}(d_l = 3, d_r = 6)$ are also shown. Note that $y_s(\epsilon^\mathrm{BP}) = y^\mathrm{BP} = y_u(\epsilon^\mathrm{BP})$.}
\label{fig_yus}
\end{figure}
The reason why $y_s$ is called the stable FP (and $y_u$ the unstable FP) can be explained through Fig. \ref{fig_yus}. For $\epsilon \in (\epsilon^\mathrm{BP}(d_l, d_r), 1]$, when the forward DE updates are performed, the value $y$ monotonically decreases from $1$ and converges to the first solution of the equation $h(y) = 0$, which happens to be $y_s(\epsilon)$ for $\epsilon$ in this range. Therefore, performing BP always results in the FP $y_s$ and hence the adjective ``stable''. Similarly for $y_u$, which is a solution never reached through BP, it can be shown that a small perturbation from the value of $y_u$ will result in convergence to either $y_s$ or $0$. Therefore, $y_u$'s are ``unstable'' FPs.

We can define the derivatives $y_s^\prime$ and $y_u^\prime$ of $y_s$ and $y_u$, respectively, with respect to $\epsilon$ for $\epsilon \in (\epsilon^\mathrm{BP}(d_l, d_r), 1)$. It is easy to see that $y_s^\prime$ is monotonically decreasing and $y_u^\prime$ is monotonically increasing in $\epsilon$. For details and proofs of the aforementioned properties, see \cite[Appendix II]{kud_11_tit_ldpccc}, \cite{ric_08_bok_mct}.

We are now ready to prove the proposition. Note that when $W$ is smaller than the claimed upper bound on the transition width, the claim is trivially true; i.e., the transition width cannot be longer than the window size. However, in this case, we cannot guarantee $y_1 \le \delta$. Hence, we will assume that $W$ is larger than the bound. In the following, we will often use the bound
\[
f(\epsilon, y_{i - \gamma + 1}) \leq y_i \leq f(\epsilon, y_{i + \gamma - 1}).
\]

We now define a schedule that results in a FP window constellation that dominates the FP of the parallel schedule, $\vect{y}$, for a channel erasure rate $\epsilon \in \msc{E}$. We then upper bound the actual transition width by the transition width of the dominating FP. We generate the dominating FP in steps.

\begin{enumerate}[i)]
\item \label{enum_first} Set $\vect{y}^{(0)} = \vect{1}$ and evaluate the sequence of window constellations $\{\vect{y}^{(\ell)}\}$ according to Equation \eqref{eq_wcfde}, but with the boundary conditions
\[
y_i^{(\ell)} =
\begin{cases}
y_1^{(\ell)}, &i \le 0\\
1, &i > W.
\end{cases}
\]
We have the FP in this case, $\vect{y}^A$, satisfying $\vect{y}^A \succeq \vect{y}$ by induction. Further,
\begin{align}
y_1^A &= \epsilon g(\underbrace{y_1^A, \cdots, y_1^A}_{\gamma}, y_2^A, \cdots, y_\gamma^A) \label{eq_proofi} \\
&\ge \epsilon g(\underbrace{y_1^A, \cdots, y_1^A}_{2\gamma - 1}) = f(\epsilon, y_1^A), \notag
\end{align}
so that $y_1^A \ge y_s$. Note that $y_1^A \leq y_u$ cannot happen since, starting from $1$, $y_1^A$ will equal the first solution of \eqref{eq_proofi}, which from the continuity of the DE equations is guaranteed to have a solution no smaller than $y_s$. Starting from the right end, we now count the number of sections until $y_i^A \le y_s + \Upsilon_A$ where we choose $\Upsilon_A = \frac{y_s^* - y_s}{2}$. Recall that the $*$-ed values correspond to $\epsilon^* = \epsilon^\mrm{MAP}(d_l, d_r)$. We first observe that
\[
y^A_{W - \gamma + 1} \le \epsilon g(y^A_W, \cdots, y^A_W) = f(\epsilon, y_W^A).
\]
Hence,
\begin{align}
y_W^A - y_{W - \gamma + 1}^A &\ge y_W^A - f(\epsilon, y_W^A) \notag \\
&= -h(y_W^A) \ge \mu_5(y_W^A - y_s), \notag
\end{align}
which implies that $y_{W - \gamma + 1}^A - y_s \le (1 - \mu_5)(y_W^A - y_s)$. From similar reasoning, we can show that
\begin{align}
y_{W - m(\gamma - 1)}^A - y_s &\le (1 - \mu_5)(y_{W - (m - 1)(\gamma - 1)}^A - y_s) \notag \\
&\le (1 - \mu_5)^m(y_W^A - y_s) \notag \\
&\le (1 - \mu_5)^m(y_\mrm{ub} - y_s). \notag
\end{align}
Since $\mu_5 < 1$ from Lemma \ref{lem_mubounds}, the above difference is decreasing in $m$. From the definition of $y_\mrm{ub}$ (note that this upper bound is valid even for the boundary conditions specified here) in the proof of Lemma \ref{lem_boundsfp}, it is easy to see that $y_\mrm{ub} \ge y_s$ so that the right hand side of the above chain of inequalities is non-negative. Thus, $y_{W - m(\gamma - 1)}^A \le y_s + \Upsilon_A$ if
\[
m \ge \left\lceil \frac{\ln\frac{y_\mrm{ub} - y_s}{\Upsilon_A}}{\ln\frac{1}{1 - \mu_5}} \right\rceil.
\]
Let $\tilde{\mu}_5 = \min \{\mu_5, \epsilon \in \msc{E}\}$. Then, we can write from the mean value theorem
\[
\Upsilon_A = \frac{y_s^* - y_s}{2} = \frac{\epsilon^* - \epsilon}{2}y_s^\prime(\hat{\epsilon})
\]
for some $\hat{\epsilon} \in [\epsilon, \epsilon^*]$. We can lower bound this as
\[
\Upsilon_A \ge \frac{\epsilon^* - \epsilon}{2}y_s^\prime(\epsilon^*) \ge \frac{\Delta\epsilon}{2}y_s^\prime(\epsilon^*)
\]
where the first inequality follows from the fact that $y_s^\prime$ is decreasing in $\epsilon$ in the interval $(\epsilon^\mrm{BP}(d_l, d_r), 1)$ and the second from $\epsilon^* \ge \epsilon^\mrm{BP}(d_l, d_r, \gamma)$. Therefore this width is no more than
\[
(\gamma - 1)\Big(\frac{\ln\frac{2(y_\mrm{ub} - y^\mrm{BP})}{y_s^\prime(\epsilon^*)\Delta\epsilon}}{\ln\frac{1}{1 - \tilde{\mu}_5}} + 1\Big) 
\]
sections, since $y^\mrm{BP} \le y_s$.

\item From the definition of $\Upsilon_A$, we have $y_s + \Upsilon_A = y_s^* - \Upsilon_A$. Let $i_A$ be the largest index for which $y_{i_A}^A \le y_s^* - \Upsilon_A$. Set $\vect{y}^{(0)} = \vect{y}^A$ and evaluate the sequence of window constellations $\{\vect{y}^{(\ell)}\}$ according to Equation \eqref{eq_wcfde} performing the updates only for those sections with indices $i < i_A$. Further, perform the updates for the channel erasure rate $\epsilon^*$ since we only require an upper bound on the transition width. We set the left end of the window to perform these updates to $0$, i.e., $y_i = 0{\ }\forall{\ }i \leq 0$. Let $\vect{y}^B$ denote the FP window constellation at the end of this procedure. By induction, we have $y_i^B \leq y_s^* - \Upsilon_A{\ }\forall{\ }i \le i_A$. Also, $y_{i_A - 1}^B \ge f(\epsilon^*, y_{i_A - \gamma}^B)$ so that
\begin{align}
y_{i_A - 1}^B - y_{i_A - \gamma}^B &\ge h(y_{i_A - \gamma}^B) \ge \mu_4^*(y_s^* - y_{i_A - \gamma}^B), \notag
\end{align}
where the last inequality assumes that $y_{i_A - \gamma}^B \geq y_{\max}^*$. Note that there is no loss of generality in this assumption, for if it were not true, we have that the number of sections with FP values $y_i^B$ between $y_{\max}^*$ and $y_s^*$ is smaller than the upper bound we derive in the following. The above inequality implies that 
\[
y_s^* - y_{i_A - \gamma}^B \ge \frac{y_s^* - y_{i_A - 1}^B}{1 - \mu_4^*}.
\]
Similarly, it can be shown that as long as $y_{i_A - 1 - m(\gamma - 1)}^B \ge y_{\max}^*$,
\[
y_s^* - y_{i_A - 1 - m(\gamma - 1)}^B \ge \frac{y_s^* - y_{i_A - 1 - (m - 1)(\gamma - 1)}^B}{1 - \mu_4^*}
\]
and by induction
\begin{align}
y_s^* - y_{i_A - 1 - m(\gamma - 1)}^B &\ge \frac{y_s^* - y_{i_A - 1}^B}{(1 - \mu_4^*)^m} \ge \frac{\Upsilon_A}{(1 - \mu_4^*)^m}. \notag
\end{align}
Note that since $\mu_4^* < 1$ from Lemma \ref{lem_mubounds}, the above difference is increasing in $m$. Thus, there are no more than
\[
(\gamma - 1)\Big(\frac{\ln\frac{2(y_s^* - y_{\max}^*)}{y_s^\prime(\epsilon^*)\Delta\epsilon}}{\ln\frac{1}{1 - \mu_4^*}} + 1\Big)
\]
sections with $y_i^B \in [y_{\max}^*, y_s^*]$.

\item Let $i_{\max}$ be the largest index $i$ such that $y_i^B \le y_{\max}^*$. We define $\Upsilon_B = \frac{y_u - y_u^*}{2}$ and count the number of sections with FP values $y_i^B$ between $y_u^* + \Upsilon_B$ and $y_{\max}^*$. Since $y_{i_{\max}}^B \ge f(\epsilon^*, y_{i_{\max} - (\gamma - 1)}^B)$ we have
\begin{align}
y_{i_{\max}}^B - y_{i_{\max} - (\gamma - 1)}^B &\ge h(y_{i_{\max} - (\gamma - 1)}^B) \notag \\
&\ge \mu_3^*(y_{i_{\max} - (\gamma - 1)}^B - y_u^*), \notag
\end{align}
where we again assume without loss of generality that $y_{i_{\max} - (\gamma - 1)}^B \geq y_u^*$. The above inequality implies that
\[
y_{i_{\max} - (\gamma - 1)}^B - y_u^* \le \frac{y_{i_{\max}}^B - y_u^*}{1 + \mu_3^*}.
\]
Again by induction,
\[
y_{i_{\max} - m(\gamma - 1)}^B - y_u^* \le \frac{y_{i_{\max}}^B - y_u^*}{(1 + \mu_3^*)^m}
\]
as long as $y^B_{i_{\max} - m(\gamma - 1)} \geq y_u^*$. Since $\mu_3^* > 0$, the above difference is decreasing in $m$, and consequently, $y_{i_{\max} - m(\gamma - 1)}^B \le y_u^* + \Upsilon_B$ if
\[
m \ge \left\lceil \frac{\ln\frac{y_{\max}^* - y_u^*}{\Upsilon_B}}{\ln(1 + \mu_3^*)} \right\rceil.
\]
Writing
\[
\Upsilon_B = \frac{y_u - y_u^*}{2} = -\frac{\epsilon^* - \epsilon}{2}y_u^\prime(\breve{\epsilon})
\]
from the mean value theorem for some $\breve{\epsilon} \in [\epsilon, \epsilon^*]$, we can bound this as
\[
\Upsilon_B \ge -\frac{\epsilon^* - \epsilon}{2}y_u^\prime(\epsilon^*) \ge -\frac{\Delta\epsilon}{2}y_u^\prime(\epsilon^*)
\]
where the first inequality follows because $-y_u^\prime$ is decreasing in $\epsilon$ in the interval $(\epsilon^\mrm{BP}(d_l, d_r), 1)$ and the second because $\epsilon^* \ge \epsilon^\mrm{BP}(d_l, d_r, \gamma)$. This implies that there are no more than
\[
(\gamma - 1)\Big(\frac{\ln\frac{2(y_{\max}^* - y_u^*)}{-y_u^\prime(\epsilon^*)\Delta\epsilon}}{\ln(1 + \mu_3^*)} + 1\Big)
\]
sections with FP values between $y_u^* + \Upsilon_B$ and $y_{\max}^*$.
\item From the definition of $\Upsilon_B$, we have $y_u^* + \Upsilon_B = y_u - \Upsilon_B$. Let $i_B$ be the largest index $i$ such that $y_i^B \le y_u - \Upsilon_B$. Set $\vect{y}^{(0)} = \vect{y}^B$ and evaluate $\{\vect{y}^{(\ell)}\}$ according to Equation \eqref{eq_wcfde} performing the updates only for sections with indices $i \le i_B$ with channel erasure rate $\epsilon$. Again we set the left end of the window to $0$ while performing the updates. Denote the FP obtained at the end of this procedure as $\vect{y}^C$. Clearly, $y_i^C \le y_u - \Upsilon_B{\ }\forall{\ }i \le i_B$. Since $y_{i_B - \gamma + 1}^C \le f(\epsilon, y_{i_B}^C)$, we have
\[
y_{i_B}^C - y_{i_B - \gamma + 1}^C \ge -h(y_{i_B}^C) \ge \mu_2(y_u - y_{i_B}^C)
\]
so that
\[
y_u - y_{i_B - \gamma + 1}^C \ge (1 + \mu_2)(y_u - y_{i_B}^C).
\]
Here, we assume that $y_{i_B}^C \geq y_{\min}$ in order to obtain an upper bound on the number of sections in the range $[y_{\min}, y_u]$. From similar reasoning as above, as long as $y_{i_B - m(\gamma - 1)}^C \ge y_{\min}$,
\[
y_u - y_{i_B - m(\gamma - 1)}^C \ge (1 + \mu_2)(y_u - y_{i_B - (m - 1)(\gamma - 1)}^C)
\]
and by induction
\begin{align}
y_u - y_{i_B - m(\gamma - 1)}^C &\ge (1 + \mu_2)^m(y_u - y_{i_B}^C) \notag \\
&\ge (1 + \mu_2)^m\Upsilon_B. \notag
\end{align}
Since $\mu_2 > 0$, the above difference is increasing in $m$. By letting $\tilde{\mu}_2 = \min\{\mu_2, \epsilon \in \msc{E}\}$ and noting that
\[
y_u - y_{\min} \le y^\mrm{BP} - y_{\min} \le y^\mrm{BP} - \frac{1}{d_l^2d_r^2},
\]
we have that there are no more than
\[
(\gamma - 1)\Big(\frac{\ln\frac{2(y^\mrm{BP} - \frac{1}{d_l^2d_r^2})}{-y_u^\prime(\epsilon^*)\Delta\epsilon}}{\ln(1 + \tilde{\mu}_2)} + 1\Big)
\]
sections with FP values in the interval $[y_{\min}, y_u]$.
\item Let $i_C$ be the largest index $i$ such that $y_i^C \le y_{\min}$. Proceeding as above, we have
\begin{align}
y_{i_C}^C - y_{i_C - \gamma + 1}^C &\ge -h(y_{i_C}^C) \ge \mu_1y_{i_C}^C \notag
\end{align}
and by induction,
\[
y_{i_C - m(\gamma - 1)}^C \le (1 - \mu_1)^my_{i_C}^C.
\]
Thus, between $\delta_0$ and $y_{\min}$, there are no more than
\[
(\gamma - 1)\Big(\frac{\ln\frac{y^\mrm{BP}}{\delta_0}}{\ln\frac{1}{1 - \tilde{\mu}_1}} + 1\Big)
\]
sections with FP values in the interval $[\delta_0, y_{\min}]$, since $y_{\min} \le y^\mrm{BP}$ and $\mu_1 \ge \tilde{\mu}_1$.

\begin{figure*}[t]
\begin{align}
\hat{\mathsf{C}} &= \frac{\ln\frac{1}{\tilde{\psi}}}{\ln(d_l - 1)} + \frac{\ln\frac{y^\mrm{BP}}{\delta_0}}{\ln\frac{1}{1 - \tilde{\mu}_1}} + \frac{\ln\frac{2(y^\mrm{BP} - \frac{1}{d_l^2d_r^2})}{-y_u^\prime(\epsilon^*)}}{\ln(1 + \tilde{\mu}_2)} + \frac{\ln\frac{2(y_{\max}^* - y_u^*)}{-y_u^\prime(\epsilon^*)}}{\ln(1 + \mu_3^*)} + \frac{\ln\frac{2(y_s^* - y_{\max}^*)}{y_s^\prime(\epsilon^*)}}{\ln\frac{1}{1 - \mu_4^*}} + \frac{\ln\frac{2(y_\mrm{ub} - y^\mrm{BP})}{y_s^\prime(\epsilon^*)}}{\ln\frac{1}{1 - \tilde{\mu}_5}} + 6. \label{eq_constc}
\end{align}
\hrule
\end{figure*}

\item \label{enum_last} Let $i_D$ be the largest index $i$ such that $y_i^C \le \delta_0$. From Lemma \ref{lem_detail}, we know that the tail decays doubly-exponentially for $i \le i_D$. From \eqref{eq_detail}, we have $y_{i_D - m(\gamma - 1)}^C \le \Psi e^{-\psi(d_l - 1)^m} \le \tilde{\Psi}e^{-\tilde{\psi}(d_l - 1)^m}$, where
\begin{align}
\Psi &= \delta_0\epsilon^{-\frac{1}{d_l - 2}} \notag \\
&\le \delta_0(\frac{\epsilon^\mrm{BP}(d_l, d_r, \gamma) + \epsilon^\mrm{BP}(d_l, d_r)}{2})^{-\frac{1}{d_l - 2}} \triangleq \tilde{\Psi} \notag
\end{align}
and
\[
\psi = \frac{1}{d_l - 2}\ln\frac{1}{\epsilon} \ge \frac{1}{d_l - 2}\ln\frac{1}{\epsilon^\mrm{MAP}(d_l, d_r)} \triangleq \tilde{\psi}.
\]
Thus there are no more than
\[
(\gamma - 1)\Big(\frac{1}{\ln(d_l - 1)}\ln\ln\frac{\tilde{\Psi}}{\delta} + \frac{\ln\frac{1}{\tilde{\psi}}}{\ln(d_l - 1)} + 1\Big)
\]
sections with $y_i^C \in [\delta, \delta_0]$.
\end{enumerate}

Finally, collecting all these terms, we conclude that the transition width of the FP obtained from the procedure highlighted in the steps \ref{enum_first}) through \ref{enum_last}) is upper bounded by
\[
\hat{\tau}(\epsilon, \delta) = (\gamma - 1)\Big(\mathsf{A}\ln\ln\frac{\mathsf{D}}{\delta} + \mathsf{B}\ln\frac{1}{\Delta\epsilon} + \hat{\mathsf{C}}\Big)
\]
where the constants $\mathsf{A}, \mathsf{B}$ and $\mathsf{D}$ are as follows
\begin{align}
\mathsf{A} &= \frac{1}{\ln(d_l - 1)}, \notag \\
\mathsf{B} = \frac{1}{\ln(1 + \tilde{\mu}_2)} + &\frac{1}{\ln(1 + \mu_3^*)} + \frac{1}{\ln(\frac{1}{1 - \mu_4^*})} + \frac{1}{\ln(\frac{1}{1 - \tilde{\mu}_5})}, \notag \\
&\hspace*{5mm}\mathsf{D} = \tilde{\Psi}. \notag
\end{align}
The constant $\hat{\mathsf{C}}$ is as given in Equation \eqref{eq_constc}. Note that these constants depend only on the ensemble parameters $d_l, d_r$ and $\gamma$. Since it is clear that the FP obtained through the procedure in steps \ref{enum_first}) through \ref{enum_last}) above dominates pointwise the first window configuration FP of forward DE with a window of size $W$ for channel erasure rate $\epsilon$, we can guarantee that the transition width is upper bounded by the above expression. This completes the proof.

\section{Proof of Proposition \ref{prop_swthr}} \label{app_swthr}
We start with the first window configuration FP of forward DE when the channel erasure rate is $\epsilon$ and show that this FP dominates the $c^\text{th}$ window configuration FP of forward DE for every $c$ for a smaller channel erasure rate $\upsilon \leq \epsilon$. To prove this, it suffices to show that the FP $\vect{\omega}$ defined in Lemma \ref{lem_hatx} for channel erasure rate $\upsilon$ is dominated pointwise by the first window configuration FP for channel erasure rate $\epsilon$. This establishes $\upsilon$ as being a lower bound on the WD threshold $\epsilon^\mrm{WD}$.

Set $\vect{y}^{(0)} = \vect{y}_{\{1\}}^{(\infty)}$, the first window configuration FP of forward DE for channel erasure rate $\epsilon$. Evaluate $\{\vect{y}^{(\ell)}\}_{\ell = 1}^{\infty}$ according to
\[
\vect{y}^{(\ell)} = \Omega(\vect{y}^{(\ell - 1)})
\]
where $\Omega(\cdot)$ is as defined in Lemma \ref{lem_hatx}, but for channel erasure rate $\upsilon = \epsilon - \Delta\epsilon$. Then, the following are true:
\[
y_i^{(0)} = 
\begin{cases}
\epsilon g(\underbrace{0, \cdots, 0}_{\gamma - i}, y_1^{(0)}, \cdots, y_{i + \gamma - 1}^{(0)}), &1 \le i < \gamma \\
\epsilon g(y_{i - \gamma + 1}^{(0)}, \cdots, y_{i + \gamma - 1}^{(0)}), &\gamma \le i \le W
\end{cases}
\]
and
\[
y_i^{(1)} = 
\begin{cases}
\upsilon g(\underbrace{y_1^{(0)}, \cdots, y_1^{(0)}}_{\gamma - i}, y_1^{(0)}, \cdots, y_{i + \gamma - 1}^{(0)}), &1 \le i < \gamma\\
\upsilon g(y_{i - \gamma + 1}^{(0)}, \cdots, y_{i + \gamma - 1}^{(0)}), &\gamma \le i \le W.
\end{cases}
\]
For $\gamma \le i \le W$,
\begin{equation} \label{eq_fpineq}
y_i^{(1)} = \frac{\upsilon}{\epsilon}y_i^{(0)} = \frac{\epsilon - \Delta\epsilon}{\epsilon}y_i^{(0)} \leq y_i^{(0)}. 
\end{equation}
Let us write
\[
g_i(\sigma, y_1, \cdots, y_{i + \gamma - 1}) \triangleq g(\underbrace{\sigma, \cdots, \sigma}_{\gamma - i}, y_1, \cdots, y_{i + \gamma - 1})
\]
and
\[
G_i(\epsilon, \sigma, y_1, \cdots, y_{i + \gamma - 1}) \triangleq \epsilon g_i(\sigma, y_1, \cdots, y_{i + \gamma - 1})
\]
for $1 \le i < \gamma$. For $i$ in this range, consider
\begin{align}
y_i^{(0)} - y_i^{(1)} &= G_i(\epsilon, 0, y_1^{(0)}, \cdots, y_{i + \gamma - 1}^{(0)}) \notag \\
&\hspace*{5mm} - G_i(\upsilon, y_1^{(0)}, y_1^{(0)}, \cdots, y_{i + \gamma - 1}^{(0)}) \notag \\
&= \Big[ G_i(\epsilon, 0, y_1^{(0)}, \cdots, y_{i + \gamma - 1}^{(0)}) \notag \\
&\hspace*{7mm} - G_i(\upsilon, 0, y_1^{(0)}, \cdots, y_{i + \gamma - 1}^{(0)}) \Big] \notag \\
&\hspace*{4mm} - \Big[ G_i(\upsilon, y_1^{(0)}, y_1^{(0)}, \cdots, y_{i + \gamma - 1}^{(0)}) \notag \\
&\hspace*{11mm} - G_i(\upsilon, 0, y_1^{(0)}, \cdots, y_{i + \gamma - 1}^{(0)}) \Big] \notag \\
&\stackrel{(a)}{=} \Delta\epsilon\frac{\partial G_i}{\partial\epsilon}\Big|_{\xi, \sigma = 0, \vect{y}^{(0)}} - y_1^{(0)}\frac{\partial G_i}{\partial\sigma}\Big|_{\upsilon, \zeta, \vect{y}^{(0)}} \label{eq_lbdiff}
\end{align}
where $\xi \in [\upsilon, \epsilon]$ and $\zeta \in [0, y_1^{(0)}]$. Here, $(a)$ follows from the mean value theorem. We have
\[
\frac{\partial G_i}{\partial\epsilon}\Big|_{\xi, \sigma = 0, \vect{y}^{(0)}} = g_i(\sigma, y_1, \cdots, y_{i + \gamma - 1})\Big|_{\xi, \sigma = 0, \vect{y}^{(0)}} = \frac{y_i^{(0)}}{\epsilon}.
\]
Since $\frac{\partial G_i}{\partial \sigma} = \epsilon \frac{\partial g_i}{\partial \sigma}$, we focus on $g_i$. Expanding out the expression for $g_i$, it can be written as
\begin{align}
g_i &= \Big[1 - \frac{1}{\gamma}\Big((\alpha_{i, 1} - \frac{\gamma - i}{\gamma}\sigma)^{d_r - 1} + \cdots  \notag \\
&\hspace*{4mm} + (\alpha_{i, \gamma - i} - \frac{1}{\gamma}\sigma)^{d_r - 1} + \alpha_{i, \gamma - i + 1}^{d_r - 1} + \cdots + \alpha_{i, \gamma}^{d_r - 1}\Big)\Big]^{d_l - 1} \notag
\end{align}
where
\[
\alpha_{i, j + 1} =
\begin{cases}
1 - \frac{1}{\gamma}\sum_{c = 1}^{i + j}y_c, & 0 \leq j \leq \gamma - i - 1\\ 
1 - \frac{1}{\gamma}\sum_{c = i + j - \gamma + 1}^{i + j}y_c, & \gamma - i \leq j \leq \gamma - 1. 
\end{cases}
\]
Clearly, $0 \leq \alpha_{i, j + 1} \leq 1{\ }\forall{\ }1 \leq i < \gamma, 0 \leq j \leq \gamma - 1$. Therefore,
\begin{align}
\frac{\partial g_i}{\partial\sigma} &= \frac{(d_l - 1)(d_r - 1)}{\gamma}g_i^{\frac{d_l - 2}{d_l - 1}}\Big[(\alpha_{i, 1} - \frac{\gamma - i}{\gamma}\sigma)^{d_r - 2}\frac{\gamma - i}{\gamma} \notag \\
&\hspace*{10mm} + \cdots + (\alpha_{i, \gamma - i} - \frac{1}{\gamma}\sigma)^{d_r - 2}\frac{1}{\gamma}\Big] \notag \\
&\stackrel{(a)}{\leq} \frac{d_ld_r}{\gamma^2}g_i^{\frac{d_l - 2}{d_l - 1}}\frac{(\gamma - i)(\gamma - i + 1)}{2} \le \frac{d_ld_r}{2}g_i^{\frac{d_l - 2}{d_l - 1}}. \notag
\end{align}
Here, $(a)$ holds because $0 \leq (\alpha_{i, j + 1} - \frac{\gamma - i - j}{\gamma}\sigma) \leq 1{\ }\forall{\ }0\leq j \leq \gamma - i - 1$. This implies that
\begin{align}
\frac{\partial G_i}{\partial\sigma}\Big|_{\upsilon, \zeta, \vect{y}^{(0)}} &\leq \frac{d_ld_r}{2}g_i^{\frac{d_l - 2}{d_l - 1}}\epsilon\Big|_{\upsilon, \zeta, \vect{y}^{(0)}} \notag \\
&= \frac{d_ld_r}{2}\upsilon^{\frac{1}{d_l - 1}} \notag \\
&\hspace*{7mm}\times \Big(\upsilon g(\underbrace{\zeta, \cdots, \zeta}_{\gamma - i}, y_1^{(0)}, \cdots, y_{i + \gamma - 1}^{(0)})\Big)^{\frac{d_l - 2}{d_l - 1}} \notag \\
&\stackrel{(b)}{\leq} \frac{d_ld_r}{2}\upsilon^{\frac{1}{d_l - 1}} \notag \\
&\hspace*{2mm}\times \Big(\upsilon g(\underbrace{y_1^{(0)}, \cdots, y_1^{(0)}}_{\gamma - i}, y_1^{(0)}, \cdots, y_{i + \gamma - 1}^{(0)})\Big)^{\frac{d_l - 2}{d_l - 1}} \notag \\
&= \frac{d_ld_r}{2}\upsilon^{\frac{1}{d_l - 1}}(y_i^{(1)})^{\frac{d_l - 2}{d_l - 1}} \notag \\
&\stackrel{(c)}{\le} \frac{d_ld_r}{2}\upsilon^{\frac{1}{d_l - 1}}(y_\gamma^{(1)})^{\frac{d_l - 2}{d_l - 1}} \le \frac{d_ld_r}{2}(y_\gamma^{(1)})^{\frac{d_l - 2}{d_l - 1}} \notag \\
&\stackrel{\eqref{eq_fpineq}}{\le} \frac{d_ld_r}{2}y_\gamma^{(0)^{\frac{d_l - 2}{d_l - 1}}}. \notag
\end{align}
Here, the inequality labeled $(b)$ is true because $\zeta \leq y_1^{(0)}$, $(c)$ follows from the observation that $y^{(1)}_i \leq y_{i + 1}^{(1)}, i \geq 1$, which is in turn true since $\vect{y}^{(0)}$ and $\pi(\vect{y}^{(0)})$ were non-decreasing. Substituting back in \eqref{eq_lbdiff}, we have for $1 \leq i < \gamma$
\[
y_i^{(0)} - y_i^{(1)} \ge \Delta\epsilon\frac{y_1^{(0)}}{\epsilon} - y_1^{(0)}\frac{d_ld_r}{2}y_\gamma^{(0)^{\frac{d_l - 2}{d_l - 1}}}.
\]
Thus if $\frac{\Delta\epsilon}{\epsilon} \ge \frac{d_ld_r}{2}y_\gamma^{(0)^{\frac{d_l - 2}{d_l - 1}}}$, $y_i^{(0)} \geq y_i^{(1)}{\ }\forall{\ }i \geq 1$, and hence $\pi(\vect{y}^{(0)})$ dominates $\vect{y}^{(1)}$ pointwise. Recall that
\[
\pi(y_i) =
\begin{cases}
y_1, & i < 1\\
y_i, & i \geq 1.
\end{cases}
\]
It therefore follows by induction that the limiting constellation $\vect{y}^{(\infty)}$ exists, and is also dominated by $\pi(\vect{y}^{(0)})$. It is clear that $\vect{y}^{(\infty)}$ satisfies
\[
\vect{y}^{(\infty)} = \Omega(\vect{y}^{(\infty)}).
\]
From Lemma \ref{lem_hatx}, $\vect{y}^{(\infty)} \succeq \vect{\omega}$ and hence $y_1^{(\infty)} \geq \hat{x}$.

If the window size is chosen to be $W \ge \hat{W}_{\min}(\delta) + \gamma - 1 \triangleq W_{\min}(\delta)$, then for the first window, we can guarantee $y_\gamma^{(0)} \le \delta$ for some $\delta < \delta_0$ for all channel erasure rates smaller than $\epsilon^\mrm{FW} \equiv \epsilon^\mrm{FW}(d_l, d_r, \gamma, W - \gamma + 1, \delta)$. From the above argument, it follows that we can ensure $\hat{x} \leq \delta$ for all erasure rates smaller than $\epsilon^\mrm{FW}\Big(1 - \frac{d_ld_r}{2}\delta^{\frac{d_l - 2}{d_l - 1}}\Big)$. As long as
\[
\delta <  \delta_* \triangleq \Big(\frac{2}{d_ld_r}\Big)^{\frac{d_l - 1}{d_l - 2}} < \Big(\frac{1}{d_r - 1}\Big)^{\frac{d_l - 1}{d_l - 2}} = \delta_0,
\]
this erasure rate is a non-trivial lower bound on the WD threshold $\epsilon^\mrm{WD}$.

\section*{Acknowledgment}
The authors thank the anonymous reviewer for help to improve the presentation of the paper. A. R. Iyengar would like to thank S. Kudekar for pointing out Lemma \ref{lem_detail} and for helpful suggestions in proving parts of Proposition \ref{prop_tw}.

\twobibs{
\bibliographystyle{IEEEtran}
\bibliography{/media/sg/work/ucsd/Bib/mybib}

\begin{thebibliography}{10}
\providecommand{\url}[1]{#1}
\csname url@samestyle\endcsname
\providecommand{\newblock}{\relax}
\providecommand{\bibinfo}[2]{#2}
\providecommand{\BIBentrySTDinterwordspacing}{\spaceskip=0pt\relax}
\providecommand{\BIBentryALTinterwordstretchfactor}{4}
\providecommand{\BIBentryALTinterwordspacing}{\spaceskip=\fontdimen2\font plus
\BIBentryALTinterwordstretchfactor\fontdimen3\font minus
  \fontdimen4\font\relax}
\providecommand{\BIBforeignlanguage}[2]{{%
\expandafter\ifx\csname l@#1\endcsname\relax
\typeout{** WARNING: IEEEtran.bst: No hyphenation pattern has been}%
\typeout{** loaded for the language `#1'. Using the pattern for}%
\typeout{** the default language instead.}%
\else
\language=\csname l@#1\endcsname
\fi
#2}}
\providecommand{\BIBdecl}{\relax}
\BIBdecl

\bibitem{iye_11_isit_wd}
A.~R. Iyengar, P.~H. Siegel, R.~L. Urbanke, and J.~K. Wolf, ``Windowed decoding
  of spatially coupled codes,'' in \emph{Proc. IEEE Int. Symp. Inf. Theory},
  St. Petersburg, Russia, 31 Jul.-5 Aug. 2011, pp. 2552--2556.

\bibitem{gal_63_the_ldpc}
R.~G. Gallager, \emph{Low Density Parity Check Codes}.\hskip 1em plus 0.5em
  minus 0.4em\relax Cambridge, Massachusetts: MIT Press, 1963.

\bibitem{ber_93_icc_turbo}
C.~Berrou, A.~Glavieux, and P.~Thitimajshima, ``Near {S}hannon limit
  error-correcting coding and decoding: Turbo-codes. 1,'' in \emph{Proc. IEEE
  Int. Conf. Comm.}, vol.~2, Geneva, Switzerland, May 23-26 1993, pp.
  1064--1070.

\bibitem{lub_01_tit_erco}
M.~Luby, M.~Mitzenmacher, M.~Shokrollahi, and D.~Spielman, ``Efficient erasure
  correcting codes,'' \emph{IEEE Trans. Inf. Theory}, vol.~47, no.~2, pp.
  569--584, Feb. 2001.

\bibitem{lub_01_tit_irrldpc}
------, ``Improved low-density parity-check codes using irregular graphs,''
  \emph{IEEE Trans. Inf. Theory}, vol.~47, no.~2, pp. 585 --598, Feb. 2001.

\bibitem{pea_88_bok_reason}
J.~Pearl, \emph{Probabilistic Reasoning in Intelligent Systems: Networks of
  Plausible Inference}.\hskip 1em plus 0.5em minus 0.4em\relax Morgan Kaufmann,
  San Francisco, 1988.

\bibitem{ric_01_tit_capldpc}
T.~Richardson and R.~Urbanke, ``The capacity of low-density parity-check codes
  under message-passing decoding,'' \emph{IEEE Trans. Inf. Theory}, vol.~47,
  no.~2, pp. 599--618, Feb 2001.

\bibitem{amr_06_the_ldpcopt}
A.~Amraoui, ``Asymptotic and finite-length optimization of {LDPC} codes,''
  Ph.D. dissertation, EPFL, Switzerland, 2006.

\bibitem{amr_09_url_ldpcopt}
A.~Amraoui and R.~Urbanke, ``{LDPCO}pt,'' Accessed Jan. 15, 2012,
  \texttt{http://ipgdemos.epfl.ch/ldpcopt/}.

\bibitem{fel_99_tit_ldpccc}
A.~J. Felstrom and K.~Zigangirov, ``Time-varying periodic convolutional codes
  with low-density parity-check matrix,'' \emph{IEEE Trans. Inf. Theory},
  vol.~45, no.~6, pp. 2181--2191, Sep. 1999.

\bibitem{eng_99_ppi_ldpccc}
K.~Engdahl and K.~S. Zigangirov, ``On the theory of low-density convolutional
  codes i,'' \emph{Problemy Peredachi Informatsii}, vol.~35, pp. 12--27, 1999.

\bibitem{eng_99_lnc_ldpccc}
K.~Engdahl, M.~Lentmaier, and K.~Zigangirov, ``On the theory of low-density
  convolutional codes,'' in \emph{Applied Algebra, Algebraic Algorithms and
  Error-Correcting Codes}, ser. Lecture Notes in Computer Science.\hskip 1em
  plus 0.5em minus 0.4em\relax Springer Berlin / Heidelberg, 1999, vol. 1719,
  pp. 77--86.

\bibitem{len_01_ppi_ldpccc}
M.~Lentmaier, D.~V. Truhachev, and K.~S. Zigangirov, ``To the theory of
  low-density convolutional codes. {II},'' \emph{Problems of Information
  Transmission}, vol.~37, pp. 288--306, 2001.

\bibitem{tan_04_tit_ldpccirc}
R.~Tanner, D.~Sridhara, A.~Sridharan, T.~Fuja, and D.~Costello, ``{LDPC} block
  and convolutional codes based on circulant matrices,'' \emph{IEEE Trans. Inf.
  Theory}, vol.~50, no.~12, pp. 2966 -- 2984, Dec. 2004.

\bibitem{sri_04_all_ldpcccconv}
A.~Sridharan, M.~Lentmaier, D.~J. Costello, and K.~S. Zigangirov, ``Convergence
  analysis of a class of {LDPC} convolutional codes for the erasure channel,''
  in \emph{Proc. 42nd Annual Allerton Conf. on Communication, Control and
  Computing}, Monticello, IL, USA, Sep. 29-Oct. 1, 2004, pp. 953--962.

\bibitem{len_10_tit_ldpcccthresh}
M.~Lentmaier, A.~Sridharan, D.~J. Costello, and K.~S. Zigangirov, ``Iterative
  decoding threshold analysis for {LDPC} convolutional codes,'' \emph{IEEE
  Trans. Inf. Theory}, vol.~56, no.~10, pp. 5274--5289, Oct. 2010.

\bibitem{len_09_ita_asympreg}
M.~Lentmaier, G.~P. Fettweis, K.~S. Zigangirov, and D.~J. Costello,
  ``Approaching capacity with asymptotically regular {LDPC} codes,'' in
  \emph{Proc. Inf. Theory and Applications}, San Diego, California, 2009.

\bibitem{tho_03_jpl_proto}
J.~Thorpe, ``Low-density parity-check {(LDPC)} codes constructed from
  protographs,'' JPL INP, Tech. Rep., Tech. Rep., Aug. 2003.

\bibitem{kud_11_tit_ldpccc}
S.~Kudekar, T.~Richardson, and R.~L. Urbanke, ``Threshold saturation via
  spatial coupling: Why convolutional {LDPC} ensembles perform so well over the
  {BEC},'' \emph{IEEE Trans. Inf. Theory}, vol.~57, no.~2, pp. 803--834, Feb.
  2011.

\bibitem{len_10_isit_pggldpc}
M.~Lentmaier and G.~Fettweis, ``On the thresholds of generalized {LDPC}
  convolutional codes based on protographs,'' in \emph{Proc. IEEE Int. Symp.
  Inf. Theory}, Austin, TX, USA, Jun. 13-18, 2010, pp. 709--713.

\bibitem{kud_10_ist_bp4map}
S.~Kudekar, C.~Measson, T.~J. Richardson, and R.~L. Urbanke, ``Threshold
  saturation on {BMS} channels via spatial coupling,'' \emph{CoRR}, vol.
  abs/1004.3742, 2010.

\bibitem{kud_12_arx_scbms}
S.~Kudekar, T.~Richardson, and R.~L. Urbanke, ``Spatially coupled ensembles
  universally achieve capacity under belief propagation,'' \emph{CoRR}, vol.
  abs/1201.2999, 2012.

\bibitem{has_10_itw_coupled}
H.~Hasani, N.~Macris, and R.~Urbanke, ``Coupled graphical models and their
  thresholds,'' in \emph{2010 IEEE Information Theory Workshop}, Dublin,
  Ireland, Aug. 30-Sep. 3, 2010.

\bibitem{are_11_itw_ldgmcoup}
V.~Aref and R.~Urbanke, ``Universal rateless codes from coupled {LT} codes,''
  in \emph{Information Theory Workshop (ITW), 2011 IEEE}, Oct. 2011, pp. 277
  --281.

\bibitem{uch_10_arx_nbldpccc}
H.~Uchikawa, K.~Kasai, and K.~Sakaniwa, ``Terminated {LDPC} convolutional codes
  over ${GF}(2^p)$,'' \emph{CoRR}, vol. abs/1010.0060, 2010.

\bibitem{pap_10_itw_wed}
M.~Papaleo, A.~R. Iyengar, P.~H. Siegel, J.~K. Wolf, and G.~Corazza, ``Windowed
  erasure decoding of {LDPC} convolutional codes,'' in \emph{2010 IEEE
  Information Theory Workshop}, Cairo, Egypt, Jan. 2010, pp. 78--82.

\bibitem{iye_10_icc_ldpccc}
A.~R. Iyengar, M.~Papaleo, G.~Liva, P.~H. Siegel, J.~K. Wolf, and G.~E.
  Corazza, ``Protograph-based {LDPC} convolutional codes for correlated erasure
  channels,'' in \emph{Proc. IEEE Int. Conf. Comm.}, Cape Town, South Africa,
  May 2010, pp. 1--6.

\bibitem{iye_12_tit_ldpccc}
A.~R. Iyengar, M.~Papaleo, P.~H. Siegel, J.~K. Wolf, A.~Vanelli-Coralli, and
  G.~E. Corazza, ``Windowed decoding of protograph-based {LDPC} convolutional
  codes over erasure channels,'' \emph{IEEE Trans. Inf. Theory}, vol.~58,
  no.~4, pp. 2303 --2320, Apr. 2012.

\bibitem{olm_10_pvt_finlenscc}
P.~Olmos and R.~Urbanke, ``Scaling behavior of convolutional {LDPC} ensembles
  over the {BEC},'' in \emph{Proc. IEEE Int. Symp. Inf. Theory}, St.
  Petersburg, Russia, 31 Jul.-5 Aug. 2011, pp. 1816--1820.

\bibitem{len_05_tit_brkout}
M.~Lentmaier, D.~Truhachev, K.~Zigangirov, and D.~Costello, ``An analysis of
  the block error probability performance of iterative decoding,'' \emph{IEEE
  Trans. Inf. Theory}, vol.~51, no.~11, pp. 3834--3855, Nov. 2005.

\bibitem{hu_05_tit_peg}
X.-Y. Hu, E.~Eleftheriou, and D.~Arnold, ``Regular and irregular progressive
  edge-growth {T}anner graphs,'' \emph{IEEE Trans. Inf. Theory}, vol.~51,
  no.~1, pp. 386--398, Jan. 2005.

\bibitem{tia_04_toc_ace}
T.~Tian, C.~Jones, J.~Villasenor, and R.~Wesel, ``Selective avoidance of cycles
  in irregular {LDPC} code construction,'' \emph{IEEE Trans. Commun.}, vol.~52,
  no.~8, pp. 1242--1247, Aug. 2004.

\bibitem{ric_08_bok_mct}
T.~Richardson and R.~Urbanke, \emph{Modern Coding Theory}.\hskip 1em plus 0.5em
  minus 0.4em\relax Cambridge University Press, New York, 2008.

\end{thebibliography}
}
{

}

\begin{IEEEbiographynophoto}{Aravind R. Iyengar} (S'09-M'12) 
 received his B.Tech degree in Electrical Engineering from the Indian Institute of Technology Madras, Chennai, in 2007; his M.S. and Ph.D. degrees in Electrical Engineering from the University of California in San Diego, La Jolla, where he was affiliated with the Center for Magnetic Recording Research, in 2009 and 2012 respectively. He is currently with Qualcomm Technologies Inc., Santa Clara, where he is involved in the design of baseband modems. In 2006, he was a visiting student intern at the \'{E}cole Nationale Sup\'{e}rieure de l'Electronique et de ses Applications (ENSEA), Cergy, France. He was a visiting doctoral student at the Communication Theory Laboratory at the \'{E}cole Polytechnique F\'{e}d\'{e}rale de Lausanne (EPFL), Lausanne, Switzerland in 2010. His research interests are in the areas of information and coding theory, and in signal processing and wireless communications.

A. R. Iyengar was the recipient of the Sheldon Schultz Prize for Excellence in Graduate Student Research at the University of California, San Diego in 2012.
\end{IEEEbiographynophoto}

\begin{IEEEbiographynophoto}{Paul H. Siegel} (M'82-SM'90-F'97) received the S.B. and Ph.D. degrees in mathematics from  the Massachusetts Institute of Technology (MIT), Cambridge, in 1975 and 1979, respectively.

He held a Chaim Weizmann Postdoctoral Fellowship at the Courant Institute, New York University. He was with the IBM Research Division in San Jose, CA, from 1980 to 1995. He joined the faculty at the University of California, San Diego in July 1995, where he is currently Professor o
f Electrical and Computer Engineering in the Jacobs School of Engineering. He is affiliated with the Center for Magnetic Recording Research where he holds an endowed chair and served as Director from 2000 to 2011. His primary research interests lie in the areas of information theory and communications, particularly coding and modulation techniques, with applications to digital data storage and transmission.

Prof. Siegel was a member of the Board of Governors of the IEEE Information Theory Society from 1991 to 1996 and from 2009 to 2011. He was re-elected for another 3-year term in 2012. He served as Co-Guest Editor of the May 1991 Special Issue on ``Coding for Storage Devices'' of the IEEE Transactions on Information Theory. He served the same Transactions as Associate Editor for Coding Techniques from 1992 to 1995, and as Editor-in-Chief from July 2001 to July 2004. He was also Co-Guest Editor of the May/September 2001 two-part issue on ``The Turbo Principle: From Theory to Practice'' of the IEEE Journal on Selected Areas in Communications.

Prof. Siegel was co-recipient, with R. Karabed, of the 1992 IEEE Information Theory Society Paper Award and shared the 1993 IEEE Communications Society Leonard G. Abraham Prize Paper Award with B.H. Marcus and J.K. Wolf. With J.B. Soriaga and H.D. Pfister, he received the 2007 Best Paper Award in Signal Processing and Coding for Data Storage from the Data Storage Technical Committee of the IEEE Communications Society. He holds several patents in the area of coding and detection, and was named a Master Inventor at IBM Research in 1994.  He is an IEEE Fellow and a member of the National Academy of Engineering.
\end{IEEEbiographynophoto}

\begin{IEEEbiographynophoto}{R\"{u}diger L. Urbanke} received the Diplomingenieur degree from the Vienna Institute of Technology, Vienna, Austria, in 1990 and the M.Sc. and PhD degrees in electrical engineering from Washington University, St. Louis, MO, in 1992 and 1995 respectively.

From 1995 to 1999, he held a position at the Mathematics of Communications Department at Bell Labs. Since November 1999, he has been a faculty member at the School of Computer \& Communication Sciences of EPFL, Lausanne, Switzerland, where he is the head of the Communications Theory Lab as well as the head of the Doctoral Program of the School of Computer and Communication Sciences (comprising roughly 250 PhD students).

Dr. Urbanke's research is focused on the analysis and design of coding systems and, more generally, graphical models.

Dr. Urbanke is a recipient of a Fulbright Scholarship. From 2000-2004 he was an Associate Editor of the IEEE Transactions on Information Theory and he has been elected in October 2012 to the Board of Governors of IEEE Information Theory Society. He is also currently on the board of the series ``Foundations and Trends in Communications and Information Theory.'' He is a co-recipient of the IEEE Information Theory Society 2002 Best Paper Award and a co-recipient of the 2011 IEEE Kobayashi Computers and Communications Award. He is co-author of the book ``Modern Coding Theory'' published by Cambridge University Press.
\end{IEEEbiographynophoto}

\begin{IEEEbiographynophoto}{Jack Keil Wolf} (S'54-M'60-F'73-LF'97) received the B.S.E.E. degree from the University of Pennsylvania Philadelphia, in 1956, and the M.S.E., M.A., and Ph.D. degrees from Princeton University, Princeton, NJ, in 1957, 1958, and 1960, respectively. He was the Stephen O. Rice Professor of Electrical and Computer Engineering and a member of the Center for Magnetic Recording Research at the University of California-San Diego, La Jolla. He was a member of the Electrical Engineering Department at New York University from 1963 to 1965, and the Polytechnic Institute of Brooklyn from 1965 to 1973. He was Chairman of the Department of Electrical and Computer Engineering at the University of Massachusetts, Boston, from 1973 to 1975, and he was Professor there from 1973 to 1984. From 1984 to 2011, he was a Professor of Electrical and Computer Engineering and a member of the Center for Magnetic Recording Research at the University of California-San Diego. He also held a part-time appointment at Qualcomm, Inc., San Diego. From 1971 to 1972, he was an NSF Senior Postdoctoral Fellow, and from 1979 to 1980, he held a Guggenheim Fellowship. His most recent research interest was in signal processing for storage systems.

Dr. Wolf was elected to the National Academy of Engineering in 1993. He was the recipient of the 1990 E. H. Armstrong Achievement Award of the IEEE Communications Society and was co-recipient with D. Slepian of the 1975 IEEE Information Theory Group Paper Award for the paper ``Noiseless coding for correlated information sources.'' He shared the 1993 IEEE Communications Society Leonard G. Abraham Prize Paper Award with B. Marcus and P.H. Siegel for the paper ``Finite-State Modulation Codes for Data Storage.'' He served on the Board of Governors of the IEEE Information Theory Group from 1970 to 1976 and from 1980 to 1986. Dr. Wolf was President of the IEEE Information Theory Group in 1974. He was International Chairman of Committee C of URSI from 1980 to 1983. He was the recipient of the 1998 IEEE Koji Kobayashi Computers and Communications Award, ``for fundamental contributions to multi-user communications and applications of coding theory to magnetic data storage devices.'' In May 2000, he received a UCSD Distinguished Teaching Award. In 2004 Professor Wolf received the IEEE Richard W. Hamming Medal for ``fundamental contributions to the theory and practice of information transmission and storage.'' In 2005 he was elected by the American Academy of Arts and Sciences as a Fellow, and in 2010 was elected as a member of the National Academy of Sciences. He was co-recipient with I.M. Jacobs of the 2011 Marconi Society Fellowship and Prize. Prof. Wolf passed away on May 12, 2011.
\end{IEEEbiographynophoto}

\end{document}